\title{Inapproximability of VC Dimension and Littlestone's Dimension}
\author{
Pasin Manurangsi\thanks{Email: \texttt{pasin@berkeley.edu}.} \\
UC Berkeley
\and
Aviad Rubinstein\thanks{Email: \texttt{aviad@berkeley.edu}.} \\
UC Berkeley
}
\date{\today}
\begin{document}

\maketitle

\begin{abstract}
We study the complexity of computing the VC Dimension and Littlestone's Dimension.
Given an explicit description of a finite universe and a concept class (a binary matrix whose $(x,C)$-th entry is $1$ iff element $x$ belongs to concept $C$),
both can be computed exactly in quasi-polynomial time ($n^{O(\log n)}$).
Assuming the randomized Exponential Time Hypothesis (ETH), we prove nearly matching lower bounds on the running time, that hold even for {\em approximation} algorithms.
\end{abstract}

\section{Introduction}

A common and essential assumption in learning theory is that the concepts we want to learn come from a nice, simple concept class, or (in the agnostic case) they can at least be approximated by a concept from a simple class.
When the concept class is sufficiently simple, there is hope for good (i.e. sample-efficient and low-error) learning algorithms. 

There are many different ways to measure the {\em simplicity} of a concept class. The most influential measure of simplicity is the VC Dimension, which captures learning in the PAC model. We also consider Littlestone's Dimension~\citep{Lit88}, which corresponds to minimizing mistakes in online learning (see Section~\ref{sec:prelim} for definitions).
When either dimension is small, there are algorithms that exploit the simplicity of the class, to obtain good learning guarantees.

Two decades ago, it was shown (under appropriate computational complexity assumptions) that neither dimension can be computed in polynomial time \citep{PY96, FL98}; and these impossibility results hold even in the most optimistic setting where the entire universe and concept class are given as explicit input (a binary matrix whose $(x, C)$-th entry is $1$ iff element $x$ belongs to concept $C$). 
The computational intractability of computing the (VC, Littlestone's) dimension of a concept class suggests that even in cases where a simple structure exists, it may be inaccessible to computationally bounded algorithms (see Discussion below).

In this work we extend the results of \citep{PY96, FL98} to show that the VC and Littlestone's Dimensions cannot even be {\em approximately} computed in polynomial time.
We don't quite prove that those problems are \NP-hard:
both dimensions can be computed (exactly) in quasi-polynomial ($n^{O(\log n)}$) time, 
hence it is very unlikely that either problem is \NP-hard. 
Nevertheless, assuming the randomized Exponential Time Hypothesis (ETH)%
\footnote{The randomized ETH (rETH) postulates that there is no $2^{o(n)}$-time Monte Carlo algorithms that solves $3\SAT$ on $n$ variables correctly with probability at least $2/3$ (i.e. $3\SAT \notin \BPTIME(2^{o(n)})$).} \citep{IPZ01, IP01},
we prove essentially tight quasi-polynomial lower bounds on the running time - that hold even against approximation algorithms.

\begin{theorem}[Hardness of Approximating VC Dimension]\label{thm:VC}
Assuming Randomized ETH, approximating VC Dimension to within a $(1/2 + o(1))$-factor requires $n^{\log^{1 - o(1)}n}$ time.
\end{theorem}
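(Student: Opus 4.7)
My plan is a two-step reduction. First, I would reduce 3-SAT to an intermediate constant-gap Label Cover problem of sub-exponential size via birthday repetition. Second, I would gadget-reduce this intermediate problem to Gap-VC Dimension with a $(1/2 + o(1))$-factor loss. Under randomized ETH, 3-SAT on $n$ variables requires $2^{\Omega(n)}$ time, so producing a VC Dimension instance of size $N = 2^{\tilde{O}(\sqrt{n})}$ with a $(1/2+o(1))$-gap yields a lower bound of $2^{\log^{2-o(1)} N} = N^{\log^{1-o(1)} N}$, exactly the target bound.

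For the first step, I would invoke the PCP theorem (followed by parallel repetition) to obtain a constant-gap Label Cover instance on $\mathrm{poly}(n)$ vertices with perfect completeness. I would then apply birthday repetition in the style of Aaronson--Impagliazzo--Moshkovitz: sample random subsets $S_A, S_B$ of roughly $\sqrt{n}$ questions per prover, and form a new game in which each prover answers for its entire sampled set while the verifier checks every induced edge constraint. A birthday-paradox argument shows that with high probability over the samples, soundness is preserved up to $o(1)$, producing a $2^{\tilde{O}(\sqrt{n})}$-size instance with a constant value gap.

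For the second step, I would encode strategy pairs of the birthday-repeated game as concepts, and introduce universe elements of two kinds: \emph{coordinate} elements whose value in a concept encodes a single bit of the strategy (for instance, whether a specific answer is chosen), and \emph{consistency} elements that are $1$ only when the strategy is internally consistent across overlapping edges. In the YES case, a satisfying labeling yields a family of concepts rich enough to shatter a designated set of $d = \Theta(\sqrt{n})$ coordinate elements. In the NO case, any shattered set of size larger than roughly $d/2$ would require $2^{|S|}$ jointly realized patterns, which by a pigeonhole argument over the coordinates would imply a distribution over strategies satisfying a near-$1$ fraction of edges, contradicting soundness.

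The main obstacle is achieving exactly the $(1/2 + o(1))$ factor in the second step. The factor $1/2$ strongly suggests a duplication trick: pair up coordinate elements so that shattering each pair requires satisfying an independent edge constraint, which lets the YES case attain dimension exactly twice that of any NO-case shattered set. Matching the parameters on both sides, while preserving the $\{0,1\}$ concept-class format and not losing more than $o(1)$ in the gap, will be the most delicate part of the argument.
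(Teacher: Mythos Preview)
Your high-level pipeline (PCP, birthday-style partitioning into $\sqrt{n}$-size blocks, then a gadget reduction to VC Dimension) matches the paper's, but the second step is where essentially all of the difficulty lies, and your sketch does not contain the key idea. In fact, the paper explicitly presents a ``candidate reduction'' (Section~\ref{sub:VC-candidate}) that encodes partial assignments as elements and strategies as concepts --- very close in spirit to your coordinate/consistency elements --- and then shows a concrete counterexample: even when $\val(\cL)$ is tiny, a single concept can shatter a set of size $r$. The point is that shattering requires $2^{|S|}$ concepts, but nothing forces those concepts to correspond to \emph{different} tests; an adversary can reuse one ``test'' $2^{|S|}$ times and tailor the assignment to pass just that one. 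Your pigeonhole argument (``$2^{|S|}$ jointly realized patterns $\Rightarrow$ distribution over strategies satisfying near-$1$ fraction of edges'') is exactly the step that breaks.

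The missing idea is the introduction of \emph{test-selection elements}: $r$ extra elements $y_1,\dots,y_r$ whose inclusion pattern in a concept serves as a random seed determining which matching of blocks gets tested. If a shattered set contains $t$ of these, then $2^t$ \emph{distinct} seeds --- hence $2^t$ independent random tests --- must participate, and a union bound over all partial assignments becomes possible (each assignment passes a fixed test with probability $\leq 2^{-r}$, so passing $\Omega(n)$ independent tests has probability $|\Sigma|^{-\Omega(n)}$). This also explains the factor $1/2$: in completeness one shatters $r$ assignment elements \emph{plus} $r$ test-selection elements for dimension $2r$, while soundness caps the dimension at $(1+\delta)r$. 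It is not a duplication trick on coordinate pairs as you conjecture. Finally, note that the paper does \emph{not} first run birthday repetition on Label Cover and then reduce; the block partition is performed inside the VC reduction itself, which is what makes the randomness-per-seed argument go through.
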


\begin{theorem}[Hardness of Approximating Littlestone's Dimension]\label{thm:LS}
There exists an absolute constant $\varepsilon > 0$ such that, assuming Randomized ETH, approximating Littlestone's Dimension to within a $(1 - \varepsilon)$-factor requires $n^{\log^{1 - o(1)}n}$ time.
\end{theorem}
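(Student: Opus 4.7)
The plan is to prove Theorem~\ref{thm:LS} by a gap-preserving reduction from a problem whose ETH-hardness already matches the desired quasi-polynomial bound, analogously to the strategy behind Theorem~\ref{thm:VC} but adapted to the mistake-tree definition of Littlestone's Dimension. A natural source is a gap Label Cover / Max-$k$-Cover instance whose approximation version is known (via PCP composed with ETH) to require $N^{\log^{1-o(1)} N}$ time to distinguish completeness $1$ from soundness bounded away by a constant.

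First I would fix such a CSP instance of size $N$ and a parameter $d = \Theta(\log N)$, and construct a concept class $(U,\mathcal{C})$ of size $\mathrm{poly}(N)$ with the property that in the YES case the Littlestone's Dimension is at least $d$ while in the NO case it is at most $(1-\varepsilon)d$. The universe $U$ would be organized so that each element encodes a query or variable of the CSP, and each concept encodes an assignment, with membership recording the answer of that assignment. For completeness, given a satisfying assignment $\sigma$, I would explicitly construct a depth-$d$ mistake tree by labelling its internal nodes with queries and placing at each leaf the concept corresponding to a restriction of $\sigma$ consistent with the left/right choices along the root-to-leaf path; the recursive tree structure mirrors a sequential exposure of $\sigma$'s values.

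For soundness I would argue that a mistake tree of depth $(1-\varepsilon)d$ produces $2^{(1-\varepsilon)d}$ mutually coherent concepts, one per leaf, which can be aggregated by averaging or conditional-expectation rounding along the tree into a single global assignment satisfying more than the soundness fraction of constraints. This contradicts the soundness guarantee of the source CSP and closes the reduction. Setting the parameters so that $N$ and $d$ match the ETH lower bound then yields the claimed $n^{\log^{1-o(1)} n}$ hardness for approximation factor $1-\varepsilon$.

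The main obstacle I foresee is the rigidity of mistake trees relative to shattering: a depth-$d$ mistake tree demands a coherent family of $2^d$ concepts fitting into a fixed tree structure, whereas a shattered set only requires $2^d$ concepts with independent labelings. Consequently the concept class from the VC reduction need not carry a deep mistake tree, so new gadgets may be required. Designing the construction so that (i) the tree structure exactly reflects an incremental reveal of a CSP assignment in the YES case, and (ii) the global coherence enforced by any deep mistake tree in the NO case lets us extract an assignment beating the soundness threshold, is the delicate technical step. I expect to handle this by a layered construction in which level $i$ of the tree is associated with a particular block of CSP variables, so that each branch in the tree naturally encodes a sequence of consistent answers to be rounded against.
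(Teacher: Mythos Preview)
Your proposal identifies the right high-level strategy (reduce from Label Cover via a PCP, build a concept class whose completeness yields a deep mistake tree and whose soundness bounds its depth), but it is too schematic to be a proof, and more importantly it misses the central obstacle that forces the paper's construction to differ substantially from the VC-Dimension reduction.

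The critical issue is \emph{adaptivity}. In a mistake tree the adversary (Nature) sees the path so far before each new element is placed; if the ``test'' portion of the encoding can appear above the ``assignment'' portion, the adversary can tailor the assignment to pass precisely that test. The paper shows explicitly (Section~4.1) that the natural construction fails for exactly this reason: one can build a depth-$2r$ mistake tree even when $\val(\cL)$ is tiny, by placing all test-selection elements in the top $r$ levels and then, below each node $s$, inserting an assignment chosen \emph{after} seeing $H_s$ that passes that particular test. Your proposed soundness argument---extracting a single global assignment by ``averaging or conditional-expectation rounding'' over the $2^{(1-\varepsilon)d}$ leaf concepts---does not survive this, because different branches of a mistake tree may use entirely unrelated assignments; there is no coherence across siblings beyond their shared prefix, so there is nothing to average.

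Your ``layered'' fix (level $i$ corresponds to block $i$) does not address the problem: nothing in the definition of a mistake tree forces the elements to appear in your intended order, so the adversary is free to rearrange levels. The paper's actual fix is a specific gadget: test-selection elements $y_{I,i,j}$ are indexed by a subset $I \subseteq [r]\times[k]$ and belong only to concepts with that same index $I$, so each $y_{I,i,j}$ lies in very few concepts and hence any subtree rooted at a test-selection element is provably shallow (Lemma~\ref{lem:ls-assign-not-high}). This is what forces assignment elements to occupy the top of any deep mistake tree. Further machinery---the $k$ copies of each element to push the intended depth well above the trivial bound, and the threshold projection $\tau$ to make the union bound over seeds $H$ go through---is then required to complete the soundness analysis. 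None of these ingredients appears in your sketch, and the soundness direction does not go through without them.
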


\subsection{Discussion}
As we mentioned before, the computational intractability of computing the (VC, Littlestone's) dimension of a concept class suggests that even in cases where a simple structure exists, it may be inaccessible to computationally bounded algorithms.
We note however that it is not at all clear that any particular algorithmic applications are immediately intractable as a consequence of our results.

Consider for example the adversarial online learning zero-sum game corresponding to Littlestone's Dimension:
At each iteration, Nature presents the learner with an element from the universe;
the learner attempts to classify the element, and loses a point for every wrong classification;
at the end of the iteration, the correct (binary) classification is revealed.
The Littlestone's Dimension is equal to the worst case loss of the Learner before learning the exact concept.
(see Section~\ref{sec:prelim} for a more detailed definition.)

What can we learn from the fact that the Littlestone's Dimension is hard to compute?
The first observation is that there is no efficient learner that can commit to a concrete mistake bound.
But this does not rule out a computationally-efficient learner that plays optimal strategy and makes at most as many mistakes as the unbounded learner.
We can, however, conclude that Nature's task is computationally intractable!
Otherwise, we could efficiently construct an entire worst-case mistake tree (for a concept class ${\cal C}$, any mistake tree has at most $|{\cal C}|$ leaves, requiring $|{\cal C}|-1$ oracle calls to Nature).

On a philosophical level, we think it is interesting to understand the implications of an intractable, adversarial Nature. Perhaps this is another evidence that the mistake bound model is too pessimistic?

Also, the only algorithm we know for computing the optimal learner's decision requires computing the Littlestone's Dimension. We think that it is an interesting open question whether an approximately optimal computationally-efficient learner exists.

In addition, let us note that in the other direction, computing Littlestone's Dimension exactly implies an exactly optimal learner. However, since the learner has to compute Littlestone's Dimension many times, we have no evidence that an approximation algorithm for Littlestone's Dimension would imply any guarantee for the learner.

Finally, we remark that for either problem (VC or Littlestone's Dimension), we are not aware of any non-trivial approximation algorithms.

\subsection{Techniques}
The starting point of our reduction is the framework of ``birthday repetition''~\citep{AIM14}.
This framework has seen many variations in the last few years, but the high level approach is as follows: begin with a hard-to-approximate instance of a $2$CSP (such as $3$-\textsc{Color}), and partition the vertices into $\sqrt{n}$-tuples. 
On one hand, by the birthday paradox, even if the original graph is sparse, we expect each pair of random $\sqrt{n}$-tuples to share an edge; this is crucial for showing hardness of approximation in many applications.
On the other hand our reduction size is now approximately $N \approx 2^{\sqrt{n}}$ (there are $3^{\sqrt{n}}$ ways to color each $\sqrt{n}$-tuple), whereas by ETH solving $3$-\textsc{Color} requires approximately $T(n) \approx 2^n$ time, so solving the larger problem also takes at least $T(n) \approx N^{\log N}$ time.

\paragraph{VC Dimension}
The first challenge we have to overcome in order to adapt this framework to hardness of approximation of VC Dimension is that the number of concepts involved in shattering a subset $S$ is $2^{|S|}$. 
Therefore any inapproximability factor we prove on the size of the shattered set of elements, ``goes in the exponent'' of the size of the shattering set of concepts. Even a small constant factor gap in the VC Dimension requires proving a polynomial factor gap in the number of shattering concepts (obtaining polynomial gaps via ``birthday repetition'' for simpler problems is an interesting open problem~\citep{MR16, Man17}). Fortunately, having a large number of concepts is also an advantage: we use each concept to test a different set of $3$-\textsc{Color} constraints chosen independently at random; if the original instance is far from satisfied, the probability of passing all $2^{\Theta(|S|)}$ tests should now be doubly-exponentially small ($2^{-2^{\Theta(|S|)}}$)!
More concretely, we think of half of the elements in the shattered set as encoding an assignment, and the other half as encoding which tests to run on the assignments.

\paragraph{Littlestone's Dimension}
Our starting point is the reduction for VC Dimension outlined in the previous paragraph.
While we haven't yet formally introduced Littlestone's Dimension, recall that it corresponds to an online learning model.
If the test-selection elements arrive before the assignment-encoding elements, the adversary can adaptively tailor his assignment to pass the specific test selected in the previous steps.
To overcome this obstacle, we introduce a special gadget that forces the assignment-encoding elements to arrive first; this makes the reduction to Littlestone's Dimension somewhat more involved. 
Note that there is a reduction by \citep{FL98} from VC Dimension to Littlestone's Dimension. 
Unfortunately, their reduction is not (approximately) gap-preserving, so we cannot use it directly to obtain Theorem~\ref{thm:LS} from Theorem~\ref{thm:VC}.

\subsection{Related Work}
The study of the computational complexity of the VC Dimension was initiated by Linial, Mansour, and Rivest \citep{LMR91}, who observed that it can be computed in quasi-polynomial time.
\citep{PY96} proved that it is complete for the class \LOGNP~which they define in the same paper.
\citep{FL98} reduced the problem of computing the VC dimension to that of computing Littlestone's Dimension, hence the latter is also \LOGNP-hard.
(It follows as a corollary of our Theorem~\ref{thm:VC} that, assuming ETH, solving any \LOGNP-hard problem requires quasi-polynomial time.)

Both problems were also studied in an implicit model, 
where the concept class is given in the form of a Boolean circuit that takes as input an element $x$ and a concept $c$ and returns $1$ iff $x \in c$.
Observe that in this model even computing whether either dimension is $0$ or not is already \NP-hard.
Schafer proved that the VC Dimension is $\Sigma_3^{\P}$-complete~\citep{Sch99}, while the Littlestone's Dimension is \PSPACE-complete~\citep{Sch00}.
\citep{MU02} proved that VC Dimension is $\Sigma_3^{\P}$-hard to approximate to within a factor of almost $2$; can be approximated to within a factor slightly better than $2$ in \AM; and is \AM-hard to approximate to within $n^{1-\varepsilon}$.

Another line of related work in the implicit model proves computational intractability of PAC learning (which corresponds to the VC Dimension). 
Such intractability has been proved either from cryptographic assumptions, e.g. \citep{KV94, Kharitonov93, Kharitonov95, FGKP06, KKMS08, KS09, Klivans16} or from average case assumptions, e.g. \citep{DS16, Daniely16}.
\citep{Blum94} showed a ``computational'' separation between PAC learning and online mistake bound (which correspond to the VC Dimension and Littlestone's Dimension, respectively):
 if one-way function exist, then there is a concept class that can be learned by a computationally-bounded learner in the PAC model, but not in the mistake-bound model.

Recently,~\citep{BFS16} introduced a generalization of VC Dimension which they call Partial VC Dimension, and proved that it is \NP-hard to approximate (even when given an explicit description of the universe and concept class).

Our work is also related to many other quasi-polynomial lower bounds from recent years, which were also inspired by ``birthday repetition''; these include problems like Densest $k$-Subgraph~\citep{BKRW17, Man17}, Nash Equilibrium and related problems~\citep{BKW15, R15, BPR16, R16, BCKS16, DFS16} and Community Detection~\citep{R16-communities}.
It is interesting to note that so far ``birthday repetition'' has found very different applications, but they all share essentially the same quasi-polynomial {\em algorithm}: The bottleneck in those problem is a bilinear optimization problem $\max_{u,v} u^{\top}Av$, which we want to approximate to within a (small) constant additive factor. It suffices to find an $O(\log n)$-sparse sample $\hat v$ of the optimal $v^*$; the algorithm enumerates over all sparse $\hat v$'s \citep{LMM03, AGSS12, Barman15, CCDEHT15}.
In contrast, the problems we consider in this paper have completely different quasi-polynomial time algorithms: For VC Dimension, it suffices to simply enumerate over all $\log{|\cC|}$-tuples of elements (where $\cC$ denotes the concept class and $\log{|\cC|}$ is the trivial upper bound on the VC dimension)~\citep{LMR91}. Littlestone's Dimension can be computed in quasi-polynomial time via a recursive ``divide and conquer'' algorithm (See Appendix~\ref{app:littlestone-alg}).

\section{Preliminaries} \label{sec:prelim}

For a universe (or ground set) $\cU$, a concept $C$ is simply a subset of $\cU$ and a concept class $\cC$ is a collection of concepts. For convenience, we sometimes relax the definition and allow the concepts to not be subsets of $\cU$; all definitions here extend naturally to this case.

The VC and Littlestone's Dimensions can be defined as follows.

\begin{definition}[VC Dimension~\cite{VC71}]
A subset $S \subseteq \cU$ is said to be \emph{shattered} by a concept class $\cC$ if, for every $T \subseteq S$, there exists a concept $C \in \cC$ such that $T = S \cap C$.

The VC Dimension $\VC(\cC, \cU)$ of a concept class $\cC$ with respect to the universe $\cU$ is the largest $d$ such that there exists a subset $S \subseteq \cU$ of size $d$ that is shattered by $\cC$.
\end{definition}

\begin{definition}[Mistake Tree and Littlestone's Dimension~\cite{Lit88}]
A depth-$d$ instance-labeled tree of $\cU$ is a full binary tree of depth $d$ such that every internal node of the tree is assigned an element of $\cU$. For convenience, we will identify each node in the tree canonically by a binary string $s$ of length at most $d$.

A depth-$d$ mistake tree (aka shattered tree~\cite{BDPS09}) for a universe $\cU$ and a concept class $\cC$ is a depth-$d$ instance-labeled tree of $\cU$ such that, if we let $v_s \in \cU$ denote the element assigned to the vertex $s$ for every $s \in \{0, 1\}^{< d}$, then, for every leaf $\ell \in \{0, 1\}^d$, there exists a concept $C \in \cC$ that agrees with the path from root to it, i.e., that, for every $i < d$, $v_{\ell_{\leqs i}} \in C$ iff $\ell_{i + 1} = 1$ where $\ell_{\leqs i}$ denote the prefix of $\ell$ of length $i$.

The Littlestone's Dimension $\LS(\cC, \cU)$ of a concept class $\cC$ with respect to the universe $\cU$ is defined as the maximum $d$ such that there exists a depth-$d$ mistake tree for $\cU, \cC$.
\end{definition}

An equivalent formulation of Littlestone's Dimension is through mistakes made in online learning, as stated below. This interpretation will be useful in our proof.

\begin{definition}[Mistake Bound]
An online algorithm $\cA$ is an algorithm that, at time step $i$, is given an element $x_i \in \cU$ and the algorithm outputs a prediction $p_i \in \{0, 1\}$ whether $x$ is in the class. After the prediction, the algorithm is told the correct answer $h_i \in \{0, 1\}$. For a sequence $(x_1, h_1), \dots, (x_n, h_n)$, \emph{prediction mistake} of $\cA$ is defined as the number of incorect predictions, i.e., $\sum_{i \in n} \mathds{1}[p_i \ne h_i]$. The \emph{mistake bound} of $\cA$ for a concept class $\cC$ is defined as the maximum prediction mistake of $\cA$ over all the sequences $(x_1, h_1), \dots, (x_n, h_n)$ which corresponds to a concept $C \in \cC$ (i.e. $h_i = \mathds{1}[x_i \in C]$ for all $i \in [n]$).
\end{definition}

\begin{theorem}[\cite{Lit88}] \label{lem:learn-ls}
For any universe $\cU$ and any concept class $\cC$, $\LS(\cC, \cU)$ is equal to the minimum mistake bound of $\cC, \cU$ over all online algorithms. 
\end{theorem}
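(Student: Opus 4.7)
The plan is to prove both directions of the equality separately, essentially reproducing Littlestone's original argument.

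For the lower bound direction, I would show that every online algorithm $\cA$ has mistake bound at least $\LS(\cC, \cU)$. Let $d = \LS(\cC, \cU)$ and fix a depth-$d$ mistake tree with vertex labels $\{v_s\}_{s \in \{0,1\}^{<d}}$. I would construct an adversarial sequence by walking down the tree one level at a time: at step $i$, having already produced a prefix $\ell_{\leqs i-1} \in \{0,1\}^{i-1}$, present $x_i = v_{\ell_{\leqs i-1}}$ to $\cA$, observe the prediction $p_i$, reveal the correct label $h_i = 1 - p_i$, and set $\ell_i = h_i$. This forces a mistake at every step, so $\cA$ makes $d$ mistakes. It remains to note that, by the defining property of the mistake tree, the leaf $\ell \in \{0,1\}^d$ reached by this walk is consistent with some $C \in \cC$, so the sequence is realizable, and the mistake bound of $\cA$ is at least $d$.

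For the upper bound direction, I would exhibit an online algorithm whose mistake bound is at most $\LS(\cC, \cU)$; this is the Standard Optimal Algorithm. Maintain a ``version space'' $\cC^{(i)} \subseteq \cC$, initialized to $\cC^{(0)} = \cC$, of concepts consistent with the labels seen so far. Upon receiving $x_i$, partition $\cC^{(i-1)}$ into $\cC^{(i-1)}_b = \{C \in \cC^{(i-1)} : \mathds{1}[x_i \in C] = b\}$ for $b \in \{0,1\}$, and predict the bit $b^*$ maximizing $\LS(\cC^{(i-1)}_{b^*}, \cU)$, breaking ties arbitrarily. After the true label $h_i$ is revealed, update $\cC^{(i)} = \cC^{(i-1)}_{h_i}$. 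The key invariant is that whenever a mistake occurs at step $i$, we have $\LS(\cC^{(i)}, \cU) \leq \LS(\cC^{(i-1)}, \cU) - 1$. Indeed, a mistake means $h_i \ne b^*$, so by the choice of $b^*$ we have $\LS(\cC^{(i)}, \cU) = \LS(\cC^{(i-1)}_{h_i}, \cU) \leq \LS(\cC^{(i-1)}_{b^*}, \cU)$; if equality $\LS(\cC^{(i-1)}, \cU) = \LS(\cC^{(i)}, \cU)$ held, then we could combine mistake trees of that common depth for $\cC^{(i-1)}_0$ and $\cC^{(i-1)}_1$ under a new root labeled $x_i$, yielding a mistake tree of depth $\LS(\cC^{(i-1)}, \cU) + 1$ for $\cC^{(i-1)}$, a contradiction. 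Since $\LS$ is nonnegative and starts at $\LS(\cC, \cU)$, the total number of mistakes is at most $\LS(\cC, \cU)$.

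The two directions together establish the theorem. The main subtlety, and the only step that is more than bookkeeping, is verifying the invariant in the upper bound: one has to check that a mistake strictly decreases the Littlestone's dimension of the version space, which is precisely where the ``combine two subtrees under a common root'' construction (the inverse of the tree-walking used in the lower bound) is needed.
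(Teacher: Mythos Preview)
Your proof is correct and follows the standard two-part argument from Littlestone's original paper: the adversary walks down a mistake tree to force $d$ mistakes against any learner, and the Standard Optimal Algorithm witnesses the matching upper bound via the ``combine two subtrees under a new root'' contradiction. One small point worth making explicit in the upper-bound direction is that the version space $\cC^{(i)}$ always contains the target concept (since the sequence is realizable), so $\cC^{(i-1)}_{h_i}$ is never empty and its Littlestone's dimension is well-defined and nonnegative; this is what makes the ``$\LS$ starts at $\LS(\cC,\cU)$ and stays nonnegative'' accounting go through.

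As for comparison with the paper: the paper does not actually prove this theorem. It is stated in the preliminaries with a citation to~\cite{Lit88} and used as a black box (specifically, only the upper-bound direction is invoked, in the proof of Lemma~\ref{lem:non-rep-shallow-y}, to convert a prediction algorithm with few mistakes into a bound on $\LS$). So there is no ``paper's own proof'' to compare against; what you have written is precisely the classical argument the citation points to.
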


The following facts are well-know and follow easily from the above definitions.

\begin{fact} \label{fact:dim-comp}
For any universe $\cU$ and concept class $\cC$, we have $$\VC(\cC, \cU) \leqs \LS(\cC, \cU) \leqs \log |\cC|.$$
\end{fact}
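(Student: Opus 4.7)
Both inequalities are textbook and fall out directly from the definitions; I would present each in a short paragraph.

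For the left inequality $\VC(\cC,\cU) \leq \LS(\cC,\cU)$, the plan is to turn a shattered set into a mistake tree by using the same element at every node of a fixed depth. Concretely, let $d = \VC(\cC,\cU)$ and let $S = \{x_1,\dots,x_d\} \subseteq \cU$ be shattered by $\cC$. Build the depth-$d$ instance-labeled tree that assigns $v_s := x_{|s|+1}$ to every node $s \in \{0,1\}^{<d}$. For any leaf $\ell \in \{0,1\}^d$, set $T := \{x_{i+1} : \ell_{i+1} = 1\} \subseteq S$; shatteredness produces some $C \in \cC$ with $C \cap S = T$, which by construction agrees with the root-to-leaf path. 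So this is a valid depth-$d$ mistake tree, and $\LS(\cC,\cU) \geq d$.

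For the right inequality $\LS(\cC,\cU) \leq \log|\cC|$, the plan is to show that distinct leaves of a mistake tree must be witnessed by distinct concepts, so $2^{\LS(\cC,\cU)}$ is a lower bound on $|\cC|$. Given a depth-$d$ mistake tree, for each leaf $\ell$ pick a witnessing concept $C_\ell \in \cC$ that agrees with the root-to-leaf path. For any two distinct leaves $\ell \neq \ell'$, let $i$ be the smallest index with $\ell_{i+1} \neq \ell'_{i+1}$; the prefixes of length $i$ coincide, so both $C_\ell$ and $C_{\ell'}$ are constrained at the same element $v_{\ell_{\leqs i}} = v_{\ell'_{\leqs i}}$, and the agreement condition forces one to contain this element and the other not to. Hence $C_\ell \neq C_{\ell'}$, giving $|\cC| \geq 2^d$ and therefore $d \leq \log|\cC|$.

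There is no real obstacle here; the only thing to be careful about is keeping the indexing conventions of the excerpt (elements at nodes $s \in \{0,1\}^{<d}$, leaves $\ell \in \{0,1\}^d$, and the rule that $v_{\ell_{\leqs i}} \in C$ iff $\ell_{i+1}=1$) consistent throughout, so that the constructed tree in the first part and the distinguishing element in the second part are unambiguously specified.
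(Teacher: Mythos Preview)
Your proposal is correct and entirely standard. The paper does not actually supply a proof of this fact; it simply records it as well-known and following easily from the definitions, which is precisely what your two paragraphs accomplish.
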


\begin{fact} \label{fact:ls-union}
For any two universes $\cU_1, \cU_2$ and any concept class $\cC$, $$\LS(\cC, \cU_1 \cup \cU_2) \leqs \LS(\cC, \cU_1) + \LS(\cC, \cU_2).$$
\end{fact}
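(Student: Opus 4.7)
The plan is to exploit the online-learning characterization in Theorem~\ref{lem:learn-ls}, which makes subadditivity nearly transparent. I will convert optimal mistake bounds for $(\cC, \cU_1)$ and $(\cC, \cU_2)$ into a mistake bound for $(\cC, \cU_1 \cup \cU_2)$ by running two sub-algorithms in parallel and dispatching each incoming element to whichever one is responsible for its side of the universe.

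Concretely, let $\cA_1$ and $\cA_2$ be online algorithms that achieve the optimal mistake bounds $d_1 = \LS(\cC, \cU_1)$ and $d_2 = \LS(\cC, \cU_2)$ respectively (these exist by the ``$\LS$ equals minimum mistake bound'' direction of Theorem~\ref{lem:learn-ls}). Define an online algorithm $\cA$ on $\cU_1 \cup \cU_2$ as follows: when element $x_i$ arrives, if $x_i \in \cU_1$ forward it to $\cA_1$; otherwise (so $x_i \in \cU_2 \setminus \cU_1$) forward it to $\cA_2$. Use the chosen sub-algorithm's prediction as $\cA$'s prediction $p_i$, and after the true label $h_i$ is revealed, pass $h_i$ only to the sub-algorithm that was queried so that the other retains its internal state.

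Now fix any sequence $(x_1, h_1), \dots, (x_n, h_n)$ that is consistent with some concept $C \in \cC$. The sub-sequence routed to $\cA_1$ lies in $\cU_1$ and is still consistent with the same $C$ (its labels satisfy $h_i = \mathds{1}[x_i \in C]$), and analogously for the sub-sequence routed to $\cA_2$ over $\cU_2$. Hence $\cA_1$ makes at most $d_1$ mistakes on its sub-sequence and $\cA_2$ at most $d_2$ on its own; since each mistake of $\cA$ is charged to exactly one of the two, $\cA$'s total number of mistakes is at most $d_1 + d_2$. Applying the other direction of Theorem~\ref{lem:learn-ls} then gives $\LS(\cC, \cU_1 \cup \cU_2) \leqs d_1 + d_2$, as desired.

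I do not anticipate a real obstacle here. The only point that needs a line of justification is that splitting a single $C$-consistent stream into the $\cU_1$-part and the $(\cU_2 \setminus \cU_1)$-part preserves $C$-consistency on each side, which is immediate from the labeling rule $h_i = \mathds{1}[x_i \in C]$. If one prefers a direct combinatorial argument via mistake trees instead of the online formulation, essentially the same bookkeeping works: any root-to-leaf path in an optimal tree for $\cU_1 \cup \cU_2$ has its $\cU_1$-labeled edges bounded by $d_1$ and its $\cU_2$-labeled edges bounded by $d_2$, but the algorithmic phrasing above is cleaner and more reusable.
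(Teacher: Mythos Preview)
Your proof is correct. The paper does not actually give a proof of Fact~\ref{fact:ls-union}; it is stated as a well-known fact that ``follows easily from the above definitions'' and left without argument. Your online-learning approach via Theorem~\ref{lem:learn-ls} is a clean and standard way to see it, and the dispatching argument is valid as written.
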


\subsection{Label Cover and PCP}

As is standard in hardness of approximation, the starting point for our reductions will be the following problem called Label Cover. 

\begin{definition}[Label Cover]
A Label Cover instance $\cL = (A, B, E, \Sigma, \{\pi_e\}_{e \in E})$ consists of a bipartite graph $(A, B, E)$, an alphabet $\Sigma$, and, for every edge $(a, b) \in E$, a projection constraint $\pi_{(a, b)}: \Sigma \to \Sigma$.

An assignment (aka labeling) for $\cL$ is a function $\phi: A \cup B \rightarrow \Sigma$. The value of $\phi$, $\val_\cL(\phi)$ is defined as the fraction of edges $(a, b) \in E$ such that $\pi_{(a, b)}(\phi(a)) = \phi(b)$; these edges are called \emph{satisfied} edges. The value of the instance $\cL$, $\val(\cL)$, is defined as the maximum value among all assignments $\phi: A \cup B \rightarrow \Sigma$.
\end{definition}

Throughout the paper, we often encounter an assignment that only labels a subset of $A \cup B$ but leaves the rest unlabeled. We refer to such assignment as a \emph{partial assignment} to an instance; more specifically, for any $V \subseteq A \cup B$, a $V$-partial assignment (or partial assignment on $V$) is a function $\phi: V \to \Sigma$. For notational convenience, we sometimes write $\Sigma^V$ to denote the set of all functions from $V$ to $\Sigma$.


We will use the following version of the PCP Theorem by Moshkovitz and Raz, which reduces 3SAT to the gap version of Label Cover while preserves the size to be almost linear.

\begin{theorem}[Moshkovitz-Raz PCP~\cite{MR10}] \label{thm:mr-pcp}
For every $n$ and every $\nu = \nu(n) > 0$, solving 3SAT on $n$ variables can be reduced to distinguishing between the case that a bi-regular instance of Label Cover with $|A|, |B|, |E| = n^{1 + o(1)} \poly(1/\nu)$ and $|\Sigma| = 2^{\poly(1/\nu)}$ is satisfiable and the case that its value is at most $\nu$.
\end{theorem}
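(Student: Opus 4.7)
This is cited verbatim from Moshkovitz and Raz [MR10], so any ``proof proposal'' is really a sketch of the ingredients of their construction; for this paper I would simply invoke it as a black box. The plan would start from the standard PCP theorem to reduce 3SAT on $n$ variables to a constant-gap Label Cover instance of size $n \cdot \mathrm{polylog}(n)$. The naive route to driving the soundness down to $\nu$ is Raz's parallel repetition, but repeating $O(\log(1/\nu))$ times blows the instance up by a factor of $n^{\Theta(\log(1/\nu))}$, losing the near-linear size; the whole point of [MR10] is to avoid this.

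The substantive part of the plan is an algebraic low-error PCP built directly, without parallel repetition. The ingredients are (i) encoding assignments by low-degree polynomials over a finite field $\mathbb{F}$ of size $\mathrm{poly}(1/\nu)$ (Reed--Muller codes), so that inconsistent provers are forced to disagree on a constant fraction of points; (ii) a low-degree test and a plane-versus-point (or manifold-versus-point) consistency test with soundness $\nu$, which yield a two-query projection-game PCP whose alphabet is $\mathbb{F}^{\mathrm{poly}(1/\nu)}$, i.e. of size $2^{\mathrm{poly}(1/\nu)}$; and (iii) a composition with an inner verifier of near-linear size so that the overall reduction stays at size $n^{1+o(1)} \mathrm{poly}(1/\nu)$ rather than polynomial in $n$. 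Bi-regularity of the bipartite graph is not automatic, but it can be enforced as a post-processing step by standard degree-balancing tricks (replicating vertices and using expander replacement on each side), which multiply the size by a constant and preserve projection structure and soundness.

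The main obstacle to writing this from scratch is the three-way tension between alphabet size, soundness, and instance size: any one or two of the three parameters is easy to control in isolation, but the Moshkovitz--Raz regime (soundness $\nu$, alphabet $2^{\mathrm{poly}(1/\nu)}$, size $n^{1+o(1)}\mathrm{poly}(1/\nu)$) requires both an algebraic low-error outer PCP and a very size-efficient inner PCP, together with a composition theorem (in the spirit of Dinur--Harsha) that does not inflate the size. Since this tradeoff is exactly what [MR10] is engineered to achieve, my proposal is to cite it as stated and use it as the starting point of the birthday-repetition reductions for VC and Littlestone's Dimension that follow.
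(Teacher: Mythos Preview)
Your proposal is correct and matches the paper's treatment: the theorem is quoted from \cite{MR10} and used entirely as a black box, with no proof given. Your sketch of the ingredients is accurate but goes beyond what the paper does, since the authors simply cite the result without any discussion of its proof.
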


\subsection{Useful Lemmata}

We end this section by listing a couple of lemmata that will be useful in our proofs.
\begin{lemma}[Chernoff Bound]
Let $X_1, \dots, X_n$ be i.i.d. random variables taking value from $\{0, 1\}$ and let $p$ be the probability that $X_i = 1$, then, for any $\delta > 0$, we have
\begin{align*}
\Pr\left[\sum_{i=1}^n X_i \geqs (1 + \delta) np\right] \leqs
\begin{cases}
2^{-\delta^2np/3} & \text{if } \delta < 1, \\
2^{-\delta np/3} & \text{otherwise.}
\end{cases}
\end{align*}
\end{lemma}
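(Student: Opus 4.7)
The plan is to prove the Chernoff bound via the standard moment generating function (Bernstein) approach, which converts a tail estimate for the sum into a univariate optimization problem.

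First, I would fix $t > 0$ (to be optimized later) and apply Markov's inequality to the nonnegative random variable $e^{t \sum_i X_i}$, giving
\[
\Pr\!\left[\sum_{i=1}^n X_i \geqs (1+\delta)np\right] \;\leqs\; \frac{\mathbb{E}[e^{t\sum_i X_i}]}{e^{t(1+\delta)np}} \;=\; \frac{\bigl(1 - p + p e^t\bigr)^n}{e^{t(1+\delta)np}},
\]
where the last equality uses independence and the fact that each $X_i$ is Bernoulli$(p)$. Using the inequality $1 + x \leqs e^x$ applied to $x = p(e^t - 1)$, the numerator is bounded by $\exp\!\bigl(np(e^t - 1)\bigr)$, so the right-hand side is at most $\exp\!\bigl(np(e^t - 1) - t(1+\delta)np\bigr)$.

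Next, I would choose $t$ to minimize the exponent. Differentiating shows that the optimum is $t = \ln(1+\delta)$, which yields the classical ``KL'' form
\[
\Pr\!\left[\sum_{i=1}^n X_i \geqs (1+\delta)np\right] \;\leqs\; \exp\!\Bigl(-np\bigl((1+\delta)\ln(1+\delta) - \delta\bigr)\Bigr).
\]
From here the two regimes in the statement follow from elementary one-variable estimates of $f(\delta) := (1+\delta)\ln(1+\delta) - \delta$. Specifically, I would verify $f(\delta) \geqs \delta^2/3$ for $0 < \delta < 1$ by comparing Taylor expansions (equivalently, by checking that $g(\delta) := f(\delta) - \delta^2/3$ satisfies $g(0) = g'(0) = 0$ and $g''(\delta) \geqs 0$ on $[0,1]$), and $f(\delta) \geqs \delta/3$ for $\delta \geqs 1$ by noting that $f(1) = 2\ln 2 - 1 > 1/3$ and that $f(\delta) - \delta/3$ is nondecreasing (since $f'(\delta) = \ln(1+\delta) \geqs \ln 2 > 1/3$ for $\delta \geqs 1$). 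Finally, since $e > 2$, any bound of the form $\exp(-x)$ is automatically at most $2^{-x}$ for $x \geqs 0$, which converts the natural-base exponentials into the base-$2$ form appearing in the lemma.

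The only mildly delicate step is the calculus in the last paragraph: showing the two quadratic/linear lower bounds on $f(\delta)$ over the correct ranges. Everything else is a routine application of the MGF method and is standard in any probability textbook, so I would keep those verifications short.
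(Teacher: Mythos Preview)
The paper does not prove this lemma at all; it is stated in the preliminaries as a standard tool without proof. Your MGF/Bernstein approach is exactly the textbook derivation and is the right thing to do here.

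One small slip to fix: your convexity claim for the $0<\delta<1$ case is not quite right. With $g(\delta)=f(\delta)-\delta^2/3$ you have $g''(\delta)=\tfrac{1}{1+\delta}-\tfrac{2}{3}$, which is negative for $\delta>1/2$, so $g''\geqs 0$ does not hold on all of $[0,1]$. The conclusion $g\geqs 0$ is still true, but you need a slightly finer argument: $g'(\delta)=\ln(1+\delta)-\tfrac{2\delta}{3}$ is unimodal on $[0,1]$ (increasing then decreasing), with $g'(0)=0$ and $g'(1)=\ln 2-\tfrac{2}{3}>0$, hence $g'\geqs 0$ on $[0,1]$ and $g$ is nondecreasing from $g(0)=0$. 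With that correction, your proof goes through; the $\delta\geqs 1$ case and the final $e^{-x}\leqs 2^{-x}$ step are fine as written.
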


\begin{lemma}[Partitioning Lemma~{\cite[Lemma~2.5]{R16-communities}}] \label{lem:partition}
For any bi-regular bipartite graph $G = (A, B, E)$, let $n = |A| + |B|$ and $r = \sqrt{n}/\log n$. When $n$ is sufficiently large, there exists a partition of $A \cup B$ into $U_1, \dots, U_r$ such that
\begin{align*}
\forall i \in [r], \frac{n}{2r} \leqs |U_i| \leqs \frac{2n}{r}
\end{align*}
and
\begin{align*}
\forall i, j \in [r], \frac{|E|}{2r^2} \leqs |(U_i \times U_j) \cap E|, |(U_j \times U_i) \cap E| \leqs \frac{2|E|}{r^2}.
\end{align*}
Moreover, such partition can be found in randomized linear time (alternatively, deterministic $n^{O(\log n)}$ time).
\end{lemma}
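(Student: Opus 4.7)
The plan is to construct the partition by assigning each vertex in $A \cup B$ independently to one of the $r$ parts uniformly at random, then to show that with positive probability all the required inequalities hold simultaneously; this yields both the randomized linear-time algorithm and, via derandomization, the deterministic $n^{O(\log n)}$-time algorithm.

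For the vertex sizes, $|U_i|$ is a sum of $n$ i.i.d.~$\text{Bernoulli}(1/r)$ variables with mean $n/r = \sqrt{n}\log n$, and the Chernoff bound above (with $\delta = 1$) gives $\Pr[|U_i| \notin [n/(2r), 2n/r]] \leqs 2 \cdot 2^{-\Omega(\sqrt{n}\log n)}$, which easily survives a union bound over $i \in [r]$.

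The main technical work is the edge-count concentration. Fix an ordered pair $(i, j) \in [r]^2$ and let $N_{ij} = |(U_i \times U_j) \cap E|$; then $\mathbb{E}[N_{ij}] = |E|/r^2 \geqs (\log^2 n)/2$, using $|E| \geqs n/2$ from bi-regularity (assuming the graph is nontrivial). The obstacle is that the edge indicators $\mathds{1}[a \in U_i]\mathds{1}[b \in U_j]$ are dependent through shared endpoints, so a direct Chernoff bound does not apply. To handle this, I would assume WLOG that $|A| \leqs |B|$ and first expose the partition of $A$. Conditional on $A_i := U_i \cap A$, one has $N_{ij} = \sum_{b \in B} \mathds{1}[b \in U_j] \cdot |N(b) \cap A_i|$, which is a sum of independent weighted indicators indexed by $b \in B$. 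Bernstein's inequality, using that the conditional variance is at most $d_B \cdot |E(A_i, B)|/r$ and that $|E(A_i, B)| \approx |E|/r$ (by an outer Chernoff bound on $A_i$, handled analogously), yields $\Pr[N_{ij} \notin [\mathbb{E}[N_{ij}]/2,\ 2\mathbb{E}[N_{ij}]]] \leqs 2^{-\Omega(\log^2 n)}$; a union bound over the $r^2 \leqs n/\log^2 n$ ordered pairs still leaves the total failure probability $o(1)$.

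Combining the two union bounds, a uniform random partition succeeds with probability $1 - o(1)$, which proves existence and gives the randomized linear-time algorithm. For the deterministic construction, I would derandomize using the fact that the Chernoff/Bernstein bounds above require only $O(\log n)$-wise independence: enumerating over partitions drawn from an $O(\log n)$-wise independent family (support size $n^{O(\log n)}$) and checking each candidate in polynomial time gives the desired running time. The main obstacle I expect is the edge-count concentration; the conditioning trick (exposing the smaller side first) combined with the choice $r = \sqrt{n}/\log n$ making $\mathbb{E}[N_{ij}] = \Omega(\log^2 n)$ is what delivers a tail bound strong enough to survive the union bound over all $r^2$ pairs.
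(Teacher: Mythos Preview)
This lemma is not proved in the paper; it is quoted from \cite[Lemma~2.5]{R16-communities} and used as a black box, so there is no in-paper argument to compare against.

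Your random-partition argument is the standard one and is essentially correct. Two remarks. First, the Bernstein step yields an exponent of order $|E|/(d_B r^2) = |B|/r^2$, and the outer Chernoff on $|A_i|$ needs $|A|/r$ to grow; thus your proof implicitly requires both sides to have size $\omega(\sqrt{n})$. The lemma as literally stated (``for any bi-regular bipartite graph'') cannot hold without some such balance assumption---if $|A|=1$ the edge lower bound fails for every part $U_i$ not containing the single $A$-vertex---so this is a deficiency of the statement rather than of your proof, and it is harmless in the paper's application (Moshkovitz--Raz Label Cover), where the two sides have comparable size. Second, the derandomization via $O(\log n)$-wise independence is the right idea, but for the weighted-Bernoulli Bernstein tail one must go through $k$-th moment bounds rather than the exponential-moment proof, and your two-stage exposure (condition on the $A$-partition, then analyze $B$) interacts with limited independence in a way that needs a little care to make rigorous.
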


\section{Inapproximability of VC Dimension}

In this section, we present our reduction from Label Cover to VC Dimension, stated more formally below. We note that this reduction, together with Moshkovitz-Raz PCP (Theorem~\ref{thm:mr-pcp}), with parameter $\delta =  1/\log n$ gives a reduction from 3SAT on $n$ variables to VC Dimension of size $2^{n^{1/2 + o(1)}}$ with gap $1/2 + o(1)$, which immediately implies Theorem~\ref{thm:VC}.

\begin{theorem} \label{thm:vc-red}
For every $\delta > 0$, there exists a randomized reduction from a bi-regular Label Cover instance $\cL = (A, B, E, \Sigma, \{\pi_e\}_{e \in E})$ such that $|\Sigma| = O_\delta(1)$ to a ground set $\cU$ and a concept class $\cC$ such that, if $n \triangleq |A| + |B|$ and $r \triangleq \sqrt{n} / \log n$, then the following conditions hold for every sufficiently large $n$.
\begin{itemize}
\item (Size) The reduction runs in time $|\Sigma|^{O(|E|\poly(1/\delta)/r)}$ and $|\cC|, |\cU| \leqs |\Sigma|^{O(|E|\poly(1/\delta)/r)}$.
\item (Completeness) If $\cL$ is satisfiable, then $\VC(\cC, \cU) \geqs 2r$.
\item (Soundness) If $\val(\cL) \leqs \delta^2 / 100$, then $\VC(\cC, \cU) \leqs (1 + \delta)r$ with high probability.
\end{itemize}
In fact, the above properties hold with high probability even when $\delta$ and $|\Sigma|$ are not constants, as long as $\delta \geqs \log(1000n\log |\Sigma|) / r$.
\end{theorem}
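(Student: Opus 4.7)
My plan is to follow the birthday-repetition paradigm sketched in the Techniques subsection. I would first apply the Partitioning Lemma (Lemma~\ref{lem:partition}) to partition $A\cup B$ into $r=\sqrt{n}/\log n$ roughly balanced blocks $U_1,\dots,U_r$ whose pairwise edge cuts have essentially uniform density. The ground set will decompose as $\cU = \cU^{\mathrm{asgn}} \cup \cU^{\mathrm{test}}$, where $\cU^{\mathrm{asgn}} = \{x_{i,\sigma} : i\in[r],\ \sigma\in \Sigma^{U_i}\}$ hosts one element per block-wise partial assignment and $\cU^{\mathrm{test}}$ is a smaller collection of ``test-selection'' elements. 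A random concept $C\in\cC$ is indexed by a candidate partial assignment on a few blocks together with a randomly drawn bundle of $\Theta(|E|\poly(1/\delta)/r)$ edges of $\cL$; membership of $x_{i,\sigma}\in C$ is set so that it holds exactly when $\sigma$ agrees with the concept's assignment on block $i$ and all sampled edges incident to $U_i$ are satisfied, while elements of $\cU^{\mathrm{test}}$ are included or excluded according to independent random bits of the test. The completeness direction then produces the advertised $2r$-shattered set directly from a satisfying assignment, and soundness forces any would-be large shattered set to commit to a partial assignment of $\cL$ that must simultaneously pass many independent random tests.

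\textbf{Completeness.} Given a globally satisfying assignment $\phi^\star$ for $\cL$, I would take the canonical $2r$-set $S=S_A\cup S_T$ with $S_A=\{x_{i,\phi^\star|_{U_i}}\}_{i\in[r]}$ and $S_T$ a fixed set of $r$ elements of $\cU^{\mathrm{test}}$, and exhibit, for every $T\subseteq S$, a concept $C_T\in\cC$ with $S\cap C_T=T$. Its assignment component would agree with $\phi^\star$ on the blocks whose element lies in $T\cap S_A$ and deviate elsewhere to exclude the corresponding $x_{i,\phi^\star|_{U_i}}$, while its random-test component would be chosen to reproduce the prescribed pattern on $S_T$. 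Because $\phi^\star$ satisfies every edge of $\cL$, the ``test-passing'' clauses couple harmlessly, so all $2^{2r}$ subsets of $S$ are realized by some concept and $\VC(\cC,\cU)\ge 2r$.

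\textbf{Soundness.} The crux is a union bound over candidate shattered sets $S\subseteq\cU$ of size $|S|\ge (1+\delta)r+1$. After discarding sets containing two assignment elements from the same block (which cannot be shattered at all, since no concept contains two such elements simultaneously), the assignment portion $S_A$ of $S$ canonically induces a partial assignment $\psi_S$ spanning $|S_A|$ blocks. Combining $\val(\cL)\le\delta^2/100$ with the inter-block density guaranteed by the Partitioning Lemma and a Chernoff bound shows that $\psi_S$ violates an $\Omega(\delta)$-fraction of the edges of $\cL$ inside the spanned blocks. Consequently, for any labeling of $S$ that requires $\psi_S$ to pass a concept's randomly sampled edges, the probability that a single concept realizes that labeling is at most $(1-\Omega(\delta))^{\Theta(|E|\poly(1/\delta)/r)}$. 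Tuning the number of sampled edges per concept will make this probability much smaller than $|\cU|^{-|S|}\cdot 2^{-|S|}\cdot|\cC|^{-1}$, so a union bound over the $\le |\cU|^{|S|}$ choices of $S$, the $2^{|S|}$ labelings of each $S$, and the $|\cC|$ concepts yields $\VC(\cC,\cU)\le(1+\delta)r$ with high probability.

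\textbf{Main obstacle.} The hard part will be calibrating the number of sampled edges per concept (together with the sizes of $\cC$ and $\cU^{\mathrm{test}}$) so that the soundness probability beats \emph{all} the union-bound factors simultaneously, even in the delicate regime where $\delta$ is allowed to shrink to the threshold $\log(1000n\log|\Sigma|)/r$ quoted in the theorem. This balancing is exactly what fixes the $|\Sigma|^{O(|E|\poly(1/\delta)/r)}$ exponent in the size bound and demands that the Chernoff-bound step in the soundness analysis be carried out with care.
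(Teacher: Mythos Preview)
Your proposal has a genuine gap in the soundness argument: the direct union bound over candidate shattered sets cannot close under the size constraints of the theorem.

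Concretely, you want to union-bound over all $S\subseteq\cU$ with $|S|>(1+\delta)r$, and for each such $S$ show that some fixed labeling (say the all-ones labeling on $S_A$) is realized by no concept. The number of such $S$ is at least $|\cU|^{(1+\delta)r}\ge |\Sigma|^{\Omega(n)}$, so you need the per-$S$ failure probability to be $|\Sigma|^{-\Omega(n)}$. But the only randomness available is the edge bundle sampled inside each concept, of size $m=\Theta(|E|\poly(1/\delta)/r)$; the probability that $\psi_S$ survives such a bundle is at best $\exp(-\Theta(\delta^2 m)) = \exp\!\big(-\Theta(|E|\poly(1/\delta)/r)\big)=\exp(-\tilde O(\sqrt{n}))$, which is nowhere near $|\Sigma|^{-\Omega(n)}$. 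Increasing $m$ to close this gap would force $|\cC|\ge |\Sigma|^{m}$ far beyond the stated size bound $|\Sigma|^{O(|E|\poly(1/\delta)/r)}$. Moreover, the ``union bound over $|\cC|$ concepts'' is not even the right shape: whether a concept realizes a labeling is not an independent random event per concept, and the event you actually need to kill is $\exists H$ such that $\psi_S$ passes the test $T_H$, where there are $2^r$ choices of $H$ each surviving with probability roughly $2^{-r}$ --- this product is $\Theta(1)$, not $o(1)$.

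The paper avoids this by not union-bounding over $S$ at all. Its key structural lemma (Lemma~\ref{lem:many-tests}) shows that if $S$ is shattered then the induced non-repetitive assignment must pass at least $2^{|S|-|I(S)|}$ \emph{distinct} seeds $H$; the test-selection elements are there precisely to force this multiplicity. Then the union bound is taken over partial assignments $(I,\sigma_I)$ --- of which there are only $2^r|\Sigma|^n$ --- and for each one Chernoff over the independent seeds $H$ shows that the number of passed $H$'s is $O(n\log|\Sigma|)$ w.h.p. (Lemma~\ref{lem:passing-bound-equiv-form}). This two-step decoupling is the idea you are missing. Two smaller issues also arise: your claim that ``no concept contains two assignment elements from the same block'' is false in any construction with $|\cC|\le|\Sigma|^{O(|E|/r)}$ (a concept only fixes the assignment on $T_H$, not on all of $U_i$), and in completeness you should exclude $x_{i,\phi^\star|_{U_i}}$ via the index set $I$ rather than by ``deviating'' the assignment, since a deviated assignment may fail the test.
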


We remark here that when $\delta = 1 / \log n$, Moshkovitz-Raz PCP produces a Label Cover instance with $|A| = n^{1 + o(1)}, |B| = n^{1 + o(1)}$ and $|\Sigma| = 2^{\polylog(n)}$. For such parameters, the condition $\delta \geqs \log(1000n\log |\Sigma|) / r$ holds for every sufficiently large $n$.

\subsection{A Candidate Reduction (and Why It Fails)}\label{sub:VC-candidate}

To best understand the intuition behind our reduction, we first describe a simpler candidate reduction and explain why it fails, which will lead us to the eventual construction. In this candidate reduction, we start by evoking Lemma~\ref{lem:partition} to partition the vertices $A \cup B$ of the Label Cover instance $\cL = (A, B, E, \Sigma, \{\pi_e\}_{e \in E})$ into $U_1, \dots, U_r$ where $r = \sqrt{n}/\log n$. We then create the universe $\cU$ and the concept class $\cC$ as follows:
\begin{itemize}
\item We make each element in $\cU$ correspond to a partial assignment to $U_i$ for some $i \in [r]$, i.e., we let $\cU = \{x_{i, \sigma_i} \mid i \in [r], \sigma_i \in \Sigma^{U_i}\}$. In the completeness case, we expect to shatter the set of size $r$ that corresponds to a satisfying assignment $\sigma^* \in \Sigma^{A \cup B}$ of the Label Cover instance $\cL$, i.e., $\{x_{i, \sigma^*|_{U_i}} \mid i \in [r]\}$. As for the soundness, our hope is that, if a large set $S \subseteq \cU$ gets shattered, then we will be able to decode an assignment for $\cL$ that satisfies many constraints, which contradicts with our assumption that $\val(\cL)$ is small. Note that the number of elements of $\cU$ in this candidate reduction is at most $r \cdot |\Sigma|^{O(|E|\poly(1/\delta)r)} = 2^{\tO(\sqrt{n})}$ as desired.
\item As stated above, the intended solution for the completeness case is $\{x_{i, \sigma^*|_{U_i}} \mid i \in [r]\}$, meaning that we must have at least one concept corresponding to each subset $I \subseteq [r]$. We will try to make our concepts ``test'' the assignment; for each $I \subseteq [r]$, we will choose a set $T_I \subseteq A \cup B$ of $\tO(\sqrt{n})$ vertices and ``test'' all the constraints within $T_I$. Before we specify how $T_I$ is picked, let us elaborate what ``test'' means: for each $T_I$-partial assignment $\phi_I$ that does not violate any constraints within $T_I$, we create a concept $C_{I, \phi_I}$. This concept contains $x_{i, \sigma_i}$ if and only if $i \in I$ and $\sigma_i$ agrees with $\phi_I$ (i.e. $\phi_I|_{T_I \cap U_i} = \sigma_i|_{T_I \cap U_i}$). Recall that, if a set $S \subseteq \cU$ is shattered, then each $\tS \subseteq S$ is an intersection between $S$ and $C_{I, \phi_I}$ for some $I, \phi_I$. We hope that the $I$'s are different for different $\tS$ so that many different tests have been performed on $S$.

Finally, let us specify how we pick $T_I$. Assume without loss of generality that $r$ is even. We randomly pick a perfect matching between $r$, i.e., we pick a random permutation $\pi_I: [r] \to [r]$ and let $\Big(\pi_I(1), \pi_I(2)\Big), \dots, \Big(\pi_I(r - 1), \pi_I(r)\Big)$ be the chosen matching. We pick $T_I$ such that all the constraints in the matchings, i.e., constraints between $U_{\pi_I(2i - 1)}$ and $U_{\pi_I(2i)}$ for every $i \in [r/2]$, are included. More specifically, for every $i \in [r]$, we include each vertex $v \in U_{\pi_I(2i - 1)}$ if at least one of its neighbors lie in $U_{\pi_I(2i)}$ and we include each vertex $u \in U_{\pi_I(2i)}$ if at least one of its neighbors lie in $U_{\pi_I(2i - 1)}$. By Lemma~\ref{lem:partition}, for every pair in the matching the size of the intersection is at most $\frac{2|E|}{r^2}$, so each concept contains assignments to at most $\frac{2|E|}{r}$ variables; so the total size of the concept class is at most $2^r \cdot |\Sigma|^{\frac{2|E|}{r}}$.
\end{itemize}

Even though the above reduction has the desired size and completeness, it unfortunately fails in the soundness. Let us now sketch a counterexample. For simplicity, let us assume that each vertex in $T_{[r]}$ has a unique neighbor in $T_{[r]}$. Note that, since $T_{[r]}$ has quite small size (only $\tO(\sqrt{n})$), almost all the vertices in $T_{[r]}$ satisfy this property w.h.p., but assuming that all of them satisfy this property makes our life easier.

Pick an assignment $\tsigma \in \Sigma^V$ such that none of the constraints in $T_{[r]}$ is violated. From our unique neighbor assumption, there is always such an assignment. Now, we claim that the set $S_{\tsigma} \triangleq \{x_{i, \tsigma|_{U_i}} \mid i \in [r]\}$ gets shattered. This is because, for every subset $I \subseteq [r]$, we can pick another assignment $\sigma'$ such that $\sigma'$ does not violate any constraint in $T_{[r]}$ and $\sigma'|_{U_i} = \tsigma|_{U_i}$ if and only if $i \in I$. This implies that $\{x_{i, \tsigma|_{U_i}} \mid i \in I\} = S \cap C_{[r], \sigma'}$ as desired. Note here that such $\sigma'$ exists because, for every $i \notin I$, if there is a constraint from a vertex $a \in U_i \cap A$ to another vertex $b \in T_{[r]} \cap B$, then we can change the assignment to $a$ in such a way that the constraint is not violated\footnote{Here we assume that $|\pi_{(a, b)}^{-1}(\tsigma(b))| > 1$; note that this always holds for Label Cover instances produced by Moshkovitz-Raz construction.}; by doing this for every $i \notin I$, we have created the desired $\sigma'$. As a result, $\VC(\cC, \cU)$ can still be as large as $r$ even when the value of $\cL$ is small.


\subsection{The Final Reduction}

In this subsection, we will describe the actual reduction. To do so, let us first take a closer look at the issue with the above candidate reduction. In the candidate reduction, we can view each $I \subseteq [r]$ as being a seed used to pick a matching. Our hope was that many seeds participate in shattering some set $S$, and that this means that $S$ corresponds to an assignment of high value. However, the counterexample showed that in fact only one seed ($I = [r]$) is enough to shatter a set. To circumvent this issue, we will not use the subset $I$ as our seed anymore. Instead, we create $r$ new elements $y_1, \dots, y_r$, which we will call \emph{test selection elements} to act as seeds; namely, each subset $H \subseteq \cY$ will now be a seed. The benefit of this is that, if $S \subseteq \cY$ is shattered and contains test selection elements $y_{i_1}, \dots, y_{i_t}$, then at least $2^t$ seeds must participate in the shattering of $S$. This is because, for each $H \subseteq \cY$, the intersection of $S$ with any concept corresponding to $H$, when restricted to $\cY$, is always $H \cap \{y_{i_1}, \dots, y_{i_t}\}$. Hence, each subset of $\{y_{i_1}, \dots, y_{i_t}\}$ must come a from different seed.

The only other change from the candidate reduction is that each $H$ will test multiple matchings rather than one matching. This is due to a technical reason: we need the number of matchings, $\ell$, to be large in order get the approximation ratio down to $1/2 + o(1)$; in our proof, if $\ell = 1$, then we can only achieve a factor of $1 - \varepsilon$ to some $\varepsilon > 0$. The full details of the reduction are shown in Figure~\ref{fig:red-vc}.

\begin{figure}[h!]
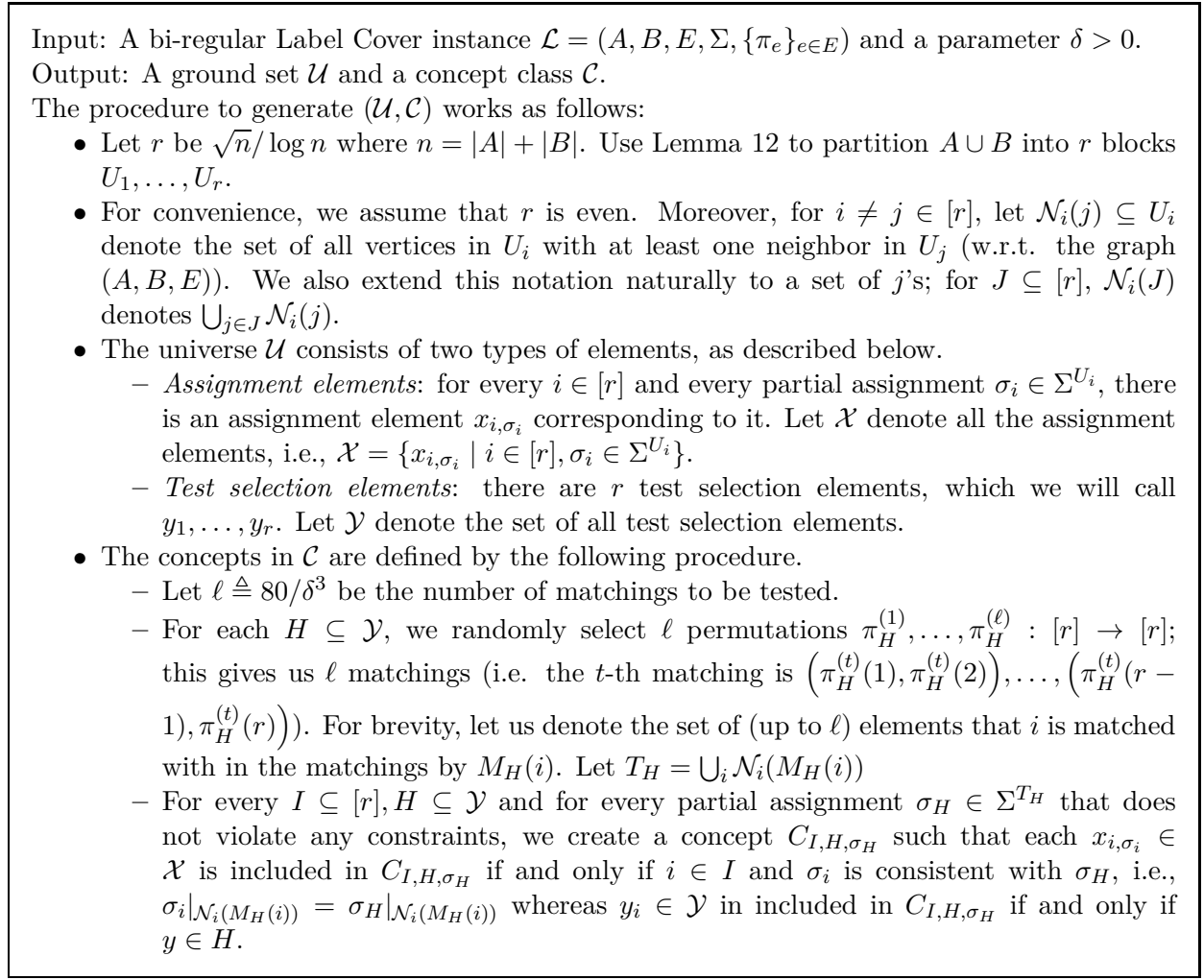

\begin{framed}
Input: A bi-regular Label Cover instance $\cL = (A, B, E, \Sigma, \{\pi_e\}_{e \in E})$ and a parameter $\delta > 0$. \\
Output: A ground set $\cU$ and a concept class $\cC$. \\
The procedure to generate $(\cU, \cC)$ works as follows:
\begin{itemize}
\item Let $r$ be $\sqrt{n} / \log n$ where $n = |A| + |B|$. Use Lemma~\ref{lem:partition} to partition $A \cup B$ into $r$ blocks $U_1, \dots, U_r$.
\item For convenience, we assume that $r$ is even. Moreover, for $i \ne j \in [r]$, let $\cN_i(j) \subseteq U_i$ denote the set of all vertices in $U_i$ with at least one neighbor in $U_j$ (w.r.t. the graph $(A, B, E)$). We also extend this notation naturally to a set of $j$'s; for $J \subseteq [r]$, $\cN_i(J)$ denotes $\bigcup_{j \in J} \cN_i(j)$.
\item The universe $\cU$ consists of two types of elements, as described below.
\begin{itemize}
\item \emph{Assignment elements}: for every $i \in [r]$ and every partial assignment $\sigma_i \in \Sigma^{U_i}$, there is an assignment element $x_{i, \sigma_i}$ corresponding to it. Let $\cX$ denote all the assignment elements, i.e., $\cX = \{x_{i, \sigma_i} \mid i \in [r], \sigma_i \in \Sigma^{U_i}\}$.
\item \emph{Test selection elements}: there are $r$ test selection elements, which we will call $y_1, \dots, y_r$. Let $\cY$ denote the set of all test selection elements.
\end{itemize}
\item The concepts in $\cC$ are defined by the following procedure.
\begin{itemize}
\item Let $\ell \triangleq 80 / \delta^3$ be the number of matchings to be tested.
\item For each $H \subseteq \cY$, we randomly select $\ell$ permutations $\pi^{(1)}_H, \dots, \pi^{(\ell)}_H: [r] \rightarrow [r]$; this gives us $\ell$ matchings (i.e. the $t$-th matching is $\Big(\pi^{(t)}_H(1), \pi^{(t)}_H(2)\Big), \dots, \Big(\pi^{(t)}_H(r - 1), \pi^{(t)}_H(r)\Big)$). For brevity, let us denote the set of (up to $\ell$) elements that $i$ is matched with in the matchings by $M_H(i)$. Let $T_H = \bigcup_i \cN_i(M_{H}(i))$
\item For every $I \subseteq [r], H \subseteq \cY$ and for every partial assignment $\sigma_H \in \Sigma^{T_H}$ that does not violate any constraints, we create a concept $C_{I, H, \sigma_H}$ such that each $x_{i, \sigma_i} \in \cX$ is included in $C_{I, H, \sigma_H}$ if and only if $i \in I$ and $\sigma_i$ is consistent with $\sigma_H$, i.e., $\sigma_i|_{\cN_i(M_{H}(i))} = \sigma_H|_{\cN_i(M_{H}(i))}$ whereas $y_i \in \cY$ in included in $C_{I, H, \sigma_H}$ if and only if $y \in H$.
\end{itemize}
\end{itemize}
\end{framed}
\caption{Reduction from Label Cover to VC Dimension}
\label{fig:red-vc}
\end{figure}

Before we proceed to the proof, let us define some additional notation that will be used throughout.
\begin{itemize}
\item Every assignment element of the form $x_{i, \sigma_i}$ is called an \emph{$i$-assignment element}; we denote the set of all $i$-assignment elements by $\cX_i$, i.e., $\cX_i = \{x_{i, \sigma_i} \mid \sigma_i \in \Sigma^{U_i}\}$.
Let $\cX$ denote all the assignment elements, i.e., $\cX = \bigcup_i \cX_i$.
\item For every $S \subseteq \cU$, let $I(S)$ denote the set of all $i \in [r]$ such that $S$ contains an $i$-assignment element, i.e., $I(S) = \{i \in [r] \mid S \cap \cX_i \ne \emptyset\}$.
\item We call a set $S \subseteq \cX$ \emph{non-repetitive} if, for each $i \in [r]$, $S$ contains at most one $i$-assignment element, i.e., $|S \cap \cX_i| \leqs 1$. Each non-repetitive set $S$ canonically induces a partial assignment $\phi(S): \bigcup_{i \in I(S)} U_i \to \Sigma$. This is the unique partial assignment that satisfies $\phi(S)|_{U_i} = \sigma_i$ for every $x_{i, \sigma_i} \in S$
\item Even though we define each concept as $C_{I, H, \sigma_H}$ where $\sigma_H$ is a partial assignment to a subset $T_H \subseteq A \cup B$, 
it will be more convenient to view each concept as $C_{I, H, \sigma}$ where $\sigma \in \Sigma^V$ is the assignment to the entire Label Cover instance. This is just a notational change: the actual definition of the concept does not depend on the assignment outside $T_H$. 
\item For each $I \subseteq [r]$, let $U_I$ denote $\bigcup_{i \in I} U_i$. For each $\sigma_I \in \Sigma^{U_I}$, we say that $(I, \sigma_I)$ \emph{passes} $H \subseteq \cY$ if $\sigma_I$ does not violate any constraint within $T_H$. Denote the collection of $H$'s that $(I, \sigma_I)$ passes by $\cH(I, \sigma_I)$.
\item Finally, for any non-repetitive set $S \subseteq \cX$ and any $H \subseteq \cY$, we say that $S$ \emph{passes} $H$ if $(I(S), \phi(S))$ passes $H$. We write $\cH(S)$ as a shorthand for $\cH(I(S), \phi(S))$.
\end{itemize}

The output size of the reduction and the completeness follow almost immediately from definition.

{\bf Output Size of the Reduction.} Clearly, the size of $\cU$ is $\sum_{i \in [r]} |\Sigma|^{|U_i|} \leqs r \cdot |\Sigma|^{n/r} \leqs |\Sigma|^{O(|E|\poly(1/\delta)/r)}$. As for $|\cC|$, note first that the number of choices for $I$ and $H$ are both $2^r$. For fixed $I$ and $H$, Lemma~\ref{lem:partition} implies that, for each matching $\pi_H^{(t)}$, the number of vertices from each $U_i$ with at least one constraint to the matched partition in $\pi_H^{(t)}$ is at most $O(|E|/r^2)$. Since there are $\ell$ matchings, the number of vertices in $T_H = \cN_1(M_H(1)) \cup \cdots \cup \cN_r(M_H(r))$ is at most $O(|E|\ell/r)$. Hence, the number of choices for the partial assignment $\sigma_H$ is at most $|\Sigma|^{O(|E|\poly(1/\delta)/r)}$. In total, we can conclude that $\cC$ contains at most $|\Sigma|^{O(|E|\poly(1/\delta)/r)}$ concepts. 

{\bf Completeness.} If $\cL$ has a satisfying assignment $\sigma^* \in \Sigma^V$, then the set $S_{\sigma^*} = \{x_{i, \sigma^*|_{U_i}} \mid i \in [r]\} \cup \cY$ is shattered because, for any $S \subseteq S_{\sigma^*}$, we have $S = S_{\sigma^*} \cap C_{I(S), S \cap \cY, \sigma^*}$. Hence, $\VC(\cC, \cU) \geqs 2r$.

The rest of this section is devoted to the soundness analysis.

\subsection{Soundness}

In this subsection, we will prove the following lemma, which, combined with the completeness and output size arguments above, imply Theorem~\ref{thm:vc-red}.

\begin{lemma} \label{lem:soundness-vc}
Let $(\cC, \cU)$ be the output from the reduction in Figure~\ref{fig:red-vc} on input $\cL$. If $\val(\cL) \leqs \delta^2 / 100$ and $\delta \geqs \log(1000n\log |\Sigma|) / r$, then $\VC(\cC, \cU) \leqs (1 + \delta)r$ w.h.p.
\end{lemma}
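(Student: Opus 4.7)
Proof plan:

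Suppose for contradiction that some $S \subseteq \cU$ with $|S| > (1+\delta)r$ is shattered. Decompose $S = S_X \sqcup S_Y$ with $S_X = S \cap \cX$ and $S_Y = S \cap \cY$. Since $|\cY| = r$ we have $|S_Y| \leqs r$ and thus $|S_X| > \delta r$. I would first argue that $S_X$ may be assumed non-repetitive in the sense $|S_X \cap \cX_i| \leqs 1$ for all $i$: otherwise, separating two elements $x_{i,\sigma_i}, x_{i,\sigma_i'}$ on the same block imposes fine-grained structure on the $T_H$'s (some $T_H \cap U_i$ must contain a coordinate on which $\sigma_i$ and $\sigma_i'$ differ, and another must avoid all such coordinates), which plays the same role in the main counting argument and can be absorbed into it. Granted non-repetitiveness, $|S_X| = |I(S_X)| \leqs r$, so $|S_Y| > \delta r$ as well.

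Write $I_0 = I(S_X)$ and $\phi_0 = \phi(S_X)$. The main structural consequence of shattering is: for every $T_Y \subseteq S_Y$, taking the target intersection $T = S_X \cup T_Y$, the shattering must furnish a concept $C_{I, H, \sigma_H}$ with $C \cap S = T$; inspecting the definitions, this forces $I = I_0$, $H \cap S_Y = T_Y$, and $\sigma_H$ to be an extension of $\phi_0|_{T_H \cap U_{I_0}}$ that violates no constraint within $T_H$. Hence $\phi_0$ "passes" $H$ (in particular $H \in \cH(I_0, \phi_0)$). As distinct $T_Y$'s yield $H$'s with distinct intersections with $S_Y$, this produces at least $2^{|S_Y|} > 2^{\delta r}$ many $H$'s passed by $\phi_0$.

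The remaining task is probabilistic: for each fixed $(I_0, \phi_0)$, I would upper-bound $\Pr[|\cH(I_0, \phi_0)| \geqs 2^{\delta r}]$. For a single uniformly random matching of $[r]$, since $\val(\cL) \leqs \delta^2/100$ any extension of $\phi_0$ violates at least $(1-o(1))(1-\delta^2/100)|E|$ edges, while Lemma~\ref{lem:partition} caps each $V_{ij}$ at $2|E|/r^2$. An averaging argument then shows the "all-satisfied" sub-graph on $[r]$ (pairs with $V_{ij} = 0$) has only $O(\delta^2 r^2)$ edges. A standard combinatorial bound on perfect matchings of $K_r$ lying inside such a sparse subgraph, compared against the total $(r-1)!!$ perfect matchings of $K_r$, gives $p_0 := \Pr[\text{single matching passes } \phi_0] \leqs (c \delta^2)^{r/2}$ for an absolute constant $c$. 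Since the $\ell$ matchings inside each $H$ are independent and the matchings of different $H$'s are independent too, $q := \Pr[\phi_0 \text{ passes } H] \leqs p_0^\ell$, and $|\cH(I_0, \phi_0)|$ is a sum of $2^r$ independent Bernoullis of parameter at most $q$.

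To close, apply a binomial tail: $\Pr[|\cH(I_0, \phi_0)| \geqs 2^{\delta r}] \leqs (e \cdot 2^r q / 2^{\delta r})^{2^{\delta r}}$. With $\ell = 80/\delta^3$ one has $\ell \log_2(1/p_0) \geqs (40 \log_2(1/(c\delta^2))/\delta^3) \cdot r \gg r$ for all $\delta$ in the allowed range, so $2^r q \ll 2^{\delta r}/e$ and the base is below $1/2$, yielding the doubly-exponentially small bound $2^{-2^{\delta r}}$. The hypothesis $\delta r \geqs \log(1000 n \log|\Sigma|)$ guarantees $2^{\delta r} \geqs 1000 n \log|\Sigma|$, which dominates $\log_2(|\Sigma|^n \cdot 2^r)$; a union bound over the at most $2^r \cdot |\Sigma|^n$ choices of $(I_0, \phi_0)$ then shows that with high probability no such pair has $|\cH(I_0, \phi_0)| \geqs 2^{\delta r}$, completing the contradiction. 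The main technical obstacle I anticipate is proving the sharp single-matching bound $p_0 \leqs (c\delta^2)^{r/2}$ rather than the naive $p_0 \leqs O(\delta^2)$: without this exponential-in-$r$ gain the union bound cannot absorb the $2^r$ factor, so the matching-counting step and the careful handling of pair dependencies in a random permutation are where the real work lies.
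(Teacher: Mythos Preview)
Your overall strategy matches the paper's: extract from the shattered set a non-repetitive partial assignment that must pass many $H$'s, then show probabilistically that no such assignment passes that many. The genuine gap is in your single-matching bound. You claim $p_0 \leq (c\delta^2)^{r/2}$ by arguing that the ``all-satisfied'' subgraph on $[r]$ has only $O(\delta^2 r^2)$ edges. But $\phi_0$ is only a \emph{partial} assignment on $U_{I_0}$, with $|I_0|$ possibly as small as $\delta r$; for any pair $(i,j)$ with $i\notin I_0$ or $j\notin I_0$ there is no constraint $\phi_0$ can violate, so every such pair---roughly $(1-\delta^2)\binom{r}{2}$ of them---lies in your ``all-satisfied'' subgraph. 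Your averaging step really bounds the satisfied pairs of a \emph{full extension} $\phi$, but ``$\phi_0$ passes $H$'' is strictly weaker than ``$\phi$ passes $H$'', so that bound is for the wrong event. Concretely, when $|I_0|=\delta r$, a uniformly random perfect matching of $[r]$ places no edge entirely inside $I_0$ with probability $\approx e^{-\delta^2 r/2}$, and $p_0$ is at least this---far above $(c\delta^2)^{r/2}$. So the bound you flag as ``where the real work lies'' is simply false, not merely hard.

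The paper proves the weaker (and correct) bound $p_0 \leq (1-0.1\delta^2)^{\delta r/8}$ by sequential conditioning: for each of the first $\approx \delta r/4$ pairs of the random permutation, conditioned on earlier pairs, at least $0.1\delta^2 r^2$ of the at most $\binom{r}{2}$ remaining choices are violating. This is exactly what $\ell = 80/\delta^3$ is calibrated for: it yields $p_0^\ell \leq 2^{-r}$, after which your Chernoff-plus-union-bound endgame goes through (the paper's threshold is $100n\log|\Sigma|$, which is below your $2^{\delta r}$ by the hypothesis on $\delta$). A smaller issue: your reduction to non-repetitive $S_X$ is hand-waved; the paper instead takes any maximal non-repetitive $S_{\mathrm{nr}}\subseteq S\cap\cX$ and shows that two concepts with the same $H$ both containing $S_{\mathrm{nr}}$ must intersect $S$ identically, so the $2^{|S|-|I(S)|}$ subsets $\tilde S$ with $S_{\mathrm{nr}}\subseteq \tilde S\subseteq S$ force pairwise distinct $H$'s, with no assumption on $S\cap\cX$.
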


At a high level, the proof of Lemma~\ref{lem:soundness-vc} has two steps:
\begin{enumerate}
\item Given a shattered set $S \subseteq \cU$, we extract a maximal non-repetitive set $\Snr \subseteq S$ such that $\Snr$ passes many ($\geqs 2^{|S| - |\Snr|}$) $H$'s. If $|\Snr|$ is small, the trivial upper bound of $2^r$ on the number of different $H$'s implies that $|S|$ is also small. As a result, we are left to deal with the case that $|\Snr|$ is large.
\item When $|\Snr|$ is large, $\Snr$ induces a partial assignment on a large fraction of vertices of $\cL$. Since we assume that $\val(\cL)$ is small, this partial assignment must violate many constraints.  We will use this fact to argue that, with high probability, $\Snr$ only passes very few $H$'s, which implies that $|S|$ must be small.
\end{enumerate}

The two parts of the proof are presented in Subsection~\ref{subsubsec:soundness-I} and~\ref{subsubsec:soundness-II} respectively. We then combine them in Subsection~\ref{subsubsec:soundness-comb} to prove Lemma~\ref{lem:soundness-vc}.

\subsubsection{Part I: Finding a Non-Repetitive Set That Passes Many Tests} \label{subsubsec:soundness-I}

The goal of this subsection is to prove the following lemma, which allows us to, given a shattered set $S \subseteq \cU$, find a non-repetitive set $\Snr$ that passes many $H$'s.

\begin{lemma} \label{lem:many-tests}
For any shattered $S \subseteq \cU$, there is a non-repetitive set $\Snr$ of size $|I(S)|$ \st $|\cH(\Snr)| \geqs 2^{|S| - |I(S)|}$.
\end{lemma}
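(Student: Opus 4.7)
The plan is to pick any non-repetitive $\Snr \subseteq S \cap \cX$ of size $|I(S)|$---for each $i \in I(S)$, fix some $x_{i, \sigma_i^*} \in S \cap \cX_i$ and set $\Snr = \{x_{i, \sigma_i^*} : i \in I(S)\}$---and show that \emph{this} $\Snr$ already passes $2^{|S| - |I(S)|}$ tests. The witnesses for those tests will be the $2^{|S| - |I(S)|}$ subsets $T \subseteq S$ satisfying $\Snr \subseteq T$: for each such $T$, shattering provides a concept $C_T = C_{I_T, H_T, \sigma_T}$ with $S \cap C_T = T$, and I will establish (i) every $H_T$ lies in $\cH(\Snr)$, and (ii) the map $T \mapsto H_T$ is injective on $\{T : \Snr \subseteq T\}$.

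For (i), the inclusion $\Snr \subseteq C_T$ forces $i \in I_T$ for every $i \in I(S)$ and $\sigma_T|_{\cN_i(M_{H_T}(i))} = \sigma_i^*|_{\cN_i(M_{H_T}(i))}$. Since $T_{H_T} \cap U_i = \cN_i(M_{H_T}(i))$, the partial assignment $\phi(\Snr)$ agrees with the non-violating $\sigma_T$ on $T_{H_T} \cap U_{I(S)}$. Any constraint of $T_{H_T}$ whose two endpoints both lie in $U_{I(S)}$ is then satisfied by $\phi(\Snr)$; a constraint with an endpoint outside $U_{I(S)}$ is vacuously not violated because $\phi(\Snr)$ is undefined there. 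Hence $\Snr$ passes $H_T$.

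The crucial step is (ii). Once $\Snr$ is fixed, I will argue that $S \cap C$ is determined by $H$ alone for any concept $C = C_{I, H, \sigma}$ containing $\Snr$. Indeed, $C \cap (S \cap \cY) = H \cap (S \cap \cY)$ depends only on $H$; and for each $i \in I(S)$ the requirement $x_{i, \sigma_i^*} \in C$ forces $i \in I$ and pins the restriction $\sigma|_{\cN_i(M_H(i))}$ to equal $\sigma_i^*|_{\cN_i(M_H(i))}$, so $C \cap \cX_i = \{x_{i, \sigma_i} \in S \cap \cX_i : \sigma_i|_{\cN_i(M_H(i))} = \sigma_i^*|_{\cN_i(M_H(i))}\}$ is a function of $H$ and the fixed $\sigma_i^*$ only. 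Consequently $H_{T_1} = H_{T_2}$ forces $T_1 = S \cap C_{T_1} = S \cap C_{T_2} = T_2$, yielding the injectivity. This injectivity is exactly where the test-selection elements pay off and is the main obstacle of the proof: they supply enough independent $H$'s to separate the $2^{|S| - |I(S)|}$ supersets of $\Snr$, while the assignment elements on their own could distinguish at most $2^{|I(S)|}$ subsets per fixed $H$. Combining (i) and (ii) immediately produces $2^{|S| - |I(S)|}$ distinct members of $\cH(\Snr)$.
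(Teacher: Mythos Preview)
Your proposal is correct and follows essentially the same approach as the paper: you choose a maximal non-repetitive $\Snr \subseteq S$, then for each of the $2^{|S|-|I(S)|}$ supersets $T\supseteq \Snr$ use the shattering concept $C_T$ to witness a test $H_T\in\cH(\Snr)$, and prove injectivity of $T\mapsto H_T$. Your step~(i) is exactly the paper's Observation~\ref{obs:pass-non-rep}, and your step~(ii)---showing that for any concept $C_{I,H,\sigma}\supseteq\Snr$ the intersection $S\cap C$ depends only on $H$---is precisely the content (and essentially the proof) of the paper's Lemma~\ref{lem:soundness-mult}, just inlined rather than stated separately.
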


We will start by proving the following lemma, which will be a basis for the proof of Lemma~\ref{lem:many-tests}. 

\begin{lemma} \label{lem:soundness-mult}
Let $C, C' \in \cC$ correspond to the same $H$ (i.e. $C = C_{I, H, \sigma}$ and $C' = C_{I', H, \sigma'}$ for some $H \subseteq \cY, I, I' \subseteq [r], \sigma, \sigma' \in \Sigma^V$).

For any subset $S \subseteq \cU$ and any maximal non-repetitive subset $\Snr \subseteq S$, if $\Snr \subseteq C$ and $\Snr \subseteq C'$, then $S \cap C = S \cap C'$.
\end{lemma}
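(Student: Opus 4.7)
The plan is to unpack the definitions and observe that, because $C$ and $C'$ share the same selector $H$, they differ only through the partial assignments $\sigma_H$ and $\sigma'_H$ restricted to $T_H = \bigcup_i \cN_i(M_H(i))$. The role of $\Snr$ will be to force these two partial assignments to agree on precisely the parts of $T_H$ that matter for deciding membership of elements of $S$.

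First, I would handle the trivial half: for any concept $C_{I, H, \sigma}$, the intersection with $\cY$ equals $H$ by construction, independent of $I$ and $\sigma$. Hence $S \cap C \cap \cY = S \cap H = S \cap C' \cap \cY$, and it only remains to prove $S \cap C \cap \cX = S \cap C' \cap \cX$.

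Next, I would exploit the maximality of $\Snr$. Since $\Snr$ is a maximal non-repetitive subset of $S \cap \cX$, for every $i \in I(S)$ there is a unique element $x_{i, \sigma_i^{\star}} \in \Snr$. Applying the hypothesis $\Snr \subseteq C \cap C'$ to this element yields two facts: (i) $i \in I \cap I'$, so $I(S) \subseteq I \cap I'$; and (ii) $\sigma_i^{\star}$ is consistent with both $\sigma_H$ and $\sigma'_H$ on $\cN_i(M_H(i))$, giving
\[
\sigma_H\bigl|_{\cN_i(M_H(i))} \;=\; \sigma_i^{\star}\bigl|_{\cN_i(M_H(i))} \;=\; \sigma'_H\bigl|_{\cN_i(M_H(i))}.
\]
Thus $\sigma_H$ and $\sigma'_H$ agree on $\bigcup_{i \in I(S)} \cN_i(M_H(i))$, which is all of $T_H$ that a member of $S \cap \cX$ ever ``sees''.

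Finally, I would close the loop: take any $x_{j, \tau_j} \in S \cap C \cap \cX$. Then $j \in I(S)$, so by (i) also $j \in I'$; moreover, membership in $C$ gives $\tau_j|_{\cN_j(M_H(j))} = \sigma_H|_{\cN_j(M_H(j))}$, and by the agreement just established this equals $\sigma'_H|_{\cN_j(M_H(j))}$, so $x_{j, \tau_j} \in C'$. The symmetric argument yields the reverse inclusion, completing the proof. There is no real obstacle here; the lemma is essentially a definition-chase whose only subtlety is remembering that ``maximal non-repetitive'' guarantees a witness $\sigma_i^{\star}$ for every index in $I(S)$, which is exactly what is needed to transfer consistency with $\sigma_H$ to consistency with $\sigma'_H$.
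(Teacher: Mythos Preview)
Your proposal is correct and follows essentially the same approach as the paper: both handle the $\cY$-part trivially via the shared $H$, then use maximality of $\Snr$ to produce, for each $i\in I(S)$, a witness $x_{i,\sigma_i^\star}\in\Snr\subseteq C\cap C'$ that forces $i\in I\cap I'$ and $\sigma|_{\cN_i(M_H(i))}=\sigma'|_{\cN_i(M_H(i))}$, from which equality on $S\cap\cX$ follows. The only cosmetic difference is that you argue directly by double inclusion while the paper phrases the same step as a proof by contradiction.
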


The most intuitive interpretation of this lemma is as follows. Recall that if $S$ is shattered, then, for each $\tS \subseteq S$, there must be a concept $C_{I_{\tS}, H_{\tS}, \sigma_{\tS}}$ such that $\tS = S \cap C_{I_{\tS}, H_{\tS}, \sigma_{\tS}}$. The above lemma implies that, for each $\tS \supseteq \Snr$, $H_{\tS}$ must be different. This means that at least $2^{|S| - |\Snr|}$ different $H$'s must be involved in shattering $S$. Indeed, this will be the argument we use when we prove Lemma~\ref{lem:many-tests}.

\begin{proofof}[Lemma~\ref{lem:soundness-mult}]
Let $S, \Snr$ be as in the lemma statement. Suppose for the sake of contradiction that there exists  $H \subseteq \cY, I, I' \subseteq [r], \sigma, \sigma' \in \Sigma^V$ such that $\Snr \subseteq  C_{I, H, \sigma}, \Snr \subseteq  C_{I', H, \sigma'}$ and $S \cap C_{I, H, \sigma} \ne S \cap C_{I', H, \sigma'}$.

First, note that $S \cap C_{I, H, \sigma} \cap \cY = S \cap H \cap \cY = S \cap C_{I', H, \sigma'} \cap \cY$. Since $S \cap C_{I, H, \sigma} \ne S \cap C_{I', H, \sigma'}$, we must have $S \cap C_{I, H, \sigma} \cap \cX \ne S \cap C_{I', H, \sigma'} \cap \cX$. Assume w.l.o.g. that there exists $x_{i, \sigma_i} \in (S \cap C_{I, H, \sigma}) \setminus ( S \cap C_{I', H, \sigma'})$.

Note that $i \in I(S) = I(\Snr)$ (where the equality follows from maximality of $\Snr$). Thus there exists $\sigma'_i \in \Sigma^{U_i}$ such that $x_{i, \sigma'_i} \in \Snr \subseteq C_{I, H, \sigma} \cap  C_{I', H, \sigma'}$. Since $x_{i, \sigma'_i}$ is in both $C_{I, H, \sigma}$ and $C_{I', H, \sigma'}$, we have $i \in I \cap I'$ and
\begin{align} \label{eq:assign-eq}
\sigma|_{\cN_i(M_H(i))} = \sigma'_i|_{\cN_i(M_H(i))} = \sigma'|_{\cN_i(M_H(i))}.
\end{align}
However, since $x_{i, \sigma_i} \in (S \cap C_{I, H, \sigma}) \setminus ( S \cap C_{I', H, \sigma'})$, we have $x_{i, \sigma_i} \in C_{I, H, \sigma} \setminus C_{I', H, \sigma'}$. This implies that $$\sigma|_{\cN_i(M_H(i))} = \sigma_i|_{\cN_i(M_H(i))} \ne \sigma'|_{\cN_i(M_H(i))},$$
which contradicts to (\ref{eq:assign-eq}).
\end{proofof}

In addition to the above lemma, we will also need the following observation, which states that, if a non-repetitive $\Snr$ is contained in a concept $C_{I, H, \sigma_H}$, then $\Snr$ must pass $H$. This observation follows definitions.

\begin{observation} \label{obs:pass-non-rep}
If a non-repetitive set $\Snr$ is a subset of some concept $C_{I, H, \sigma_H}$, then $H \in \cH(\Snr)$.
\end{observation}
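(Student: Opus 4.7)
The proposal is to unfold the definitions and chase a direct correspondence between the membership condition $\Snr \subseteq C_{I,H,\sigma_H}$ and the non-violation condition that defines $\cH(\Snr)$. Since $\cH(\Snr)$ is shorthand for $\cH(I(\Snr), \phi(\Snr))$, what we need to establish is that the partial assignment $\phi(\Snr)$ violates no constraint whose both endpoints lie in $T_H$.

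First I would record what the hypothesis gives us at the level of the individual elements of $\Snr$. For each $x_{i,\sigma_i} \in \Snr$, the defining condition of $C_{I,H,\sigma_H}$ forces $i \in I$ and
\[
\sigma_i\big|_{\cN_i(M_H(i))} \;=\; \sigma_H\big|_{\cN_i(M_H(i))}.
\]
Because $\Snr$ is non-repetitive, $\phi(\Snr)$ is unambiguously defined on $U_{I(\Snr)}$ by $\phi(\Snr)|_{U_i} = \sigma_i$ for the unique $x_{i,\sigma_i} \in \Snr$ at each $i \in I(\Snr)$, so the display above says precisely that $\phi(\Snr)$ agrees with $\sigma_H$ on $\cN_i(M_H(i))$ for every $i \in I(\Snr)$.

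Next, I would verify non-violation on $T_H$. Take any edge $(a,b) \in E$ with both endpoints in $T_H$ and both in $U_{I(\Snr)}$ (edges missing at least one labeled endpoint are trivially not violated by a partial assignment). Let $i_1,i_2 \in I(\Snr)$ be the blocks containing $a$ and $b$. The key observation is that $\cN_j(M_H(j)) \subseteq U_j$ by construction, so the only summand of $T_H = \bigcup_j \cN_j(M_H(j))$ that can contain $a$ is $\cN_{i_1}(M_H(i_1))$, and similarly for $b$. Hence by the previous paragraph $\phi(\Snr)(a) = \sigma_H(a)$ and $\phi(\Snr)(b) = \sigma_H(b)$. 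Since $\sigma_H$ was required (in the definition of $C_{I,H,\sigma_H}$) not to violate any constraint on $T_H$, we get $\pi_{(a,b)}(\sigma_H(a)) = \sigma_H(b)$, and therefore $\pi_{(a,b)}(\phi(\Snr)(a)) = \phi(\Snr)(b)$. This says $\phi(\Snr)$ satisfies the constraint $(a,b)$, so $(I(\Snr), \phi(\Snr))$ passes $H$, i.e.\ $H \in \cH(\Snr)$.

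There is no real obstacle here; the statement is essentially a definitional check. The only point that requires even a moment's care is the inclusion $\cN_j(M_H(j)) \subseteq U_j$, which is what lets us identify the correct index $j$ from membership of a vertex in $T_H$ once we already know which block it lives in. Everything else is straightforward substitution.
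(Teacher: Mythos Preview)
Your proposal is correct and is exactly the definition-chase the paper has in mind; the paper itself simply states that the observation ``follows definitions'' without spelling out the argument. Your only nontrivial step, namely using $\cN_j(M_H(j)) \subseteq U_j$ to localize each endpoint of a constraint in $T_H$ to the unique block where it lives so that the consistency condition $\sigma_i|_{\cN_i(M_H(i))} = \sigma_H|_{\cN_i(M_H(i))}$ applies, is precisely what is needed and is immediate from the definition of $\cN_i(\cdot)$.
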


With Lemma~\ref{lem:soundness-mult} and Observation~\ref{obs:pass-non-rep} ready, it is now easy to prove Lemma~\ref{lem:many-tests}.

\begin{proofof}[Lemma~\ref{lem:many-tests}]
Pick $\Snr$ to be any maximal non-repetitive subset of $S$. Clearly, $|\Snr| = |I(S)|$. To see that $|\cH(\Snr)| \geqs 2^{|S| - |I(S)|}$, consider any $\tS$ such that $\Snr \subseteq \tS \subseteq S$. Since $S$ is shattered, there exists $I_{\tS}, H_{\tS}, \sigma_{\tS}$ such that $S \cap C_{I_{\tS} , H_{\tS} , \sigma_{\tS}} = \tS$. Since $\tS \supseteq \Snr$, Observation~\ref{obs:pass-non-rep} implies that $H_{\tS} \in \cH(\Snr)$. Moreover, from Lemma~\ref{lem:soundness-mult}, $H_{\tS}$ is distinct for every $\tS$. As a result, $|\cH(\Snr)| \geqs 2^{|S| - |I(S)|}$ as desired.
\end{proofof}

\subsubsection{Part II: No Large Non-Repetitive Set Passes Many Tests} \label{subsubsec:soundness-II}

The goal of this subsection is to show that, if $\val(\cL)$ is small, then w.h.p. (over the randomness in the construction) every large non-repetitive set passes only few $H$'s. This is formalized as Lemma~\ref{lem:passing-bound} below.

\begin{lemma} \label{lem:passing-bound}
If $\val(\cL) \leqs \delta^2/100$ and $\delta \geqs 8 / r$, then, with high probability, for every non-repetitive set $\Snr$ of size at least $\delta r$, $|\cH(\Snr)| \leqs 100n \log |\Sigma|$.
\end{lemma}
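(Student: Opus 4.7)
The plan has four steps: (i) count violated between-block constraints, (ii) reduce ``$\Snr$ passes $H$'' to ``all $\ell$ matchings of $H$ miss a fixed bad-pair set $B$'', (iii) bound the single-$H$ probability and apply Chernoff over the $2^r$ choices of $H$, and (iv) union-bound over non-repetitive sets. For step~(i), let $k \triangleq |I(\Snr)| \geq \delta r$ and extend $\phi(\Snr)$ arbitrarily to some $\sigma \in \Sigma^{A \cup B}$. Since $\val(\cL) \leq \delta^2/100$, $\sigma$ satisfies at most $\delta^2|E|/100$ edges. By the Partitioning Lemma~(Lemma~\ref{lem:partition}), the number of edges between distinct blocks of $I(\Snr)$ is at least $\binom{k}{2} \cdot |E|/r^2 \geq \delta^2|E|/3$, of which at least $\delta^2|E|/4$ are violated by $\phi(\Snr)$ (both endpoints lie in $U_{I(\Snr)}$, so the violation is determined by $\phi(\Snr)$, not by the extension). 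Combined with the Partitioning Lemma's upper bound $4|E|/r^2$ per block-pair, the set $B \subseteq \binom{I(\Snr)}{2}$ of \emph{bad} block-pairs (pairs spanned by at least one violated edge) satisfies $|B| \geq \delta^2 r^2/16$.

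For step~(ii), I would observe that $\Snr$ passes $H$ iff every matching $\pi^{(t)}_H$ (for $t \in [\ell]$) is disjoint from $B$: if matching $t$ pairs $(i,j) \in B$, a witnessing violated edge $(a,b) \in E$ with $a \in U_i,\, b \in U_j$ has $a \in \cN_i(j) \subseteq T_H$ and $b \in \cN_j(i) \subseteq T_H$, and because $i, j \in I(\Snr)$ the partial assignment $\phi(\Snr)$ is defined on both endpoints and violates the constraint inside $T_H$.

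Step~(iii) is the main technical step: bounding $q \triangleq \Pr[M \cap B = \emptyset]$ for a uniformly random perfect matching $M$ of $[r]$. Since $|B| = \Omega(\delta^2 r^2)$, a two-stage analysis gives $q \leq \exp(-\Omega(\delta^2 r))$. First, the number $T \triangleq |M \cap \binom{I(\Snr)}{2}|$ of within-$I(\Snr)$ pairs in $M$ has mean $E[T] = \binom{k}{2}/(r-1) = \Theta(\delta^2 r)$ and concentrates around its mean (by a Chernoff-type bound for sums over random matchings), giving $\Pr[T \leq E[T]/2] \leq \exp(-\Omega(\delta^2 r))$. Second, conditional on $T = t$, the within-$I(\Snr)$ pairs form a near-uniform matching on a random $2t$-subset of $I(\Snr)$, in which $B$ has density $\geq |B|/\binom{k}{2} = \Omega(1)$, so the probability that all $t$ of them avoid $B$ is $\leq (7/8)^t = \exp(-\Omega(t))$. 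Because the $\ell = 80/\delta^3$ matchings of each $H$ are independent, $\Pr[\Snr \text{ passes } H] \leq q^{\ell} \leq \exp(-\Omega(r/\delta))$.

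Finally for step~(iv): the permutations for different $H$'s are independent by construction, so $|\cH(\Snr)| = \sum_{H \subseteq \cY} \mathbf{1}[\Snr \text{ passes } H]$ is a sum of $2^r$ independent Bernoullis of mean $\leq \exp(-\Omega(r/\delta))$. For $\delta \leq 1$ the exponent $\Omega(r/\delta)$ beats $r \ln 2$ with room to spare, so $E[|\cH(\Snr)|] \ll 1$ and a Chernoff tail bound yields $\Pr[|\cH(\Snr)| > 100 n \log |\Sigma|] \leq 2^{-100 n \log|\Sigma|}$. Union-bounding over the at most $\prod_i (1 + |\Sigma|^{|U_i|}) \leq 2^{O(n \log|\Sigma|)}$ non-repetitive sets then completes the proof. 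The main obstacle is the sharpness required in step~(iii): a naive one-vertex-at-a-time sequential argument only yields $\exp(-\Omega(\delta^3 r))$ or worse, because a single vertex of $I(\Snr)$ may have degree as large as $k-1$ in the bad-pair graph and ``exhaust'' $B$ after $O(\delta r)$ sequential steps; the two-stage analysis (first concentrate $T$, then use the constant density of $B$ inside $\binom{I(\Snr)}{2}$) is what recovers the $\exp(-\Omega(\delta^2 r))$ factor needed to survive the $2^r$ union bound over $H$.
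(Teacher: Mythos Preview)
Your overall architecture matches the paper's exactly: reduce to pairs $(I,\sigma_I)$, bound the per-$H$ pass probability, apply Chernoff over the $2^r$ independent $H$'s, and union-bound over all $(I,\sigma_I)$. The substantive difference is in step~(iii).

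The paper does \emph{precisely} the ``naive one-pair-at-a-time sequential argument'' you dismiss. It reveals the random permutation $\pi$ two coordinates at a time and shows that, for each of the first $\Theta(\delta r)$ pairs, the conditional probability that $(\pi(p),\pi(p{+}1))$ hits a bad block-pair is at least $0.1\delta^2$ (because at least $\delta r/2$ indices of $I$ remain, hence $\Omega(\delta^2 r^2)$ bad pairs remain among them). This gives only $q \le (1-0.1\delta^2)^{\delta r/8} = \exp(-\Omega(\delta^3 r))$ per matching, exactly as you predicted. But the paper then uses all $\ell = 80/\delta^3$ independent matchings to get $q^{\ell} \le 2^{-r}$, which is all that is needed before the Chernoff step. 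So your worry that the naive bound ``does not survive the $2^r$ union bound over $H$'' is unfounded once you remember the $\ell$ in the exponent; the parameter $\ell$ was chosen for exactly this purpose.

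Your two-stage analysis (concentrate $T = |M \cap \binom{I}{2}|$, then analyze a uniform partial matching inside $I$) is a genuinely different and sharper argument: it yields $q \le \exp(-\Omega(\delta^2 r))$ per matching, so after $\ell$ matchings you get $\exp(-\Omega(r/\delta))$ rather than $\exp(-\Omega(r))$. This is correct but overkill here. Two small repairs if you pursue it: (a)~in step~(ii) you only need (and only have) the implication ``$\Snr$ passes $H$ $\Rightarrow$ every $\pi_H^{(t)}$ avoids $B$''; the converse can fail because $T_H$ may contain both endpoints of an edge between $U_i$ and $U_{i'}$ even when $\{i,i'\}$ is not a matched pair; (b)~your stated bound $|B|\ge \delta^2 r^2/16$ does not give constant density when $|I|\gg \delta r$; you need the sharper $|B|\ge \binom{|I|}{2}/4 - O(\delta^2 r^2)$, which does follow from your edge count and yields density $\ge 1/5$ for all $|I|\ge \delta r$.
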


Note that the mapping $\Snr \mapsto (I(\Snr), \phi(\Snr))$ is a bijection from the collection of all non-repetitive sets to $\{(I, \sigma_I) \mid I \subseteq [r], \sigma_I \in \Sigma^{U_I}\}$. Hence, the above lemma is equivalent to the following.

\begin{lemma} \label{lem:passing-bound-equiv-form}
If $\val(\cL) \leqs \delta^2/100$ and $\delta \geqs8 / r$, then, with high probability, for every $I \subseteq [r]$ of size at least $\delta r$ and every $\sigma_I \in \Sigma^{U_I}$, $|\cH(I, \sigma_I)| \leqs 100n \log |\Sigma|$.
\end{lemma}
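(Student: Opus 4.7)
The plan is to fix any $I \subseteq [r]$ with $|I| \geq \delta r$ and any $\sigma_I \in \Sigma^{U_I}$, control $\Pr[|\cH(I,\sigma_I)| > 100 n\log|\Sigma|]$ via a Chernoff argument, and then union-bound over the at most $2^r \cdot |\Sigma|^n$ such pairs. Call $\{j,j'\} \subseteq I$ a \emph{bad pair} if $\sigma_I$ violates at least one constraint between $U_j$ and $U_{j'}$, and write $B$ for the collection of bad pairs.

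The first step is to lower-bound $|B|$. By Lemma~\ref{lem:partition}, $|E(U_I,U_I)| \geq \binom{|I|}{2}\cdot|E|/(2r^2) \geq \delta^2|E|/8$. Since $\val(\cL) \leq \delta^2/100$, any extension of $\sigma_I$ satisfies at most $\delta^2|E|/100$ edges of $\cL$, so at least $\delta^2|E|/8 - \delta^2|E|/100 \geq \delta^2|E|/9$ edges inside $U_I$ are violated by $\sigma_I$. Dividing by the cap $2|E|/r^2$ edges per block-pair (again Lemma~\ref{lem:partition}) yields $|B| \geq \delta^2 r^2/18$.

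The second and key step is to bound the single-matching passing probability
\[
p \;:=\; \Pr\bigl[\pi_H^{(t)}\text{ matches no pair in }B\bigr].
\]
The expected number of matched bad pairs is $|B|/(r-1) \geq \delta^2 r/36$, and I aim to show $p \leq \exp(-c\,\delta^2 r)$ for some absolute constant $c>0$. I expect this concentration bound to be the principal technical obstacle: the $r/2$ matched-pair indicators are not independent (disjoint pairs are slightly positively correlated, while overlapping pairs are strongly negatively correlated), so the Chernoff bound from the preliminaries does not apply directly, and a naive second-moment estimate gives only $p \leq O(1/(\delta^2 r))$, which is too weak. My route would be to set up the Doob martingale that reveals the matched pairs of $\pi_H^{(t)}$ one at a time; each increment is $O(1)$, while a direct computation shows the total conditional quadratic variation is $O(|B|/(r-1))$, so a Freedman-type martingale Bernstein inequality yields the desired exponential tail $p \leq \exp(-c\,\delta^2 r)$.

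Once the single-matching bound is in hand, the rest is mechanical. The $\ell = 80/\delta^3$ matchings comprising a single $H$ are independent, so $\Pr[(I,\sigma_I)\text{ passes }H] \leq p^\ell \leq \exp(-80 c\,r/\delta)$; the matchings for different $H$ are also independent, so the $2^r$ Bernoullis $\mathds{1}[H\in\cH(I,\sigma_I)]$ are i.i.d. Setting $\mu := \mathbb{E}[|\cH(I,\sigma_I)|]$, we have $\mu \leq \exp(r(\ln 2 - 80c/\delta)) \leq 1$ for any $\delta \leq 1$ once $c$ exceeds an appropriate absolute constant, which is well below $50 n\log|\Sigma|$. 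Applying the Chernoff bound from the preliminaries with $(1+\delta')\mu = 100 n\log|\Sigma|$ now gives
\[
\Pr\bigl[|\cH(I,\sigma_I)| > 100n\log|\Sigma|\bigr] \;\leq\; 2^{-(100n\log|\Sigma|-\mu)/3} \;\leq\; 2^{-30 n\log|\Sigma|},
\]
and a union bound over the $\leq 2^r\cdot|\Sigma|^n \leq 2^{2n\log|\Sigma|}$ choices of $(I,\sigma_I)$ with $|I|\geq \delta r$ yields total failure probability at most $2^{-28 n\log|\Sigma|} = o(1)$, completing the proof.
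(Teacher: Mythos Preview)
Your overall architecture matches the paper's exactly: bound the probability that a fixed $(I,\sigma_I)$ passes a single $H$, then Chernoff over the $2^r$ independent $H$'s, then union bound over all $(I,\sigma_I)$. The difference is entirely in how the single-matching passing probability $p$ is controlled.

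You propose a Doob martingale along the revealed matched pairs together with a Freedman-type inequality, and you identify this as the main obstacle. The paper sidesteps this obstacle completely with an elementary sequential-conditioning argument (Lemma~\ref{lem:passing-bound-single}): reveal $\pi(1),\pi(2),\ldots$ two at a time; for each odd $p\le \delta r/2$, conditionally on the first $p-1$ values there remain at least $\delta r/2+1$ indices of $I$, and by the partition lemma together with $\val(\cL)\le \delta^2/100$ at least $0.1\delta^2 r^2$ of the remaining pairs are still bad, so the new pair avoids them with conditional probability at most $1-0.1\delta^2$. Multiplying gives $p\le (1-0.1\delta^2)^{\delta r/8}$, and with $\ell=80/\delta^3$ this yields $p^\ell\le 2^{-r}$ directly. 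No martingale concentration, no quadratic-variation estimate, no independence issues: the whole step is four lines.

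Two remarks on your route. First, your assertion that ``a direct computation shows the total conditional quadratic variation is $O(|B|/(r-1))$'' is the heart of the matter and is not a direct computation; it depends on the degree profile of the bad-pair graph on $I$, and would itself require a page of work (whereas bounded differences alone gives only $\exp(-\Theta(\delta^4 r))$, which is too weak). Second, even granting it, you then need the absolute constant $c$ coming out of Freedman to satisfy $80c\ge \ln 2$, which you gloss as ``once $c$ exceeds an appropriate absolute constant''; but $c$ is dictated by the inequality, not chosen. Both issues evaporate under the paper's approach, at the mild cost of a weaker per-matching exponent ($\delta^3 r$ rather than your hoped-for $\delta^2 r$) that is in any case absorbed by the choice of $\ell$.
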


Here we use the language in Lemma~\ref{lem:passing-bound-equiv-form} instead of Lemma~\ref{lem:passing-bound} as it will be easier for us to reuse this lemma later. To prove the lemma, we first need to bound the probability that each assignment $\sigma_I$ does not violate any constraint induced by a random matching. More precisely, we will prove the following lemma.

\begin{lemma} \label{lem:passing-bound-single}
For any $I \subseteq [r]$ of size at least $\delta r$ and any $\sigma_I \in \Sigma^{U_I}$, if $\pi: [r] \rightarrow [r]$ is a random permutation of $[r]$, then the probability that $\sigma_I$ does not violate any constraint in $\bigcup_{i \in [r]} \cN_i(M(i))$ is at most $(1 - 0.1\delta^2)^{\delta r / 8}$ where $M(i)$ denote the index that $i$ is matched with in the matching $\Big(\pi(1), \pi(2)\Big), \dots, \Big(\pi(r - 1), \pi(r)\Big)$.
\end{lemma}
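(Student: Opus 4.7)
The plan is to reveal the random matching pair-by-pair and show that, conditional on the first $k-1$ pairs all being \emph{non-bad}, the $k$-th pair is bad with probability at least $0.1\delta^2$ for every $k \leq \delta r/8$; multiplying these conditional probabilities will give the target bound $(1-0.1\delta^2)^{\delta r/8}$. Call an unordered pair $\{i,j\} \subseteq [r]$ with $i \neq j$ \emph{bad} if $i, j \in I$ and $\sigma_I$ violates at least one edge of $E$ between $U_i$ and $U_j$, and \emph{good} if $i, j \in I$ but $\sigma_I$ satisfies every such edge. The event in the lemma---that $\sigma_I$ violates no constraint in $\bigcup_{i} \cN_i(M(i))$---forces, in particular, every matched pair to be non-bad.

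The key global ingredient is an upper bound on the number of good pairs inside $I$. Extend $\sigma_I$ arbitrarily to a full assignment of $\cL$; by $\val(\cL) \leq \delta^2/100$ this extension, and hence $\sigma_I$ itself, satisfies at most $\delta^2 |E|/100$ edges of $E_I \triangleq E \cap (U_I \times U_I)$. By Lemma~\ref{lem:partition}, each good pair $\{i,j\}$ forces at least $|E|/(2r^2)$ satisfied edges of $E_I$, so there are at most $\delta^2 r^2/50$ good pairs inside $I$. This is the only place the soundness hypothesis is used.

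I would then reveal the matching sequentially, so that $P_k$ is a uniform unordered pair of the remaining $r - 2(k-1)$ indices. Assuming $P_1, \ldots, P_{k-1}$ are non-bad and $k \leq \delta r/8$, at most $\delta r/4$ indices have been consumed and the surviving $I$-indices form a set of size at least $|I| - 2(k-1) \geq 3\delta r/4$. Every good pair among the survivors was good inside $I$, so at most $\delta^2 r^2/50$ good pairs remain available, while the total number of pairs among the surviving $I$-indices is at least $\binom{3\delta r/4}{2} \geq 15\delta^2 r^2/64$, using the hypothesis $\delta r \geq 8$. Hence at least $15\delta^2 r^2/64 - \delta^2 r^2/50 \geq 0.2\delta^2 r^2$ bad pairs sit among the remaining indices, against at most $\binom{r}{2} \leq r^2/2$ total remaining pairs, giving $\Pr[P_k \text{ bad} \mid P_1,\ldots,P_{k-1} \text{ non-bad}] \geq 0.4\delta^2 \geq 0.1\delta^2$.

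The main obstacle I anticipate is controlling the conditioning: it is not a priori clear that conditioning on ``no previous pair is bad'' leaves a favorable distribution over surviving indices. This is handled by the blunt worst-case accounting that each revealed pair consumes at most two $I$-indices, with the cutoff $k \leq \delta r/8$ chosen exactly so that $|I|-2(k-1) \geq 3\delta r/4$ keeps the binomial coefficient $\binom{|I_k|}{2}$ comfortably above the good-pair count of $\delta^2 r^2/50$; no sharper concentration argument is needed.
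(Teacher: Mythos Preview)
Your proposal is correct and follows essentially the same approach as the paper: reveal the random matching pair-by-pair, show a uniform lower bound of $0.1\delta^2$ on the conditional probability that the next pair is bad, and multiply. The paper conditions on the full realization $\pi(1),\ldots,\pi(p-1)$ (which makes the uniformity of the next pair over the remaining indices immediate) and takes the cutoff $p \leq \delta r/2$, while you condition on the event ``previous pairs non-bad'' with the cutoff $k \leq \delta r/8$; your final paragraph correctly identifies that the worst-case accounting over realizations handles the conditioning, and the numerical constants all check out.
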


\begin{proof}
Let $p$ be any positive odd integer such that $p \leqs \delta r / 2$ and let $i_1, \dots, i_{p - 1} \in [r]$ be any $p - 1$ distinct elements of $[r]$. We will first show that conditioned on $\pi(1) = i_1, \dots, \pi(p - 1) = i_{p - 1}$, the probability that $\sigma_I$ violates a constraint induced by $\pi(p), \pi(p + 1)$ (i.e. in $\cN_{\pi(p)}(\pi(p + 1)) \cup \cN_{\pi(p + 1)}(\pi(p))$) is at least $0.1\delta^2$.

To see that this is true, let $I_{\geqs p} = I \setminus \{i_1, \dots, i_{p - 1}\}$. Since $|I| \geqs \delta r$, we have $|I_{\geqs p}| = |I| - p + 1 \geqs \delta r / 2 + 1$. Consider the partial assignment $\sigma_{\geqs p} = \sigma_I|_{U_{I_{\geqs p}}}$. Since $\val(\cL) \leqs 0.01\delta^2$, $\sigma_{\geqs p}$ can satisfy at most $0.01\delta^2 |E|$ constraints. From Lemma~\ref{lem:partition}, we have, for every $i \ne j \in I_{\geqs p}$, the number of constraints between $U_i$ and $U_j$ are at least $|E|/r^2$. Hence, there are at most $0.01\delta^2r^2$ pairs of $i < j \in I_{\geqs p}$ such that $\sigma_{\geqs p}$ does not violate any constraint between $U_i$ and $U_j$. In other words, there are at least $\binom{|I_{\geqs p}|}{2} - 0.01\delta^2r^2 \geqs 0.1\delta^2r^2$ pairs $i < j \in I_{\geqs p}$ such that $\sigma_{\geqs p}$ violates some constraints between $U_i$ and $U_j$. Now, if $\pi(p) = i$ and $\pi(p + 1) = j$ for some such pair $i, j$, then $\phi(\Snr)$ violates a constraint induced by $\pi(p), \pi(p + 1)$. Thus, we have
\begin{align} \label{eq:one-match}
\Pr\left[\sigma_I \text{ does not violate a constraint induced by } \pi(p), \pi(p + 1) \midv \bigwedge_{t=1}^{p - 1} \pi(t) = i_t\right] \leqs 1 - 0.1\delta^2.
\end{align}

Let $E_p$ denote the event that $\sigma_I$ does not violate any constraints induced by $\pi(p)$ and $\pi(p + 1)$. We can now bound the desired probability as follows.
\begin{align*}
\Pr\left[\sigma_I \text{ does not violate any constraint in } \bigcup_{i \in [r]}\cN_i(M(i))\right] &\leqs \Pr\left[\bigwedge_{\text{odd } p \in [\delta r / 2 + 1]} E_p\right] \\
&= \prod_{\text{odd } p \in [\delta r / 2 + 1]} \Pr\left[E_p \midv \bigwedge_{\text{odd } t \in [p - 1]} E_t \right] \\
(\text{From } (\ref{eq:one-match})) &\leqs \prod_{\text{odd } p \in [\delta r / 2 + 1]} (1 - 0.1\delta^2) \\
&\leqs (1 - 0.1\delta^2)^{\delta r / 4 - 1},
\end{align*}
which is at most $(1 - 0.1\delta^2)^{\delta r / 8}$ since $\delta \geqs 8/r$.
\end{proof}

We can now prove our main lemma.

\begin{proofof}[Lemma~\ref{lem:passing-bound-equiv-form}]
For a fixed $I \subseteq [r]$ of size at least $\delta r$ and a fixed $\sigma_I \in \Sigma^{U_I}$, Lemma~\ref{lem:passing-bound-single} tells us that the probability that $\sigma_I$ does not violate any constraint induced by a single matching is at most $(1 - 0.1\delta^2)^{\delta r/8}$. Since for each $H \subseteq \cY$ the construction picks $\ell$ matchings at random, the probability that $(I, \sigma_I)$ passes each $H$ is at most $(1 - 0.1\delta^2)^{\delta \ell r/8}$. Recall that we pick $\ell = 80/\delta^3$; this gives the following upper bound on the probability:
\begin{align}\label{eq:Pr[H]}
\Pr[(I, \sigma_I) \text{\;passes\;} H] \leq 
(1 - 0.1\delta^2)^{\delta \ell r/8} = (1 - 0.1\delta^2)^{10r/\delta^2} &\leqs \left(\frac{1}{1 + 0.1\delta^2}\right)^{10r/\delta^2} \leqs 2^{-r}
\end{align}
where the last inequality comes from Bernoulli's inequality.

Inequality~\eqref{eq:Pr[H]} implies that the expected number of $H$'s that $(I, \sigma_I)$ passes is less than $1$.
Since the matchings $M_H$ are independent for all $H$'s, we can apply Chernoff bound which implies that
\begin{align*}
\Pr[|\cH(I, \sigma_I)| \geqs 100n \log |\Sigma|] \leqs 2^{-10n \log |\Sigma|} = |\Sigma|^{-10 n}.
\end{align*}

Finally, note that there are at most $2^r |\Sigma|^n$ different $(I, \sigma_I)$'s. By union bound, we have
\begin{align*}
\Pr\Big[\exists I \subseteq [r], \sigma_I \in \Sigma^{U_I} \;\; \text{s.t.}\;\; |I| \geqs \delta r \;\;\text{AND}\;\; |\cH(I, \sigma_I)| \geqs 100n \log |\Sigma| \Big] 
& \leqs \left(2^r |\Sigma|^n\right)\left(|\Sigma|^{-10n}\right) \\
& \leqs |\Sigma|^{-8n},
\end{align*}
which concludes the proof.
\end{proofof}

\subsubsection{Putting Things Together} \label{subsubsec:soundness-comb}

\begin{proofof}[Lemma~\ref{lem:soundness-vc}]
From Lemma~\ref{lem:passing-bound}, every non-repetitive set $\Snr$ of size at least $\delta r$, $|\cH(\Snr)| \leqs 100 n \log |\Sigma|$. Conditioned on this event happening, we will show that $\VC(\cU, \cC) \leqs (1 + \delta) r$. 

Consider any shattered set $S \subseteq \cU$. Lemma~\ref{lem:many-tests} implies that there is a non-repetitive set $\Snr$ of size $|I(S)|$ such that $|\cH(\Snr)| \geqs 2^{|S| - |I(S)|}$. Let us consider two cases:
\begin{enumerate}
\item $|I(S)| \leqs \delta r$. Since $\cH(\Snr) \subseteq \cP(\cY)$, we have $|S| - |I(S)| \leqs |\cY| = r$. This implies that $|S| \leqs (1 + \delta)r$.
\item $|I(S)| > \delta r$. From our assumption, $|\cH(\Snr)| \leqs 100 n \log |\Sigma|$. Thus, $|S| \leqs |I(S)| + \log(100 n \log |\Sigma|) \leqs (1 + \delta)r$ where the second inequality comes from our assumption that $\delta \geqs \log(1000n\log |\Sigma|) / r$.
\end{enumerate}
Hence, $\VC(\cU, \cC) \leqs (1 + \delta)r$ with high probability.
\end{proofof}

\section{Inapproximability of Littlestone's Dimension} \label{sec:littlestone}

We next proceed to Littlestone's Dimension. The main theorem of this section is stated below. Again, note that this theorem and Theorem~\ref{thm:mr-pcp} implies Theorem~\ref{thm:LS}.

\begin{theorem} \label{thm:ls-red}
There exists $\varepsilon > 0$ such that there is a randomized reduction from any bi-regular Label Cover instance $\cL = (A, B, E, \Sigma, \{\pi_e\}_{e \in E})$ with $|\Sigma| = O(1)$ to a ground set $\cU$ and a concept classes $\cC$ such that, if $n \triangleq |A| + |B|, r \triangleq \sqrt{n} / \log n$ and $k \triangleq 10^{10}|E|\log |\Sigma| / r^2$, then the following conditions hold for every sufficiently large $n$.
\begin{itemize}
\item (Size) The reduction runs in time $2^{rk} \cdot |\Sigma|^{O(|E|/r)}$ and $|\cC|, |\cU| \leqs 2^{rk} \cdot |\Sigma|^{O(|E|/r)}$.
\item (Completeness) If $\cL$ is satisfiable, then $\LS(\cC, \cU) \geqs 2rk$.
\item (Soundness) If $\val(\cL) \leqs 0.001$, then $\LS(\cC, \cU) \leqs (2 - \varepsilon)rk$ with high probability.
\end{itemize}
\end{theorem}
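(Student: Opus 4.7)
The plan is to generalize the VC reduction of Figure~\ref{fig:red-vc} so that the test-selection side has $rk$ elements rather than $r$ (allowing for a depth-$2rk$ mistake tree in the completeness case), while keeping the block partition $U_1,\dots,U_r$ and assignment elements $\cX = \{x_{i,\sigma_i}\}$ essentially as before. Set $\cY = \{y_{i,t} : i\in[r],\, t\in[k]\}$ and attach to each $y_{i,t}$ an \emph{independently} sampled random matching of $[r]$; a concept $C_{I,H,\sigma}$ (indexed by $I\subseteq [r]$, $H\subseteq \cY$, and $\sigma\in \Sigma^V$) exists exactly when $\sigma$ passes every matching indexed by the elements of $H$, contains $x_{i,\sigma_i}$ iff $i\in I$ and $\sigma_i$ agrees with $\sigma$ on the tested vertices of $U_i$, and contains $y_{i,t}$ iff $y_{i,t}\in H$.

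\paragraph{Completeness.}
Given a satisfying assignment $\sigma^* \in \Sigma^V$ for $\cL$, I would exhibit a depth-$2rk$ mistake tree by laying out the top $rk$ levels in $r$ blocks of $k$ levels each (each block pinning down $\sigma^*|_{U_i}$ via a sequence of $k$ assignment-element queries whose $2^k$ answer-patterns represent distinct commitments consistent with $\sigma^*|_{U_i}$), and the bottom $rk$ levels by enumerating the $rk$ elements of $\cY$ in a fixed order. Every leaf is consistent with the concept $C_{[r],H,\sigma^*}$, where $H \subseteq \cY$ is read off from the bottom portion of the path; consistency holds because $\sigma^*$ satisfies every constraint and hence passes every test.

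\paragraph{Soundness.}
I would invoke Fact~\ref{fact:ls-union} to split
\[
\LS(\cC,\cU) \;\leqs\; \LS(\cC,\cX) + \LS(\cC,\cY),
\]
bound $\LS(\cC,\cX)\leqs rk$ by counting distinct $\cX$-restrictions of concepts (each is determined by $(I,\sigma|_{\text{tested region}})$, and an output-size calculation analogous to that in the VC section caps the count by $2^{rk}$, so Fact~\ref{fact:dim-comp} applies), and bound $\LS(\cC,\cY)\leqs (1-\varepsilon)rk$ via Fact~\ref{fact:dim-comp} by counting the \emph{usable} subsets $H\subseteq\cY$, i.e.\ those for which some $C_{I,H,\sigma}\in\cC$ exists. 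An $H$ is usable iff some assignment passes every matching indexed by $H$; adapting Lemma~\ref{lem:passing-bound-single}, when $\val(\cL)\leqs 0.001$ a fixed $(I,\sigma|_{U_I})$ with $|I|\geqs \delta r$ passes a single random matching with probability at most $1-\Omega(1)$. Because the $rk$ matchings attached to $\cY$ are independent, a Chernoff-plus-union-bound argument (with the union crucially taken over $(I,\sigma|_{U_I})$, not all of $\Sigma^V$) shows that w.h.p.\ every such $(I,\sigma_I)$ passes at most $(1-\varepsilon')rk$ of the matchings; usable $H$'s coming from $|I|<\delta r$ are handled by the trivial count. Together these yield at most $2^{(1-\varepsilon)rk}$ usable $H$'s, so $\LS(\cC,\cY)\leqs (1-\varepsilon)rk$, and summing gives $\LS(\cC,\cU)\leqs (2-\varepsilon)rk$.

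\paragraph{Main obstacle.}
The hard part will be the $\LS(\cC,\cY)$ bound. Unlike the VC case, where Lemma~\ref{lem:many-tests} lets us decode a shattered set into a single non-repetitive assignment, here we must directly control the cardinality of the $\cY$-restriction of the concept class; a naive union bound over all of $\Sigma^V$ fails because $|\Sigma|^n$ is much larger than $2^{rk}$. The plan is to localize the relevant bits of $\sigma$ to the tested regions of each matching so that the effective number of $\sigma$-restrictions per matching is small enough for the Chernoff savings to beat the entropy cost of enumerating them, and to tune the choice $k \asymp |E|\log|\Sigma|/r^2$ so that the surviving gap is a constant fraction of $rk$ rather than merely $o(rk)$.
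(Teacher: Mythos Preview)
Your completeness argument does not go through, and the failure is structural, not a matter of missing details. You fix the leaf concept to be $C_{[r],H,\sigma^*}$, i.e.\ with $I=[r]$ and $\sigma=\sigma^*$ both fixed, so that only $H$ varies from leaf to leaf. But then every such concept has \emph{the same} restriction to $\cX$: indeed $x_{i,\sigma_i}\in C_{[r],H,\sigma^*}$ iff $\sigma_i$ agrees with $\sigma^*$ on the $H$-tested part of $U_i$, which for $\sigma_i=\sigma^*|_{U_i}$ is always true and in any case does not let you branch freely. More generally, with your element set $\cX=\{x_{i,\sigma_i}\}$ and $I\subseteq[r]$, there are only $r$ ``free'' bits coming from $I$; any further branching in the top of the tree must leak bits of $H$, which are then no longer available for the bottom $rk$ levels over $\cY$. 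A short counting argument shows your construction supports depth at most $rk+r$, not $2rk$. The sentence about ``$2^k$ answer-patterns representing distinct commitments'' has no realization here.

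This is exactly the obstacle the paper works to overcome, and its fix is quite different from yours. The paper makes $k$ copies of each assignment element, $x_{i,\sigma_i,j}$, and takes $I\subseteq[r]\times[k]$, so that the assignment side genuinely carries $rk$ bits. Crucially, the test-selection elements are indexed by $I$ as well, $y_{I,i,j}$, with $y_{I',i,j}\in C_{I,H,\sigma}$ only when $I'=I$; this forces any $y$-rooted subtree to be shallow (Lemma~\ref{lem:ls-assign-not-high}), so in a deep mistake tree the top $\approx rk$ levels must consist of assignment elements. The soundness then proceeds not by the global split $\LS(\cC,\cX)+\LS(\cC,\cY)$ you propose (which the paper cannot afford, since the total concept count is $\gg 2^{rk}$), but by locating along the tree a node whose path already encodes a large non-repetitive assignment, and bounding the depth of the subtree below it via a tailored learner (Lemmas~\ref{lem:good-leaves-comp}--\ref{lem:non-rep-shallow-y}); the threshold projection $\tau$ is introduced precisely so that the union bound over assignments is over $2^r$ seeds rather than $2^{rk}$.

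Your soundness outline has a related gap: the bound $\LS(\cC,\cX)\le rk$ ``by counting restrictions'' does not follow, because in your construction the tested region of $U_i$ varies with $H$, and summing over all $H\subseteq\cY$ blows the restriction count up to $|\Sigma|^{\Theta(k|E|)}\gg 2^{rk}$. (Your $\LS(\cC,\cY)$ argument is closer to workable, but note that ``usable $H$'s have size $\le(1-\varepsilon')rk$'' does \emph{not} give $\le 2^{(1-\varepsilon)rk}$ of them; you need the pass-set sizes to be $o(rk)$, which requires a sharper use of Lemma~\ref{lem:passing-bound-single} and the specific choice of $k$.) In short, the construction needs to be changed along the lines above before either half of the analysis can be made to work.
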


\subsection{Why the VC Dimension Reduction Fails for Littlestone's Dimension}

It is tempting to think that, since our reduction from the previous section works for VC Dimension, it may also work for Littlestone's Dimension. In fact, thanks to Fact~\ref{fact:dim-comp}, completeness for that reduction even translates for free to Littlestone's Dimension. Alas, the soundness property does not hold. To see this, let us build a depth-$2r$ mistake tree for $\cC, \cU$, even when $\val(\cL)$ is small, as follows.
\begin{itemize}
\item We assign the test-selection elements to the first $r$ levels of the tree, one element per level. More specifically, for each $s \in \{0, 1\}^{< r}$, we assign $y_{|s| + 1}$ to $s$.
\item For every string $s \in \{0, 1\}^r$, the previous step of the construction gives us a subset of $\cY$ corresponding to the path from root to $s$; this subset is simply $H_s = \{y_i \in \cY \mid s_i = 1\}$. Let $T_{H_s}$ denote the set of vertices tested by this seed $H_s$. Let $\phi_s \in \Sigma^V$ denote an assignment that satisfies all the constraints in $T_{H_s}$. Note that, since $T_{H_s}$ is of small size (only $\tO(\sqrt{n})$), even if $\val(\cL)$ is small, $\phi_s$ is still likely to exist (and we can decide whether it exists or not in time $2^{\tilde O(\sqrt{n})}$).

We then construct the subtree rooted at $s$ that corresponds to $\phi_s$ by assigning each level of the subtree $x_{i, \phi_s|_{U_i}}$. Specifically, for each $t \in \{0, 1\}^{\geqs r}$, we assign $x_{|t| - r + 1, \phi_{t_{\leqs r}}|_{U_{|t| - r + 1}}}$ to node $t$ of the tree.
\end{itemize}

It is not hard to see that the constructed tree is indeed a valid mistake tree. This is because the path from root to each leaf $l \in \{0, 1\}^{2r}$ agrees with $C_{I(l), H_{l_{\leqs r}}, \phi_{l \leqs r}}$ (where $I(l) = \{i \in [r] \mid l_i = 1\}$).

\subsection{The Final Reduction}

The above counterexample demonstrates the main difference between the two dimensions: order does not matter in VC Dimension, but it does in Littlestone's Dimension. By moving the test-selection elements up the tree, the tests are chosen before the assignments, which allows an adversary to ``cheat'' by picking different assignments for different tests. We would like to prevent this, i.e., we would like to make sure that, in the mistake tree, the upper levels of the tree are occupied with the assignment elements whereas the lower levels are assigned test-selection elements. As in the VC Dimension argument, our hope here is that, given such a tree, we should be able to decode an assignment that passes tests on many different tests. Indeed we will tailor our construction to achieve such property.

Recall that, if we use the same reduction as VC Dimension, then, in the completeness case, we can construct a mistake tree in which the first $r$ layers consist solely of assignment elements and the rest of the layers consist of only test-selection elements. Observe that there is no need for different nodes on the $r$-th layer to have subtrees composed of the same set of elements; the tree would still be valid if we make each test-selection element only work with a specific $s \in \{0, 1\}^r$ and create concepts accordingly. In other words, we can modify our construction so that our test-selection elements are $\cY = \{y_{I, i} \mid I \subseteq [r], i \in [r]\}$ and the concept class is $\{C_{I, H, \sigma_H} \mid I \subseteq [r], H \subseteq \cY, \sigma_H \in \Sigma^{T_H}\}$ where the condition that an assignment element lies in $C_{I, H, \sigma_H}$ is the same as in the VC Dimension reduction, whereas for $y_{I', i}$ to be in $C_{I, H, \sigma_H}$, we require not only that $i \in H$ but also that $I = I'$. Intuitively, this should help us, since each $y_{I, i}$ is now only in a small fraction ($\leqs 2^{-r}$) of concepts; hence, one would hope that any subtree rooted at any $y_{I, i}$ cannot be too deep, which would indeed implies that the test-selection elements cannot appear in the first few layers of the tree.

Alas, for this modified reduction, it is not true that a subtree rooted at any $y_{I, i}$ has small depth; specifically, we can bound the depth of a subtree $y_{I, i}$ by the log of the number of concepts containing $y_{I, i}$ plus one (for the first layer). Now, note that $y_{I, i} \in C_{I', H, \sigma_H}$ means that $I' = I$ and $i \in H$, but there can be still as many as $2^{r - 1} \cdot |\Sigma|^{|T_H|} = |\Sigma|^{O(|E|/r)}$ such concepts. This gives an upper bound of $r + O(|E|\log|\Sigma|/r)$ on the depth of the subtree rooted at $y_{I, i}$. However, $|E|\log|\Sigma|/r = \Theta(\sqrt{n}\log n) = \omega(r)$; this bound is meaningless here since, even in the completeness case, the depth of the mistake tree is only $2r$.

Fortunately, this bound is not useless after all: if we can keep this bound but make the intended tree depth much larger than $|E|\log|\Sigma|/r$, then the bound will indeed imply that no $y_{I, i}$-rooted tree is deep. To this end, our reduction will have one more parameter $k = \Theta(|E|\log|\Sigma|/r)$ where $\Theta(\cdot)$ hides a large constant and the intended tree will have depth $2rk$ in the completeness case; the top half of the tree (first $rk$ layers) will again consist of assignment elements and the rest of the tree composes of the test-selection elements. The rough idea is to make $k$ ``copies'' of each element: the assignment elements will now be $\{x_{i, \sigma_i, j} \mid i \in [r], \sigma_i \in \Sigma^{U_i}, j \in [k]\}$ and the test-selection elements will be $\{y_{I, i, j} \mid I \subseteq [r] \times [k], j \in [k]\}$. The concept class can then be defined as $\{C_{I, H, \sigma_H} \mid I \subseteq [r] \times [k], H \subseteq [r] \times [k], \sigma_H \in \Sigma^{T_H}\}$ naturally, i.e., $H$ is used as the seed to pick the test set $T_H$, $y_{I', i, j} \in C_{I, H, \sigma_H}$ iff $I' = I$ and $(i, j) \in H$ whereas $x_{i, \sigma_i, j} \in C_{I, H, \sigma_H}$ iff $(i, j) \in I$ and $\sigma_i|_{(I, \sigma_I)} = \sigma_H|_{(I, \sigma_I)}$. For this concept class, we can again bound the depth of $y_{I, i}$-rooted tree to be $rk + O(|E|\log|\Sigma|/r)$; this time, however, $rk$ is much larger than $|E|\log|\Sigma|/r$, so this bound is no more than, say, $1.001rk$. This is indeed the desired bound, since this means that, for any depth-$1.999rk$ mistake tree, the first $0.998rk$ layers must consist solely of assignment elements.

Unfortunately, the introduction of copies in turn introduces another technical challenge: it is not true any more that a partial assignment to a large set only passes a few tests w.h.p. (i.e. an analogue of Lemma~\ref{lem:passing-bound-equiv-form} does not hold). By Inequality~\eqref{eq:Pr[H]}, each $H$ is passed with probability at most $2^{-r}$, but now we want to take a union bound there are $2^{rk} \gg 2^r$ different $H$'s. 
To circumvent this, we will define a map $\tau: \cP([r] \times [k]) \to \cP([r])$ and use $\tau(H)$ to select the test instead of $H$ itself. The map $\tau$ we use in the construction is the \emph{threshold projection} where $i$ is included in $H$ if and only if, for at least half of $j \in [k]$, $H$ contains $(i, j)$. To motivate our choice of $\tau$, recall that our overall proof approach is to first find a node that corresponds to an assignment to a large subset of the Label Cover instance; then argue that it can pass only a few tests, which we hope would imply that the subtree rooted there cannot be too deep. For this implication to be true, we need the following to also hold: for any small subset $\cH \subseteq \cP([r])$ of $\tau(H)$'s, we have that $\LS(\tau^{-1}(\cH), [r] \times [k])$ is small. This property indeed holds for our choice of $\tau$ (see Lemma~\ref{lem:non-rep-shallow-y}).

With all the moving parts explained, we state the full reduction formally in Figure~\ref{fig:red-ls}.

\begin{figure}[h!]
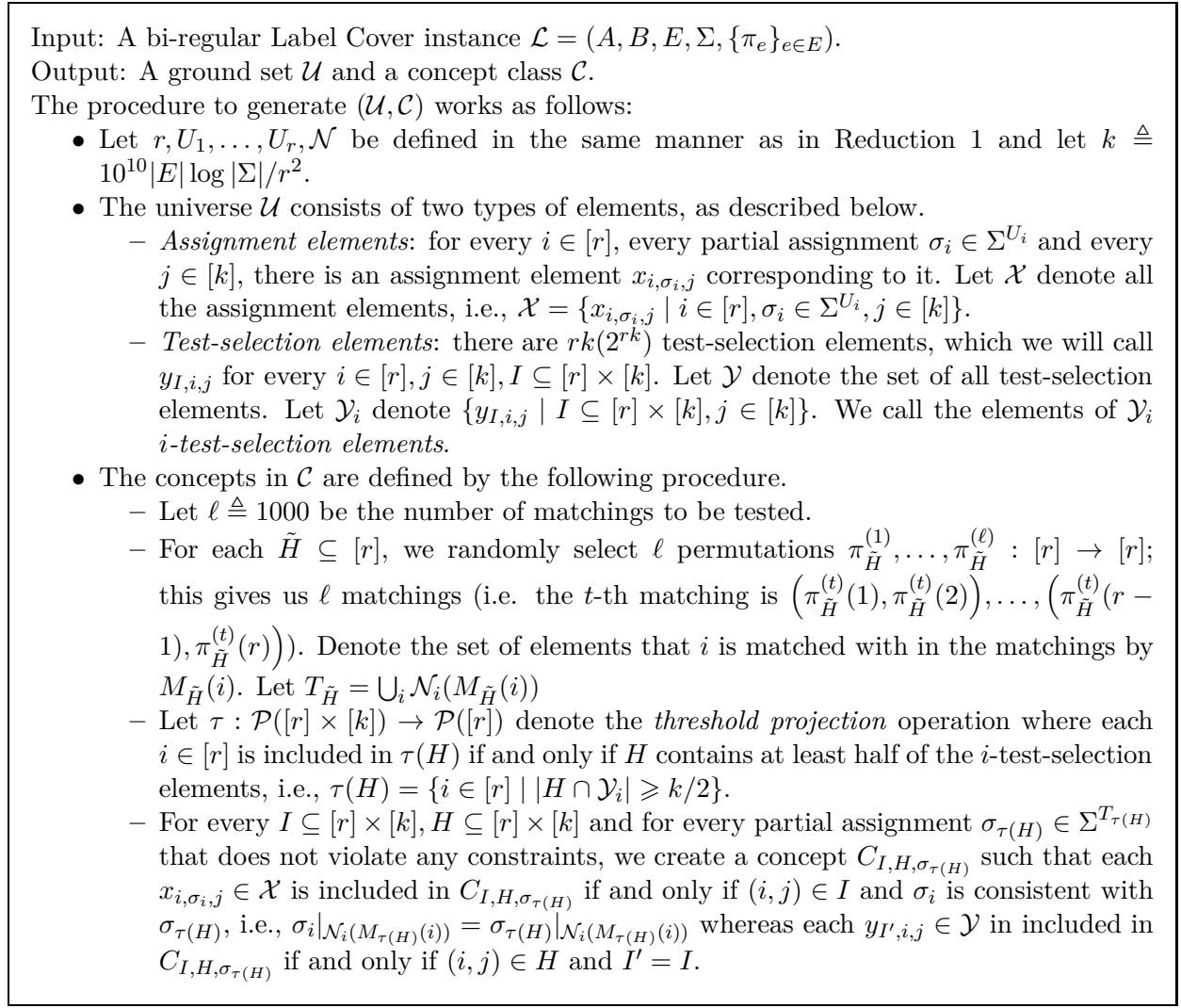

\begin{framed}
Input: A bi-regular Label Cover instance $\cL = (A, B, E, \Sigma, \{\pi_e\}_{e \in E})$. \\
Output: A ground set $\cU$ and a concept class $\cC$. \\
The procedure to generate $(\cU, \cC)$ works as follows:
\begin{itemize}
\item Let $r, U_1, \dots, U_r, \cN$ be defined in the same manner as in Reduction~\ref{fig:red-vc} and let $k \triangleq 10^{10}|E|\log |\Sigma| / r^2$.
\item The universe $\cU$ consists of two types of elements, as described below.
\begin{itemize}
\item \emph{Assignment elements}: for every $i \in [r]$, every partial assignment $\sigma_i \in \Sigma^{U_i}$ and every $j \in [k]$, there is an assignment element $x_{i, \sigma_i, j}$ corresponding to it. Let $\cX$ denote all the assignment elements, i.e., $\cX = \{x_{i, \sigma_i, j} \mid i \in [r], \sigma_i \in \Sigma^{U_i}, j \in [k]\}$.
\item \emph{Test-selection elements}: there are $rk(2^{rk})$ test-selection elements, which we will call $y_{I, i, j}$ for every $i \in [r], j \in [k], I \subseteq [r] \times [k]$. Let $\cY$ denote the set of all test-selection elements. Let $\cY_i$ denote $\{y_{I, i, j} \mid I \subseteq [r] \times [k], j \in [k]\}$. We call the elements of $\cY_i$ \emph{$i$-test-selection elements}.
\end{itemize}
\item The concepts in $\cC$ are defined by the following procedure.
\begin{itemize}
\item Let $\ell \triangleq 1000$ be the number of matchings to be tested.
\item For each $\tH \subseteq [r]$, we randomly select $\ell$ permutations $\pi^{(1)}_{\tH}, \dots, \pi^{(\ell)}_{\tH}: [r] \rightarrow [r]$; this gives us $\ell$ matchings (i.e. the $t$-th matching is $\Big(\pi^{(t)}_{\tH}(1), \pi^{(t)}_{\tH}(2)\Big), \dots, \Big(\pi^{(t)}_{\tH}(r - 1), \pi^{(t)}_{\tH}(r)\Big)$). Denote the set of elements that $i$ is matched with in the matchings by $M_{\tH}(i)$. Let $T_{\tH} = \bigcup_i \cN_i(M_{\tH}(i))$
\item Let $\tau: \cP([r] \times [k]) \rightarrow \cP([r])$ denote the \emph{threshold projection} operation where each $i \in [r]$ is included in $\tau(H)$ if and only if $H$ contains at least half of the $i$-test-selection elements, i.e., $\tau(H) = \{i \in [r]  \mid |H \cap \cY_i| \geqs k / 2\}$.
\item For every $I \subseteq [r] \times [k], H \subseteq [r] \times [k]$ and for every partial assignment $\sigma_{\tau(H)} \in \Sigma^{T_{\tau(H)}}$ that does not violate any constraints, we create a concept $C_{I, H, \sigma_{\tau(H)}}$ such that each $x_{i, \sigma_i, j} \in \cX$ is included in $C_{I, H, \sigma_{\tau(H)}}$ if and only if $(i, j) \in I$ and $\sigma_i$ is consistent with $\sigma_{\tau(H)}$, i.e., $\sigma_i|_{\cN_i(M_{\tau(H)}(i))} = \sigma_{\tau(H)}|_{\cN_i(M_{\tau(H)}(i))}$ whereas each $y_{I', i, j} \in \cY$ in included in $C_{I, H, \sigma_{\tau(H)}}$ if and only if $(i, j) \in H$ and $I' = I$.
\end{itemize}
\end{itemize}
\end{framed}
\caption{Reduction from Label Cover to Littlestone's Dimension}
\label{fig:red-ls}
\end{figure}

Similar to our VC Dimension proof, we will use the following notation:
\begin{itemize}
\item For every $i \in [r]$, let $\cX_i \triangleq \{x_{i, \sigma_i, j} \mid \sigma_i \in \Sigma^{U_i}, j \in [k]\}$; we refer to these elements as the $i$-assignment elements. Moreover, for every $(i, j) \in [r] \times [k]$, let $\cX_{i, j} \triangleq \{x_{i, \sigma_i, j} \mid \sigma_i \in \Sigma^{U_i}\}$; we refer to these elements as the $(i, j)$-assignment elements.
\item For every $S \subseteq \cU$, let $I(S) = \{i \in [r] \mid S \cap \cX_i \ne \emptyset\}$ and $IJ(S) = \{(i, j) \in [r] \times [k] \mid S \cap \cX_{i, j} \ne \emptyset\}$.
\item A set $S \subseteq \cX$ is \emph{non-repetitive} if $|S \cap \cX_{i, j}| \leqs 1$ for all $(i, j) \in [r] \times [k]$.
\item We say that $S$ \emph{passes} $\tH$ if the following two conditions hold:
\begin{itemize}
\item For every $i \in [r]$ such that $S \cap \cX_{i} \ne \emptyset$, all $i$-assignment elements of $S$ are consistent on $T_{\tH}|_{U_i}$, i.e., for every $(i, \sigma_i, j), (i, \sigma'_i, j') \in S$, we have $\sigma_i|_{U_i} = \sigma'_i|_{U_i}$.
\item The canonically induced assignment on $T_{\tH}$ does not violate any constraint (note that the previous condition implies that such assignment is unique).
\end{itemize}
We use $\cH(S)$ to denote the collection of all seeds $\tH \subseteq [r]$ that $S$ passes.
\end{itemize}

We also use the following notation for mistake trees:
\begin{itemize}
\item For any subset $S \subseteq \cU$ and any function $\rho: S \to \{0, 1\}$, let $\cC[\rho] \triangleq \{C \in \cC \mid \forall a \in S, a \in C \Leftrightarrow \rho(a) = 1\}$ be the collections of all concept that agree with $\rho$ on $S$. We sometimes abuse the notation and write $\cC[S]$ to denote the collection of all the concepts that contain $S$, i.e., $\cC[S] = \{C \in \cC \mid S \subseteq C\}$.
\item For any binary string $s$, let $\pre(s) \triangleq \{\emptyset, s_{\leqs 1}, \dots, s_{\leqs |s| - 1}\}$ denote the set of all proper prefixes of $s$. 
\item For any depth-$d$ mistake tree $\cT$, let $v_{\cT, s}$ denote the element assigned to the node $s \in \{0, 1\}^{\leqs d}$, and let $P_{\cT, s} \triangleq \{v_{\cT, s'} \mid s' \in \pre(s)\}$ denote the set of all elements appearing from the path from root to $s$ (excluding $s$ itself). Moreover, let $\rho_{\cT, s}: P_{\cT, s} \to \{0, 1\}$ be the function corresponding to the path from root to $s$, i.e., $\rho_{\cT, s}(v_{\cT, s'}) = s_{|s'| + 1}$ for every $s' \in \pre(s)$.
\end{itemize}

{\bf Output Size of the Reduction} The output size of the reduction follows immediately from a similar argument as in the VC Dimension reduction. The only different here is that there are $2^{rk}$ choices for $I$ and $H$, instead of $2^r$ choices as in the previous construction.

{\bf Completeness.} If $\cL$ has a satisfying assignment $\sigma^* \in \Sigma^V$, we can construct a depth-$rk$ mistake tree $\cT$ as follows. For $i \in [r], j \in [k]$, we assign $x_{i, \sigma^*|_{U_i}, j}$ to every node in the $((i - 1)k + j)$-th layer of $\cT$. Note that we have so far assigned every node in the first $rk$ layers. For the rest of the vertices $s$'s, if $s$ lies in layer $rk + (i - 1)k + j$, then we assign $y_{I(\rho^{-1}_{\cT, s}(1)), i, j}$ to it. It is clear that, for a leaf $s \in \{0, 1\}^{rk}$, the concept $C_{I(\rho^{-1}_{\cT, s}(1)), H_{\cT, s}, \sigma^*}$ agrees with the path from root to $s$ where $H_{\cT, s}$ is defined as $\{(i, j) \in [r] \times [k] \mid y_{I(\rho^{-1}_{\cT, s}(1)), i, j} \in \rho^{-1}_{\cT, s}(1)\}$. Hence, $\LS(\cC, \cU) \geqs 2rk$.

\subsection{Soundness}

Next, we will prove the soundness of our reduction, stated more precisely below. For brevity, we will assume throughout this subsection that $r$ is sufficiently large, and leave it out of the lemmas' statements. Note that this lemma, together with completeness and output size properties we argue above, implies Theorem~\ref{thm:ls-red} with $\varepsilon = 0.001$. 

\begin{lemma} \label{lem:soundness-ls}
Let $(\cC, \cU)$ be the output from the reduction in Figure~\ref{fig:red-ls} on input $\cL$. If $\val(\cL) \leqs 0.001$, then $\LS(\cC, \cU) \leqs 1.999rk$ with high probability.
\end{lemma}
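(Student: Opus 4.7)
The plan is to proceed by contradiction: I assume a depth-$1.999\,rk$ mistake tree $\cT$ exists for $(\cC, \cU)$ and derive a contradiction. The first step is to show that any subtree rooted at a test-selection element must be shallow. If $y_{I,i,j} \in \cY$ sits at the root of a depth-$d$ mistake tree for some $\cC' \subseteq \cC$, the ``$+$''-branch consists of concepts in $\cC'$ containing $y_{I,i,j}$; these are precisely the $C_{I', H, \sigma_{\tau(H)}}$ with $I' = I$ and $(i,j) \in H$, totaling at most $2^{rk-1} \cdot |\Sigma|^{O(|E|/r)}$. By Fact~\ref{fact:dim-comp}, the ``$+$''-branch has Littlestone's dimension at most $rk - 1 + O(|E|\log|\Sigma|/r)$. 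Since both branches of a depth-$d$ mistake tree must realize depth $d - 1$, we obtain $d \leqs 1 + \LS((\cC')^+, \cU) \leqs rk + O(|E|\log|\Sigma|/r)$. With $k = 10^{10}|E|\log|\Sigma|/r^2$, this is at most $(1 + 10^{-10})rk < 1.0005\,rk$. Applied to every node of $\cT$, this forces every node at depth $< 0.998\,rk$ to be labeled with an assignment element, lest its subtree fall short of the required depth $1.999\,rk - 0.998\,rk = 1.001\,rk$.

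Next, I focus on the node $s^*$ at depth $\lceil 0.998\,rk \rceil$ reached by always taking the $\rho = 1$ child. The path $P = P_{\cT, s^*}$ has $0.998\,rk$ assignment elements, all labeled with $\rho = 1$, so every $C \in \cC[\rho_{\cT, s^*}]$ contains $P$. Let $\widetilde{P} \subseteq P$ be a maximal non-repetitive subset; it canonically induces a partial assignment $\phi(\widetilde{P}) \in \Sigma^{U_{I(\widetilde{P})}}$. The main case is when $|I(\widetilde{P})| \geqs \delta r$ for some $\delta$ satisfying $\delta^2/100 \geqs \val(\cL) = 0.001$ (e.g.\ $\delta = 1/2$, for which the proof of Lemma~\ref{lem:passing-bound-equiv-form} carries over to the Littlestone parameter $\ell = 1000$). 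In this regime, Lemma~\ref{lem:passing-bound-equiv-form} yields $|\cH(\widetilde{P})| \leqs 100\,n\log|\Sigma|$ w.h.p., and every $C_{I, H, \sigma_{\tau(H)}} \in \cC[\rho_{\cT, s^*}]$ must satisfy $\tau(H) \in \cH(\widetilde{P})$: otherwise $\widetilde{P}$ would violate a constraint in $T_{\tau(H)}$, contradicting $\widetilde{P} \subseteq C$.

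The final step is to bound the subtree depth $\LS(\cC[\rho_{\cT, s^*}], \cU \setminus P) \geqs 1.001\,rk$ from above. By Fact~\ref{fact:ls-union},
\[
\LS(\cC[\rho_{\cT, s^*}], \cU \setminus P) \leqs \LS(\cC[\rho_{\cT, s^*}], \cX \setminus P) + \LS(\cC[\rho_{\cT, s^*}], \cY).
\]
For the $\cX$-term, I count distinct assignment signatures of concepts in $\cC[\rho_{\cT, s^*}]$, each determined by $(I, \tau(H), \sigma_{\tau(H)})$ subject to $IJ(P) \subseteq I$, $\tau(H) \in \cH(\widetilde{P})$, and $\sigma_{\tau(H)}$ fixed on the portion already determined by $\widetilde{P}$. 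For the $\cY$-term, I appeal to the auxiliary Lemma~\ref{lem:non-rep-shallow-y}, which exploits the threshold structure of $\tau$: each row of $\tau^{-1}(\{\tH\})$ is a threshold concept class over $[k]$ of Littlestone's dimension $k/2$ (since the all-no leaf of any mistake tree of depth $d > k/2$ leaves fewer than $k/2$ elements, violating the threshold), so Fact~\ref{fact:ls-union} over the $r$ rows gives $\LS(\tau^{-1}(\{\tH\}), [r] \times [k]) \leqs rk/2$, with the extension to $\tau^{-1}(\cH(\widetilde{P}))$ contributing only $O(\log|\cH(\widetilde{P})|)$ more. Summing, the total is strictly less than $1.001\,rk$, the desired contradiction.

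The main obstacle will be the complementary case $|I(\widetilde{P})| < \delta r$, which I expect to require the most care. In this regime, $|\widetilde{P}| = |IJ(P)| \leqs |I(\widetilde{P})| \cdot k < \delta rk$, while $|P| = 0.998\,rk$ forces many ``repetitions''---pairs $x_{i, \sigma, j}, x_{i, \sigma', j} \in P$ with $\sigma \neq \sigma'$. For all such elements to coexist in a single concept $C_{I, H, \sigma_{\tau(H)}}$, the restrictions $\sigma|_{\cN_i(M_{\tau(H)}(i))}$ must agree, which I expect to directly restrict the set of compatible $\tau(H)$'s (and hence $|\cC[\rho_{\cT, s^*}]|$) without invoking Lemma~\ref{lem:passing-bound-equiv-form}. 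A second delicate point, even within the main case, is ensuring $|IJ(P)|$ is large enough (not merely $|I(\widetilde{P})|$) so that the count of $I$'s with $IJ(P) \subseteq I$ is well controlled; this may require a more careful choice of $s^*$ or a sharpening of the $\cX$-signature count. Handling these subtleties rigorously is likely the most intricate component of the proof.
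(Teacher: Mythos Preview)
Your overall architecture matches the paper's: first bound the depth below any test-selection element, then argue that the top of the tree must encode a large non-repetitive assignment set, then combine the few-tests-passed lemma with separate bounds on $\LS(\cC[\Snr],\cX)$ and $\LS(\cC[\Snr],\cY)$ to cap the remaining subtree. The genuine gap is in how you locate the node $s^*$.

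Choosing $s^*$ as the all-ones path gives you no control over $|IJ(P)|$. Nothing prevents an adversarial mistake tree from labeling many consecutive nodes along the rightmost path with elements $x_{i,\sigma,j},x_{i,\sigma',j},\dots$ sharing the same slot $(i,j)$: for any concept $C_{I,H,\sigma_{\tau(H)}}$ with $(i,j)\in I$, all $\sigma$'s that agree on $\cN_i(M_{\tau(H)}(i))$ lie in $C$ simultaneously, so the all-ones path can absorb arbitrarily many such repetitions while the $0$-branches are witnessed by concepts with different $\tau(H)$. Once $|IJ(P)|$ is unconstrained, your $\cX$-signature count collapses (the number of admissible $I$'s is $2^{rk-|IJ(P)|}$), and your ``complementary case'' $|I(\widetilde P)|<\delta r$ is not a corner case but the whole difficulty. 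Your proposed workaround---that many repetitions ``directly restrict the set of compatible $\tau(H)$'s''---does not hold: two assignments agreeing on $\cN_i(M_{\tau(H)}(i))$ is a very weak condition on $\tau(H)$, and certainly does not cut the number of compatible seeds below $2^{r}$ in general.

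The paper's missing ingredient is Lemma~\ref{lem:good-leaves-comp}. Instead of fixing a path, it defines the \emph{good leaves} (those where the first occurrence of every $(i,j)$ along the path is on a $1$-edge), proves via an analogue of Lemma~\ref{lem:soundness-mult} that distinct good leaves force distinct $\tau(H)$'s (hence at most $2^r$ good leaves), and then runs a weight-propagation double-counting to extract one good leaf with $|IJ(\rho^{-1}_{\cT,s}(1))|\geqs d-r$. With this $s^*$ one gets $|IJ|\geqs 0.996\,rk$, hence $|I|\geqs 0.996\,r$ automatically, and your ``main obstacle'' case simply never arises. The same lemma, applied relative to $\Snr$, also supplies the bound $\LS(\cC[\Snr],\cX)\leqs rk-|\Snr|+r$ you need for the $\cX$-term (Corollary~\ref{cor:non-rep-shallow-x}).

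A smaller loose end: your sketch of the $\cY$-bound treats the universe as $[r]\times[k]$ and claims $rk/2+O(\log|\cH(\widetilde P)|)$. The per-row threshold observation is correct and nice, but two things are glossed over: the actual universe $\cY$ carries the index $I'$ (so $y_{I',i,j}\notin C$ whenever $I'\ne I$), and the passage from a single $\tH$ to a family $\cH$ is not additive in $\log|\cH|$. The paper handles both via an explicit prediction strategy and only proves $0.75\,rk+O(k\sqrt r\log|\cH|)$; that weaker bound already suffices for the contradiction, but your sharper claim would need its own argument.
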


Roughly speaking, the overall strategy of our proof of Lemma~\ref{lem:soundness-ls} is as follows:
\begin{enumerate}
\item First, we will argue that any subtree rooted at any test-selection element must be shallow (of depth $\leqs 1.001rk$). This means that, if we have a depth-$1.999rk$ mistake tree, then the first $0.998rk$ levels must be assigned solely assignment elements.
\item We then argue that, in this $0.998rk$-level mistake tree of assignment elements, we can always extract a leaf $s$ such that the path from root to $s$ indicates inclusion of a large non-repetitive set. In other words, the path to $s$ can be decoded into a (partial) assignment for the Label Cover instance $\cL$.
\item Let the leaf from the previous step be $s$ and the non-repetitive set be $\Snr$. Our goal now is to show that the subtree rooted as $s$ must have small depth. We start working towards this by showing that, with high probability, there are few tests that agree with $\Snr$. This is analogous to Part II of the VC Dimension proof.
\item With the previous steps in mind, we only need to argue that, when $|\cH(\Snr)|$ is small, the Littlestone's dimension of all the concepts that contains $\Snr$ (i.e. $\LS(\cC[\Snr], \cU)$) is small. Thanks to Fact~\ref{fact:ls-union}, it is enough for us to bound $\LS(\cC[\Snr], \cX)$ and $\LS(\cC[\Snr], \cY)$ separately. For the former, our technique from the second step also gives us the desired bound; for the latter, we prove that $\LS(\cC[\Snr], \cY)$ is small by designing an algorithm that provides correct predictions on a constant fraction of the elements in $\cY$.
\end{enumerate}

Let us now proceed to the details of the proofs.

\subsubsection{Part I: Subtree of a Test-Selection Assignment is Shallow}

\begin{lemma} \label{lem:ls-assign-not-high}
For any $y_{I, i, j} \in \cY$, $\LS(\cC[\{y_{I, i, j}\}], \cU) \leqs rk + (4|E|\ell/r)\log |\Sigma| \leqs 1.001 rk$.
\end{lemma}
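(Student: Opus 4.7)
The plan is to reduce the lemma to a simple counting bound via Fact~\ref{fact:dim-comp}, which yields $\LS(\cC[\{y_{I,i,j}\}], \cU) \leq \log |\cC[\{y_{I,i,j}\}]|$. So it suffices to upper bound the number of concepts that contain $y_{I,i,j}$.

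Inspecting the construction in Figure~\ref{fig:red-ls}, a concept $C_{I', H, \sigma_{\tau(H)}}$ contains $y_{I,i,j}$ iff $I' = I$ and $(i,j) \in H$. Fixing $I' = I$ eliminates the entire $2^{rk}$ factor coming from the choice of first coordinate, while requiring $(i,j) \in H$ leaves at most $2^{rk-1}$ choices for $H$. For each such $H$, the number of valid $\sigma_{\tau(H)}$ is at most $|\Sigma|^{|T_{\tau(H)}|}$. Using Lemma~\ref{lem:partition} exactly as in the Size bound of the reduction — each of the $\ell$ matchings has $r/2$ pairs, each pair spans at most $4|E|/r^2$ edges (two bipartite directions, each bounded by $2|E|/r^2$), and each edge has $2$ endpoints — one obtains $|T_{\tau(H)}| \leq \ell \cdot (r/2) \cdot (4|E|/r^2) \cdot 2 = 4\ell|E|/r$. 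Hence
\[
|\cC[\{y_{I,i,j}\}]| \;\leq\; 2^{rk-1}\cdot|\Sigma|^{4\ell|E|/r},
\]
and taking logarithms yields the first inequality of the lemma.

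The second inequality is a pure numerical check: substituting $k = 10^{10}|E|\log|\Sigma|/r^2$ and $\ell = 1000$, both $rk$ and $(4|E|\ell/r)\log|\Sigma|$ are multiples of $|E|\log|\Sigma|/r$, and the required $(4|E|\ell/r)\log|\Sigma| \leq 10^{-3}\,rk$ collapses to $4000 \leq 10^{7}$, with enormous slack. Indeed, this slack is precisely the reason the parameter $k$ is defined with such a generous constant in Figure~\ref{fig:red-ls}.

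I do not expect any genuine obstacle here: the whole lemma is a clean counting argument, and the only piece of bookkeeping is the bound on $|T_{\tau(H)}|$, which has already been performed once in the Size computation. The real work of the soundness analysis will happen later, when this shallowness bound is combined with the non-repetitive decoding step to force the top $0.998\,rk$ levels of any deep mistake tree to consist solely of assignment elements.
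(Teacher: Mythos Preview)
Your proposal is correct and follows essentially the same approach as the paper: bound $|\cC[\{y_{I,i,j}\}]|$ by observing that $I'=I$ is forced, there are at most $2^{rk}$ choices of $H$, and $|T_{\tau(H)}|\leq 4|E|\ell/r$ via Lemma~\ref{lem:partition}, then apply Fact~\ref{fact:dim-comp}. The only cosmetic differences are that you keep the extra factor of $2$ from $(i,j)\in H$ and you spell out the numerical check for the second inequality, neither of which the paper bothers with.
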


Note that the above lemma implies that, in any mistake tree, the depth of the subtree rooted at any vertex $s$ assigned to some $y_{I, i, j} \in \cY$ is at most $1 + 1.001rk$. This is because every concept that agrees with the path from the root to $s$ must be in $\cC[\{y_{I, i, j}\}]$, which has depth at most $1.001rk$.

\begin{proofof}[Lemma~\ref{lem:ls-assign-not-high}]
Consider any $C_{I', H, \sigma_{\tau(H)}} \in \cC[\{y_{I, i, j}\}], \cU)$. Since $y_{I, i, j} \in C_{I', H, \sigma_{\tau(H)}}$, we have $I = I'$. Moreover, from Lemma~\ref{lem:partition}, we know that $\left|\cN_i\left(M_{\tau(H)}(i)\right)\right| \leqs 4|E|\ell/r^2$, which implies that $|T_{\tau(H)}| \leqs 4|E|\ell/r$. This means that there are only at most $|\Sigma|^{4|E|\ell/r}$ choices of $\sigma_{\tau(H)}$. Combined with the fact that there are only $2^{rk}$ choices of $H$, we have $|\cC[\{y_{I, i, j}\}]| \leqs 2^{rk} \cdot |\Sigma|^{4|E|\ell/r}$. Fact~\ref{fact:dim-comp} then implies the lemma.
\end{proofof}

\subsubsection{Part II: Deep Mistake Tree Contains a Large Non-Repetitive Set}

The goal of this part of the proof is to show that, for mistake tree of $\cX, \cC$ of depth slightly less than $rk$, there exists a leaf $s$ such that the corresponding path from root to $s$ indicates an inclusion of a large non-repetitive set; in our notation, this means that we would like to identify a leaf $s$ such that $IJ(\rho^{-1}_{\cT, s}(1))$ is large. Since we will also need a similar bound later in the proof, we will prove the following lemma, which is a generalization of the stated goal that works even for the concept class $\cC[\Snr]$ for any non-repetitive $\Snr$. To get back the desired bound, we can simply set $\Snr = \emptyset$. 

\begin{lemma} \label{lem:good-leaves-comp}
For any non-repetitive set $\Snr$ and any depth-$d$ mistake tree $\cT$ of $\cX, \cC[\Snr]$, there exists a leaf $s \in \{0, 1\}^d$ such that $|IJ(\rho^{-1}_{\cT, s}(1)) \setminus IJ(\Snr)| \geqs d - r$.
\end{lemma}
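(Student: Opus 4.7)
The plan is to prove Lemma~\ref{lem:good-leaves-comp} by induction on the depth $d$, treating $\Snr$ as part of the inductive statement. The base case $d \leqs r$ is vacuous since $d - r \leqs 0$. For the inductive step, I would examine the label $v = x_{i,\sigma_i,j}$ at the root of $\cT$ and its two depth-$(d-1)$ subtrees $\cT_0, \cT_1$, splitting into cases according to whether $(i,j) \in IJ(\Snr)$.

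First, if $(i,j) \notin IJ(\Snr)$, then $\Snr \cup \{v\}$ is still non-repetitive and the right subtree $\cT_1$ is a depth-$(d-1)$ mistake tree of $\cX, \cC[\Snr \cup \{v\}]$, since every concept consistent with a root-right prefix must contain both $\Snr$ and $v$. Applying the inductive hypothesis to $\cT_1$ with the enlarged set $\Snr \cup \{v\}$ yields a leaf $s_1$ of $\cT_1$ with $|IJ(\rho^{-1}_{\cT_1, s_1}(1)) \setminus IJ(\Snr \cup \{v\})| \geqs (d-1) - r$, and the full leaf $1 s_1$ of $\cT$ then satisfies $|IJ(\rho^{-1}_{\cT, 1 s_1}(1)) \setminus IJ(\Snr)| \geqs 1 + (d-1) - r = d - r$ because the root contributes the fresh pair $(i,j) \notin IJ(\Snr)$.

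Second, if $(i,j) \in IJ(\Snr)$, let $x_{i, \sigma'_i, j}$ be the unique element of $\Snr$ at coordinate $(i,j)$; a quick mistake-tree argument shows $\sigma_i \neq \sigma'_i$ (else the root label would coincide with an $\Snr$-element, forcing $\cT_0$ to have no consistent concept). I plan to invoke the inductive hypothesis on $\cT_1$ with the ``swapped'' non-repetitive set $\Snr' = (\Snr \setminus \{x_{i,\sigma'_i,j}\}) \cup \{v\}$, observing that $\cT_1$ is a mistake tree of $\cC[\Snr \cup \{v\}] \subseteq \cC[\Snr']$ and that $IJ(\Snr') = IJ(\Snr)$. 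This produces a leaf $s_1$ with $|IJ(\rho^{-1}_{\cT_1, s_1}(1)) \setminus IJ(\Snr)| \geqs (d-1) - r$; since the root contributes nothing new when $(i,j) \in IJ(\Snr)$, the naive argument only yields $(d-1) - r$, off by one from the target.

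The hard part will be closing this off-by-one gap. My intended resolution exploits the structure of $\cC[\Snr]$: each ``wasted'' node (whose label's $(i,j)$ lies in $IJ(\Snr)$ or is already represented in the current positive set) forces the choice of $\tau(H)$ for any consistent concept to satisfy a specific half of a partition of $\cP([r])$, namely determined by whether $\cN_i(M_{\tau(H)}(i))$ avoids the disagreement positions between $\sigma_i$ and the pre-existing reference assignment. Since distinct choice patterns at such wasted nodes produce disjoint feasible $\tau(H)$ subsets and $|\cP([r])| = 2^r$, at most $r$ wasted nodes can appear on any realized root-to-leaf path. I would then combine this structural bound with a greedy descent that goes right at every non-wasted node; verifying that the resulting $\geqs d - r$ right-turns contribute $d - r$ distinct new pairs to $IJ(\rho^{-1}_{\cT, s}(1)) \setminus IJ(\Snr)$---possibly by strengthening the inductive hypothesis to track non-repetitivity of the cumulative positive set along the path---is the most delicate step to nail down.
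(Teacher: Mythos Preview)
Your induction is set up cleanly, and you correctly identify that Case~2 (root label $x_{i,\sigma_i,j}$ with $(i,j)\in IJ(\Snr)$, or more generally a ``wasted'' node) is the real obstacle. Your disjointness observation---that different left/right choices at a wasted node force the consistent concepts to have $\tau(H)$ in disjoint subsets of $\cP([r])$---is also correct; it is exactly Lemma~\ref{lem:good-leaves-upper} in the paper. The gap is in the inference you draw from it. Disjointness over choice patterns bounds the \emph{number of choice patterns} (equivalently, the number of ``good leaves'' reached by your greedy descent) by $2^r$; it does \emph{not} bound the number of wasted nodes along any single realized path by $r$. Along one fixed path the $m$ wasted-node constraints on $\tau(H)$ may be highly redundant (for instance, repeated $(i,j)$-labels whose assignments differ from the reference in the same coordinate of $U_i$ impose the same partition each time). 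Even if the $2^m$ hypothetical choice patterns carve $\cP([r])$ into $2^m$ disjoint cells, your realized path only needs its own cell to be nonempty, which imposes no upper bound on $m$.

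The paper closes the gap by replacing the per-path bound with a global counting argument. Define good leaves exactly as in your greedy descent (go right at every first occurrence of a fresh $(i,j)$; at wasted nodes either direction is allowed). Lemma~\ref{lem:good-leaves-upper} gives $|\cG_{\cT,\Snr}|\leqs 2^r$. Then push weight $2^d$ from the root down the tree, splitting evenly at wasted nodes and sending everything right at fresh nodes; all weight lands on good leaves, and a good leaf $s$ receives weight $2^{|IJ(\rho^{-1}_{\cT,s}(1))\setminus IJ(\Snr)|}$. Averaging $2^d$ over at most $2^r$ good leaves yields one with weight at least $2^{d-r}$, which is the desired conclusion. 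In short, your disjointness ingredient is right, but it must feed an averaging argument over all good leaves rather than a per-path depth bound.
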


The proof of this lemma is a double counting argument where we count a specific class of leaves in two ways, which ultimately leads to the above bound. The leaves that we focus on are the leaves $s \in \{0, 1\}^d$ such that, for every $(i, j)$ such that an $(i, j)$-assignment element appears in the path from root to $s$ but not in $\Snr$, the first appearance of $(i, j)$-assignment element in the path is included. In other words, for every $(i, j) \in IJ(P_{\cT, s}) \setminus IJ(\Snr)$, if we define $u_{i, j} \triangleq \inf_{s' \in \pre(s), v_{\cT, s'} \in \cX_{i, j}} |s'|$, then $s_{u_{i, j} + 1}$ must be equal to $1$. We call these leaves the \emph{good} leaves. Denote the set of good leaves of $\cT$ by $\cG_{\cT, \Snr}$.

Our first way of counting is the following lemma. Informally, it asserts that different good leaves agree with different sets $\tH \subseteq [r]$. This can be thought of as an analogue of Lemma~\ref{lem:soundness-mult} in our proof for VC Dimension. Note that this lemma immediately gives an upper bound of $2^r$ on $|\cG_{\cT, \Snr}|$.

\begin{lemma} \label{lem:good-leaves-upper}
For any depth-$d$ mistake tree $\cT$ of $\cX, \cC[\Snr]$ and any different good leaves $s_1, s_2 \in \cG_{\cT, \Snr}$, if $C_{I_1, H_1, \sigma_1}$ agrees with $s_1$ and $C_{I_2, H_2, \sigma_2}$ agrees with $s_2$ for some $I_1, I_2, H_1, H_2, \sigma_1, \sigma_2$, then $\tau(H_1) \ne \tau(H_2)$.
\end{lemma}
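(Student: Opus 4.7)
The plan is to argue by contradiction: suppose $\tau(H_1) = \tau(H_2) = \tH$. Let $s^* \in \{0,1\}^{<d}$ be the longest common prefix of $s_1$ and $s_2$, and write $v = v_{\cT, s^*} = x_{i, \sigma_i, j}$ for the assignment element at that branching node. Without loss of generality $s_1$ extends $s^*$ with $1$ and $s_2$ with $0$, so that $v \in C_{I_1, H_1, \sigma_1}$ while $v \notin C_{I_2, H_2, \sigma_2}$. Once $\tau(H_1) = \tau(H_2) = \tH$, the membership rule from Figure~\ref{fig:red-ls} for an assignment element depends only on $I$ and on $\sigma|_{\cN_i(M_{\tH}(i))}$, so it suffices to derive that $(i,j) \in I_2$ and $\sigma_1|_{\cN_i(M_{\tH}(i))} = \sigma_2|_{\cN_i(M_{\tH}(i))}$, since combined with $v \in C_{I_1, H_1, \sigma_1}$ these would force $v \in C_{I_2, H_2, \sigma_2}$, the sought contradiction.

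The key step is therefore to produce a \emph{witness}: an $(i,j)$-assignment element $x_{i, \sigma'_i, j}$ lying in both concepts. Applying the membership rule twice to such a witness would simultaneously give $(i,j) \in I_1 \cap I_2$ and pin down $\sigma_1|_{\cN_i(M_{\tH}(i))} = \sigma'_i|_{\cN_i(M_{\tH}(i))} = \sigma_2|_{\cN_i(M_{\tH}(i))}$, exactly the conclusion needed.

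To find a witness I split into cases. If $(i,j) \in IJ(\Snr)$, then the unique $(i,j)$-element of $\Snr$ lies in every concept of $\cC[\Snr]$ and serves as the witness. Otherwise, if some node $s' \in \pre(s^*)$ on the shared prefix carries an $(i,j)$-assignment element with label $1$, that element lies in both $C_{I_1, H_1, \sigma_1}$ and $C_{I_2, H_2, \sigma_2}$ (since both agree with $\rho_{\cT, s^*}$) and again serves as the witness. The only remaining situation is $(i,j) \notin IJ(\Snr)$ together with no $(i,j)$-element on the common prefix carrying label $1$; then the first $(i,j)$-assignment element along the path to $s_2$ — either one on the common prefix already labeled $0$, or $v$ itself at position $|s^*|$ (labeled $0$ by $s_2$) — has label $0$, even though $(i,j) \in IJ(P_{\cT, s_2}) \setminus IJ(\Snr)$. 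This directly contradicts $s_2 \in \cG_{\cT, \Snr}$. In every case we reach a contradiction, so $\tau(H_1) \neq \tau(H_2)$.

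The main subtlety I expect is the case analysis around the ``first appearance'' clause in the definition of a good leaf, and in particular checking that it is the goodness of $s_2$ (the child branching on $0$) that is actually used — goodness of $s_1$ plays no role. Apart from that, the argument is a direct unpacking of the concept-membership rule in Figure~\ref{fig:red-ls} together with the fact that every concept in $\cC[\Snr]$ contains $\Snr$.
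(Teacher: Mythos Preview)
Your proposal is correct and follows essentially the same approach as the paper: argue by contradiction, locate the branching node $s^*$ with element $x_{i,\sigma_i,j}$, produce an $(i,j)$-witness in both concepts (either from $\Snr$ or from a labeled-$1$ node on the shared prefix, using goodness of the leaf that branches on~$0$), and then derive the contradiction from the membership rule since $\tau(H_1)=\tau(H_2)$. The only cosmetic difference is that the paper swaps the roles of $s_1,s_2$ (taking $s_1$ to branch on $0$) and phrases the second case directly via goodness rather than your two-subcase split, but the logic is identical; your observation that only the goodness of the $0$-branching leaf is used is accurate.
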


\begin{proof}
Suppose for the sake of contradiction that there exist $s_1 \ne s_2 \in \cG_{\cT, \Snr}$, $H_1, H_2, I_1, I_2, \sigma_1, \sigma_2$ such that $C_{I_1, H_1, \sigma_1}$ and $C_{I_2, H_2, \sigma_2}$ agree with $s_1$ and $s_2$ respectively, and $\tau(H_1) = \tau(H_2)$. Let $s$ be the common ancestor of $s_1, s_2$, i.e., $s$ is the longest string in $\pre(s_1) \cap \pre(s_2)$. Assume w.l.o.g. that $(s_1)_{|s| + 1} = 0$ and $(s_2)_{|s| + 1} = 1$.
Consider the node $v_{\cT, s}$ in tree $\cT$ where the paths to $s_1, s_2$ split; suppose that this is $x_{i, \sigma_i, j}$. Therefore $x_{i, \sigma_i, j} \in C_{I_2, H_2, \sigma_2} \setminus C_{I_1, H_1, \sigma_1}$.

We now argue that there is some $x_{i, \sigma'_i, j}$ (with the same $i,j$ but a different assignment $\sigma'_i$) that is in both concepts, i.e. $x_{i, \sigma'_i, j} \in C_{I_2, H_2, \sigma_2} \cap C_{I_1, H_1, \sigma_1}$. We do this by considering two cases:
\begin{itemize}
\item
If $(i, j) \in IJ(\Snr)$, then there is $x_{i, \sigma'_i, j} \in \Snr \subseteq C_{I_1, H_1, \sigma_1}, C_{I_2, H_2, \sigma_2}$ for some $\sigma'_i \in \Sigma^{U_i}$. 
\item 
Suppose that $(i, j) \notin IJ(\Snr)$. Since $s_1$ is a good leaf, there is some $t \in \pre(s)$ such that $v_{\cT, t} = x_{i, \sigma'_i, j}$ for some $\sigma'_i \in \Sigma^{U_i}$ and $t$ is included by the path (i.e. $s_{|t| + 1} = 1$). This also implies that $x_{i, \sigma'_i, j}$ is in both $C_{I_1, H_1, \sigma_1}$ and $C_{I_2, H_2, \sigma_2}$. 
\end{itemize}

Now, since both $x_{i, \sigma_i, j}$ and $x_{i, \sigma'_i, j}$ are in the concept $C_{I_2, H_2, \sigma_2}$, we have $(i, j) \in I_2$ and
\begin{align} \label{eq:sig-all-eq}
\sigma_i|_{\cN_i(M_{\tau(H_1)})} = \sigma_2|_{\cN_i(M_{\tau(H_1)})} = \sigma'_i|_{\cN_i(M_{\tau(H_1)})}.
\end{align}
On the other hand, since $C_{I_1, H_1, \sigma_1}$ contains $x_{i, \sigma'_i, j}$ but not $x_{i, \sigma_i, j}$, we have $(i, j) \in I_1$ and
\begin{align} 
\sigma_i|_{\cN_i(M_{\tau(H_2)})} \ne \sigma_1|_{\cN_i(M_{\tau(H_2)})} = \sigma'_i|_{\cN_i(M_{\tau(H_2)})}.
\end{align}
which contradicts \eqref{eq:sig-all-eq} since $\tau(H_1) = \tau(H_2)$.
\end{proof}

Next, we will present another counting argument which gives a lower bound on the number of good leaves, which, together with Lemma~\ref{lem:good-leaves-upper}, yields the desired bound.

\begin{proofof}[Lemma~\ref{lem:good-leaves-comp}]
For any depth-$d$ mistake tree $\cT$ of $\cC[\Snr], \cX$, let us consider the following procedure which recursively assigns a weight $\lambda_s$ to each node $s$ in the tree. At the end of the procedure, all the weight will be propagated from the root to good leaves.
\begin{enumerate}
\item For every non-root node $s \in \{0, 1\}^{\geqs 1}$, set $\lambda_s \leftarrow 0$. For root $s = \emptyset$, let $\lambda_{\emptyset} \leftarrow 2^d$.
\item While there is an internal node $s \in \{0, 1\}^{< d}$ such that $\lambda_s > 0$, do the following:
\begin{enumerate}
\item Suppose that $v_s = x_{i, \sigma_i, j}$ for some $i \in [r], \sigma_i \in \Sigma^{U_i}$ and $j \in [k]$.
\item If so far no $(i, j)$-element has appeared in the path or in $\Snr$, i.e., $(i, j) \notin IJ(P_{\cT, s}) \cup IJ(\Snr)$, then $\lambda_{s1} \leftarrow \lambda_s$. Otherwise, set $\lambda_{s0} = \lambda_{s1} = \lambda_s/2$.
\item Set $\lambda_s \leftarrow 0$.
\end{enumerate}
\end{enumerate}

The following observations are immediate from the construction:
\begin{itemize}
\item The total of $\lambda$'s over all the tree, $\sum_{s \in \{0, 1\}^{\leqs d}} \lambda_d$ always remain $2^d$.
\item At the end of the procedure, for every $s \in \{0, 1\}^{\leqs d}$, $\lambda_s \ne 0$ if and only if $s \in \cG_{\cT, \Snr}$.
\item If $s \in \cG_{\cT, \Snr}$, then $\lambda_s = 2^{|IJ(\rho^{-1}_{\cT, s}(1)) \setminus IJ(\Snr)|}$ at the end of the execution.
\end{itemize}
Note that the last observation comes from the fact that $\lambda$ always get divides in half when moving down one level of the tree unless we encounter an $(i, j)$-assignment element for some $i, j$ that never appears in the path or in $\Snr$ before. For any good leaf $s$, the set of such $(i, j)$ is exactly the set $IJ(\rho^{-1}_{\cT, s}(1)) \setminus IJ(\Snr)$.

As a result, we have $2^d = \sum_{s \in \cG_{\cT, \Snr}} 2^{|IJ(\rho^{-1}_{\cT, s}(1)) \setminus IJ(\Snr)|}$. Since Lemma~\ref{lem:good-leaves-upper} implies that $|\cG_{\cT, \Snr}| \leqs 2^r$, we can conclude that there exists $s \in \cG_{\cT, \Snr}$ such that $|IJ(\rho^{-1}_{\cT, s}(1)) \setminus IJ(\Snr)| \geqs d - r$ as desired.
\end{proofof}

\subsubsection{Part III: No Large Non-Repetitive Set Passes Many Test}

The main lemma of this subsection is the following, which is analogous to Lemma~\ref{lem:passing-bound}

\begin{lemma} \label{lem:non-pass}
If $\val(\cL) \leqs 0.001$, then, with high probability, for every non-repetitive set $\Snr$ of size at least $0.99rk$, $|\cH(\Snr)| \leqs 100n \log |\Sigma|$.
\end{lemma}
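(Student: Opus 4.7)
The plan is to adapt the proof of Lemma~\ref{lem:passing-bound} from the VC Dimension section, now using the constant slack $\delta = 0.99$ in place of a shrinking $\delta$ together with the constant $\ell = 1000$ matchings. As an intermediate step I would establish the obvious analogue of Lemma~\ref{lem:passing-bound-equiv-form}: with high probability, for every $I \subseteq [r]$ with $|I| \geqs 0.99 r$ and every $\sigma_I \in \Sigma^{U_I}$, the number of seeds $\tH \subseteq [r]$ for which $\sigma_I$ violates no constraint inside $T_{\tH}$ is at most $100 n \log|\Sigma|$. This implies Lemma~\ref{lem:non-pass}: any non-repetitive $\Snr$ with $|\Snr| \geqs 0.99 rk$ satisfies $|I(\Snr)| \geqs 0.99 r$, because non-repetitiveness gives $|\Snr| \leqs |IJ(\Snr)| \leqs |I(\Snr)| \cdot k$. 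Moreover, either two $i$-assignment elements of $\Snr$ disagree on $U_i$ for some $i$ (whence $\cH(\Snr) = \emptyset$ directly from the first clause of the ``passes'' definition), or $\Snr$ canonically induces an assignment $\phi(\Snr) \in \Sigma^{U_{I(\Snr)}}$ and $\cH(\Snr)$ is precisely the above passing set for $(I(\Snr), \phi(\Snr))$.

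For the intermediate claim, I would invoke Lemma~\ref{lem:passing-bound-single} with $\delta = 0.99$. Its proof requires $\val(\cL) \leqs 0.01\delta^2 \approx 0.0098$, comfortably implied by our hypothesis $\val(\cL) \leqs 0.001$, and yields that a fixed $\sigma_I$ violates no constraint induced by a single random matching with probability at most $(1 - 0.098)^{0.99 r / 8}$. Since each $T_{\tH}$ is built from $\ell = 1000$ independently sampled matchings, the probability that $\sigma_I$ passes a fixed $\tH$ is at most $(1 - 0.098)^{0.99 \cdot 1000 \cdot r / 8} \leqs 2^{-c r}$ for some absolute constant $c$ much larger than $1$ (roughly $c \geqs 18$), in particular far below $2^{-r}$.

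Because the permutations defining the matchings for distinct $\tH$'s are sampled independently, the $2^r$ passing indicators across $\tH$ are mutually independent Bernoullis whose expected sum is at most $2^{-(c-1) r} \ll 1$. A Chernoff bound then pushes the probability that this sum exceeds $100 n \log |\Sigma|$ down to $|\Sigma|^{-\Omega(n)}$, and a final union bound over the at most $2^r \cdot |\Sigma|^n$ relevant pairs $(I, \sigma_I)$ yields overall failure probability $|\Sigma|^{-\Omega(n)}$. The only real delicacy is parameter bookkeeping: one must verify both that $0.001 \leqs 0.01 \cdot 0.99^2$ (so Lemma~\ref{lem:passing-bound-single} applies at $\delta = 0.99$) and that $\ell = 1000$ is large enough to drive the per-$\tH$ passing probability well below $2^{-r}$ with slack for the $\tH$-union bound. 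Both hold by the arithmetic above, so no new conceptual idea is needed beyond the VC-dimension analogue.
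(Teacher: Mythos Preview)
Your proposal is correct and follows essentially the same route as the paper: invoke (the proof of) Lemma~\ref{lem:passing-bound-equiv-form} at $\delta = 0.99$ with the constant $\ell = 1000$, then reduce any large non-repetitive $\Snr$ to a pair $(I(\Snr), \sigma_{I(\Snr)})$ with $|I(\Snr)| \geqs 0.99r$. The only cosmetic difference is that the paper, instead of your case split, simply picks one $x_{i,\sigma_i,j} \in \Snr$ per $i$ and uses the containment $\cH(\Snr) \subseteq \cH(I(\Snr), \sigma_{I(\Snr)})$, which sidesteps any ambiguity about whether the consistency clause is on all of $U_i$ or only on $T_{\tH} \cap U_i$.
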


\begin{proof}
For every $I \subseteq [r]$, let $U_I \triangleq \bigcup_{i \in I} U_i$. For every $\sigma_I \in \Sigma^{U_I}$ and every $\tH \subseteq \cY$, we say that $(I, \sigma_I)$ \emph{passes} $\tH$ if $\sigma_I$ does not violate any constraint in $T_{\tH}$. Note that this definition and the way the test is generated in the reduction is the same as that of the VC Dimension reduction. Hence, we can apply Lemma~\ref{lem:passing-bound-equiv-form} with $\delta = 0.99$, which implies the following: with high probability,  for every $I \subseteq [r]$ of size at least $0.99 r$ and every $\sigma_I \in \Sigma^{U_I}$, $|\cH(I, \sigma_I)| \leqs 100n \log |\Sigma|$ where $\cH(I, \sigma_I)$ denote the set of all $\cH$'s passed by $(I, \sigma_I)$. Conditioned on this event happening, we will show that, for every non-repetitive set $\Snr$ of size at least $0.99 rk$, $|\cH(\Snr)| \leqs 100n \log |\Sigma|$.

Consider any non-repetitive set $\Snr$ of size $0.99rk$. Let $\sigma_{I(\Snr)}$ be an assignment on $U_{I(\Snr)}$ such that, for each $i \in I(\Snr)$, we pick one $x_{i, \sigma_i, j} \in \Snr$ (if there are more than one such $x$'s, pick one arbitrarily) and let $\sigma_{I(\Snr)}|_{U_i} = \sigma_i$. It is obvious that $\cH(\Snr) \subseteq \cH(I(\Snr), \sigma_{I(\Snr)})$. Since $\Snr$ is non-repetitive and of size at least $0.99rk$, we have $|I(\Snr)| \geqs 0.99r$, which means that $|\cH(I(\Snr), \sigma_{I(\Snr)})| \leqs 100n \log |\Sigma|$ as desired.
\end{proof}

\subsubsection{Part IV: A Subtree Containing $\Snr$ Must be Shallow}

In this part, we will show that, if we restrict ourselves to only concepts that contain some non-repetitive set $\Snr$ that passes few tests, then the Littlestone's Dimension of this restrictied concept class is small. Therefore when we build a tree for the whole concept class $\cC$, if a path from root to some node indicates an inclusion of a non-repetitive set that passes few tests, then the subtree rooted at this node must be shallow. 

\begin{lemma} \label{lem:non-rep-shallow}
For every non-repetitive set $\Snr$, 
$$\LS(\cC[\Snr], \cU) \leqs 1.75rk - |\Snr| + r + 1000k\sqrt{r}\log(|\cH(\Snr)| + 1).$$
\end{lemma}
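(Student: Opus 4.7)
My plan is to decompose $\LS(\cC[\Snr], \cU) \leqs \LS(\cC[\Snr], \cX) + \LS(\cC[\Snr], \cY)$ via Fact~\ref{fact:ls-union} and bound each piece separately. For the $\cX$ piece, I would apply Lemma~\ref{lem:good-leaves-comp} directly: in any depth-$d$ mistake tree $\cT$ of $(\cC[\Snr], \cX)$ it produces a leaf $s$ with $|IJ(\rho^{-1}_{\cT, s}(1)) \setminus IJ(\Snr)| \geqs d - r$, and since this set lies inside $([r]\times[k]) \setminus IJ(\Snr)$, which has size $rk - |\Snr|$ (using that $\Snr$ is non-repetitive so $|IJ(\Snr)| = |\Snr|$), we obtain $d \leqs rk + r - |\Snr|$. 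Hence $\LS(\cC[\Snr], \cX) \leqs rk + r - |\Snr|$, which already accounts for the $-|\Snr| + r$ terms and one $rk$ in the target bound.

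For the $\cY$ piece I would exhibit an online learner with mistake bound at most $0.75rk + 1000k\sqrt{r}\log(|\cH(\Snr)|+1)$ and apply Theorem~\ref{lem:learn-ls}. The learner first always predicts $0$; its first mistake must occur on some $y_{I^*, i_0, j_0}$ with $(i_0,j_0) \in H^*$ and so reveals $I^*$. After that it correctly predicts $0$ on every query $y_{I', i, j}$ with $I' \ne I^*$, leaving the residual problem of predicting $[(i,j) \in H^*]$ on queries $y_{I^*, i, j}$ under the promise $\tau(H^*) \in \cH(\Snr)$. For this residual problem the learner maintains a surviving subset $\cH' \subseteq \cH(\Snr)$ of candidates for $\tau(H^*)$ together with per-block counters $(p_i, n_i)$ of confirmed $1$-labels and $0$-labels in $\cY_i$; whenever every $\tH \in \cH'$ agrees on whether $i \in \tH$ the learner predicts the corresponding constant on $\cY_i$, which by the definition of $\tau$ as a majority over a size-$k$ block is correct on at least $k/2$ of the $k$ elements of $\cY_i$ and so costs at most $rk/2$ mistakes in total across the $r$ blocks. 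On ``ambiguous'' queries where $\cH'$ is split on $i$, the learner waits for $p_i$ or $n_i$ to cross $k/2$ and then eliminates every $\tH \in \cH'$ disagreeing with the observed majority.

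The key analytic step is an amortized bound on the number of ambiguous mistakes: a Chernoff-style concentration argument shows that $O(k\sqrt{r})$ ambiguous queries suffice to push some $(p_i, n_i)$ past the threshold $k/2$ on a coordinate where candidates disagree, so over the at most $\log(|\cH(\Snr)|+1)$ halving-like elimination rounds we spend only $O(k\sqrt{r}\log(|\cH(\Snr)|+1))$ ambiguous mistakes. Summing the $rk/2$ unanimous-regime mistakes, the single Phase-$1$ mistake, and these ambiguous mistakes yields the target $\cY$ bound (with ample slack in the constant $0.75$ versus my $0.5$), and combining with the $\cX$ bound finishes the lemma. The main obstacle I foresee is precisely this amortized $O(k\sqrt{r})$ bound per elimination round: one has to argue that adversarial interleaving of queries across the $r$ blocks cannot indefinitely keep every $(p_i, n_i)$ below $k/2$ on coordinates where candidates split, and it is here that the $\sqrt{r}$ (from balancing concentration across $r$ blocks) and $k$ (the block size) both enter the final bound.
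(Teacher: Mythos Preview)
Your decomposition via Fact~\ref{fact:ls-union} and your $\cX$-bound are exactly what the paper does (this is Corollary~\ref{cor:non-rep-shallow-x}); the divergence is entirely in the $\cY$-bound.

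The paper's argument for $\LS(\cC[\Snr],\cY)$ is non-adaptive and much simpler than yours. It first proves (Lemma~\ref{lem:half-agr}), by the probabilistic method, that there exists a single $\tH^*\subseteq[r]$ with $|\tH^*\Delta\tH|\leqs r/2+O(\sqrt{r}\log(|\cH(\Snr)|+1))$ simultaneously for every $\tH\in\cH(\Snr)$; Chernoff is applied here, to a uniformly random $\tH^*$, not to the query sequence. After learning $I^*$ (exactly as in your Phase~1), the learner simply predicts $\mathds{1}[i\in\tH^*]$ on every $y_{I^*,i,j}$. The mistake count is then $|H\Delta(\tH^*\times[k])|$: at most $k/2$ on each block where $\tau(H)$ and $\tH^*$ agree, and at most $k$ on the at most $r/2+O(\sqrt{r}\log|\cH(\Snr)|)$ blocks where they differ, giving $0.75rk+O(k\sqrt{r}\log|\cH(\Snr)|)$ directly.

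Your adaptive halving approach has a real gap precisely where you flag it. The claim that $O(k\sqrt{r})$ ambiguous queries must push some $(p_i,n_i)$ past $k/2$ is false: the adversary controls the query sequence and labels deterministically, and there is nothing random for Chernoff to act on. Concretely, if $\cH(\Snr)$ is split on every coordinate (e.g.\ $\{\emptyset,[r]\}\subseteq\cH(\Snr)$), the adversary can place $\lceil k/2\rceil-1$ label-$1$ queries in each of the $r$ blocks, making about $rk/2$ ambiguous queries while keeping every $p_i<k/2$ and every $n_i=0$, so no elimination is triggered. Moreover, once a threshold is crossed, the resulting elimination need not halve $\cH'$ --- it may remove a single candidate --- so the $\log|\cH|$ ``round'' count is also unjustified. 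With the natural ``predict $0$ on ambiguous'' rule, your learner can be driven to roughly $rk$ mistakes even when $|\cH(\Snr)|=2$, overshooting the required $0.75rk+O(k\sqrt{r})$. The fix is not to refine the amortization but to replace the adaptive scheme by the paper's fixed-$\tH^*$ predictor.
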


We prove the above lemma by bounding $\LS(\cC[\Snr], \cX)$ and $\LS(\cC[\Snr], \cY)$ separately, and combining them via Fact~\ref{fact:ls-union}. First, we can bound $\LS(\cC[\Snr], \cX)$ easily by applying Lemma~\ref{lem:good-leaves-comp} coupled with the fact that $|IJ(\Snr)| = |\Snr|$ for every non-repetitive $\Snr$. This immediately gives the following corollary.

\begin{corollary} \label{cor:non-rep-shallow-x}
For every non-repetitive set $\Snr$, 
$$\LS(\cC[\Snr], \cX) \leqs rk - |\Snr| + r.$$
\end{corollary}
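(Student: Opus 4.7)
The plan is to derive the corollary by applying Lemma~\ref{lem:good-leaves-comp} to an optimal mistake tree for $\cC[\Snr], \cX$ and then observing that the concept witnessing the ``all-ones'' leaf forces $IJ(\rho^{-1}_{\cT,s}(1)) \cup IJ(\Snr)$ to sit inside a single index set $I \subseteq [r] \times [k]$, which has size at most $rk$.

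In more detail, let $d = \LS(\cC[\Snr], \cX)$ and fix any depth-$d$ mistake tree $\cT$ of $\cX, \cC[\Snr]$. By Lemma~\ref{lem:good-leaves-comp}, there is a leaf $s \in \{0,1\}^d$ with
\[
\bigl|IJ(\rho^{-1}_{\cT, s}(1)) \setminus IJ(\Snr)\bigr| \;\geqs\; d - r.
\]
Because $\cT$ is a mistake tree of the concept class $\cC[\Snr]$, there exists some concept $C_{I, H, \sigma_{\tau(H)}} \in \cC[\Snr]$ which agrees with the path from root to $s$; in particular, every assignment element in $\rho^{-1}_{\cT, s}(1)$ lies in $C_{I, H, \sigma_{\tau(H)}}$, and $\Snr \subseteq C_{I, H, \sigma_{\tau(H)}}$. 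By the definition of the reduction, $x_{i, \sigma_i, j} \in C_{I, H, \sigma_{\tau(H)}}$ implies $(i,j) \in I$, so both $IJ(\rho^{-1}_{\cT, s}(1))$ and $IJ(\Snr)$ are subsets of $I \subseteq [r] \times [k]$.

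Hence $|IJ(\rho^{-1}_{\cT, s}(1)) \cup IJ(\Snr)| \leqs |I| \leqs rk$, which gives
\[
\bigl|IJ(\rho^{-1}_{\cT, s}(1)) \setminus IJ(\Snr)\bigr| \;\leqs\; rk - |IJ(\Snr)|.
\]
Since $\Snr$ is non-repetitive, $|IJ(\Snr)| = |\Snr|$. Combining the two bounds on $|IJ(\rho^{-1}_{\cT, s}(1)) \setminus IJ(\Snr)|$ yields $d - r \leqs rk - |\Snr|$, i.e.\ $d \leqs rk - |\Snr| + r$, as claimed.

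There is no real obstacle here: the heavy lifting has already been done by Lemma~\ref{lem:good-leaves-comp} (the double counting over good leaves) and by Lemma~\ref{lem:good-leaves-upper} (the injectivity of $\tau \circ H$ over good leaves). The only point to be careful about is remembering that we are working inside $\cC[\Snr]$, so that the witnessing concept for the leaf $s$ must simultaneously contain $\Snr$; this is what lets us put $IJ(\Snr)$ and $IJ(\rho^{-1}_{\cT, s}(1))$ into the same index set $I$ of size at most $rk$.
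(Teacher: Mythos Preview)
Your proof is correct and follows the same approach as the paper: apply Lemma~\ref{lem:good-leaves-comp} to an optimal mistake tree and bound $|IJ(\rho^{-1}_{\cT,s}(1)) \setminus IJ(\Snr)|$ from above by $rk - |\Snr|$ using $|IJ(\Snr)| = |\Snr|$. The paper's argument is marginally simpler in that the detour through the witnessing concept $C_{I,H,\sigma_{\tau(H)}}$ is unnecessary; since $IJ(\cdot) \subseteq [r] \times [k]$ by definition, one already has $IJ(\rho^{-1}_{\cT,s}(1)) \setminus IJ(\Snr) \subseteq ([r]\times[k]) \setminus IJ(\Snr)$, giving the bound $rk - |\Snr|$ directly.
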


We will next prove the following bound on $\LS(\cC[\Snr], \cY)$. Note that Corollary~\ref{cor:non-rep-shallow-x}, Lemma~\ref{lem:non-rep-shallow-y}, and Fact~\ref{fact:ls-union} immediately imply Lemma~\ref{lem:non-rep-shallow}.

\begin{lemma} \label{lem:non-rep-shallow-y}
For every non-repetitive set $\Snr$, 
$$\LS(\cC[\Snr], \cY) \leqs 0.75rk + 500k \sqrt{r} \log(|\cH(\Snr)| + 1).$$
\end{lemma}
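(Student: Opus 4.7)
The key structural observation on $\cY$ is that every concept $C_{I,H,\sigma_{\tau(H)}}\in\cC[\Snr]$, restricted to $\cY$, equals $\{y_{I,i,j}:(i,j)\in H\}$ --- only test-selection elements whose first index coincides with $I$ are included. Hence the online algorithm that defaults to predicting ``not in'' is correct on every $y_{I',i,j}$ with $I'\ne I^{*}$, so the first mistake it incurs (at some $y_{I^{*},i_{0},j_{0}}$) pins down the true first index $I^{*}$ completely, and simultaneously exposes the information $(i_{0},j_{0})\in H^{*}$. From then on, queries with $I'\ne I^{*}$ are trivially ``not in,'' and only $y_{I^{*},i,j}$ matter; predicting them reduces to online-learning $H^{*}$ on the universe $[r]\times[k]$ subject to $\tau(H^{*})\in\cH(\Snr)$. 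Since the mistake bound of any online algorithm is an upper bound on Littlestone's dimension (Theorem~\ref{lem:learn-ls}), this gives
\[
\LS(\cC[\Snr],\cY)\;\leqs\;1+\LS\bigl(\tau^{-1}(\cH(\Snr)),\,[r]\times[k]\bigr),
\]
so it remains to bound the right-hand side by roughly $0.75rk+O(k\sqrt{r}\log|\cH(\Snr)|)$.

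For the $H$-sub-problem I use an experts framework over $\cH(\Snr)$: each $\tH\in\cH(\Snr)$ is an expert that predicts $(i,j)\in H$ iff $i\in\tH$. The true expert $\tH^{*}=\tau(H^{*})$ is predicting the row-majority of $H^{*}$, so it disagrees with the true label on at most $k/2$ of the $k$ entries in any row $i$ (since $|H^{*}\cap\cY_{i}|\ge k/2$ exactly when $i\in\tH^{*}$), for a total of at most $rk/2$ mistakes over all queries. A deterministic multiplicative-weights combination of the $|\cH(\Snr)|$ experts then produces an online learner whose mistake count can be bounded in the standard way by $(1+\varepsilon)\cdot M^{*}$ plus an $O(\log|\cH(\Snr)|/\varepsilon)$ term for expert identification. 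To push the leading constant down from a naive $2M^{*}=rk$ to $0.75rk$, one exploits the per-row product structure of the experts' predictions: each row contributes at most $k/2+O(\sqrt{k}\log|\cH(\Snr)|)$ mistakes in the worst case, and summing a Chernoff-style deviation over the $r$ (essentially independent) rows yields an overall slack of $O(k\sqrt{r}\log|\cH(\Snr)|)$ above the $rk/2$ baseline.

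\textbf{Main obstacle.} The hardest step is the deterministic calibration. Plain deterministic weighted majority gives only a $\sim 2M^{*}$ leading term, which exceeds $0.75rk$; the product-over-rows structure of the experts has to be used to convert the usual deterministic-vs-randomized gap into a sub-leading $O(k\sqrt{r}\log|\cH(\Snr)|)$ correction rather than a multiplicative one. An alternative route, which avoids the experts framework altogether, is a direct mistake-tree counting argument: partition the $2^{D}$ leaves of any depth-$D$ mistake tree by $\tau(H_{\ell})\in\cH(\Snr)$, use $\LS(\tau^{-1}(\tH))\le rk/2$ (from the product of row-wise ``$|S|\ge k/2$'' and ``$|S|<k/2$'' classes, each of Littlestone dimension $\le k/2$) together with a Sauer--Shelah-style growth-function bound on the number of leaves with $\tau(H_{\ell})=\tH$, and sum over $\tH\in\cH(\Snr)$. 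Either route is conceptually direct but requires some delicate bookkeeping to land on the stated constants $0.75$ and $500$.
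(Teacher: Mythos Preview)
Your first move is correct: predicting $0$ until the first positive label pins down $I^*$ cleanly reduces the problem to bounding $\LS(\tau^{-1}(\cH(\Snr)),[r]\times[k])$, at the cost of one extra mistake. The gap is entirely in this second bound, and neither route you sketch closes it. In route~(a), deterministic weighted majority over the row-constant experts gives leading term $2M^*=rk$, as you acknowledge; your improvement sketch (``per-row $k/2+O(\sqrt{k}\log|\cH|)$, then Chernoff over $r$ essentially independent rows'') does not go through. There is no randomness in a deterministic learner, so no Chernoff applies; the rows are not independent in any useful sense; and on any row $i$ for which $\cH(\Snr)$ contains both some $\tH\ni i$ and some $\tH'\not\ni i$, the restriction of $\tau^{-1}(\cH(\Snr))$ to that row is all of $\cP([k])$, so no per-row learner can guarantee fewer than $k$ mistakes there. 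Route~(b) is vaguer still: the leaves of a mistake tree lie on \emph{different} element sequences, so a Sauer--Shelah count on the leaves with $\tau(H_\ell)=\tH$ has no direct meaning, and there is no off-the-shelf union bound of the form $\LS(\bigcup_{\tH}\tau^{-1}(\tH))\le rk/2+O(k\sqrt{r}\log|\cH|)$.

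The paper bypasses adaptive learning altogether. A one-line probabilistic argument (take $\tH^*\subseteq[r]$ uniformly at random, apply Chernoff to $|\tH^*\Delta\tH|$, union-bound over $\tH\in\cH(\Snr)$) produces a single fixed $\tH^*$ with $|\tH^*\Delta\tH|\le r/2+O(\sqrt{r}\log(|\cH(\Snr)|+1))$ for every $\tH\in\cH(\Snr)$ simultaneously; this is Lemma~\ref{lem:half-agr}. The learner then always predicts according to $\tH^*\times[k]$. For the true $H$ (with $\tau(H)\in\cH(\Snr)$), each row $i\notin\tH^*\Delta\tau(H)$ contributes at most $k/2$ mistakes by the definition of $\tau$, while each row in $\tH^*\Delta\tau(H)$ contributes at most $k$; summing yields $rk/2+(k/2)\,|\tH^*\Delta\tau(H)|\le 0.75rk+500k\sqrt{r}\log(|\cH(\Snr)|+1)$. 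The idea you are missing is that one need not learn $\tau(H)$ at all: a non-adaptive predictor anchored at the ``center'' of $\cH(\Snr)$ already achieves the stated bound, and this is what turns the $\sqrt{r}$ deviation in $|\tH^*\Delta\tau(H)|$ directly into the $k\sqrt{r}$ correction term.
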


The overall outline of the proof of Lemma~\ref{lem:non-rep-shallow-y} is that we will design a prediction algorithm whose mistake bound is at most $0.75rk + 1000k \sqrt{r} \log |\cH(\Snr)|$. Once we design this algorithm, Lemma~\ref{lem:learn-ls} immediately implies Lemma~\ref{lem:non-rep-shallow-y}. To define our algorithm, we will need the following lemma, which is a general statement that says that, for a small collection of $H$'s, there is a some $\tH^* \subseteq [r]$ that agrees with almost half of every $H$ in the collection.

\begin{lemma} \label{lem:half-agr}
Let $\cH \subseteq \cP([r])$ be any collections of subsets of $[r]$, there exists $\tH^* \subseteq [r]$ such that, for every $\tH \in \cH$, $|\tH^* \Delta \tH| \leqs 0.5r + 1000\sqrt{r}\log(|\cH| + 1)$ where $\Delta$ denotes the symmetric difference between two sets.
\end{lemma}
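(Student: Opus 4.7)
}
The plan is to use the probabilistic method: I will choose $\tH^*$ uniformly at random from $\cP([r])$ and show that, with positive probability, it satisfies the required bound simultaneously for every $\tH \in \cH$.

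First, handle the trivial regime. If $1000\sqrt{r}\log(|\cH|+1) \geqs 0.5 r$, then the claimed bound $0.5r + 1000\sqrt{r}\log(|\cH|+1)$ is already at least $r$, and since $|\tH^* \Delta \tH| \leqs r$ for every $\tH^*, \tH \subseteq [r]$, any choice of $\tH^*$ works. So assume from here on that $t \triangleq 1000\sqrt{r}\log(|\cH|+1) < 0.5 r$.

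Now sample $\tH^* \subseteq [r]$ uniformly at random by including each $i \in [r]$ independently with probability $1/2$. Fix any $\tH \in \cH$. For each $i \in [r]$, the indicator $X_i \triangleq \mathds{1}[i \in \tH^* \Delta \tH]$ is a Bernoulli$(1/2)$ random variable, and the $X_i$'s are mutually independent (independence across coordinates of $\tH^*$ is preserved after XORing with the fixed set $\tH$). Hence $|\tH^* \Delta \tH| = \sum_{i=1}^r X_i$ has mean $r/2$, and applying the Chernoff bound from the preliminaries with $p = 1/2$, $n = r$, and $\delta = 2t/r \in (0,1)$ yields
\begin{align*}
\Pr\bigl[|\tH^* \Delta \tH| \geqs 0.5r + t\bigr] \leqs 2^{-\delta^2 n p / 3} = 2^{-2t^2/(3r)} = 2^{-\frac{2 \cdot 10^6}{3} \log^2(|\cH|+1)}.
\end{align*}

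Finally, take a union bound over all $\tH \in \cH$. The probability that some $\tH \in \cH$ violates the bound is at most
\begin{align*}
|\cH| \cdot 2^{-\frac{2 \cdot 10^6}{3} \log^2(|\cH|+1)} \leqs 2^{\log(|\cH|+1) - \frac{2 \cdot 10^6}{3} \log^2(|\cH|+1)},
\end{align*}
which is strictly less than $1$ whenever $\cH$ is nonempty (the case $\cH = \emptyset$ being vacuous). Therefore there exists a specific $\tH^* \subseteq [r]$ such that $|\tH^* \Delta \tH| \leqs 0.5r + 1000\sqrt{r}\log(|\cH|+1)$ for every $\tH \in \cH$, as claimed. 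No step here looks like a real obstacle; the only mild care is the case split to ensure the Chernoff parameter $\delta$ stays in $(0,1)$ so that the stated form of the bound applies.
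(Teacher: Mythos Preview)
Your proof is correct and follows essentially the same probabilistic-method argument as the paper: sample $\tH^*$ uniformly at random, apply Chernoff to each fixed $\tH$, and union bound over $\cH$. Your version is in fact slightly more careful than the paper's, since you explicitly handle the case split ensuring the Chernoff parameter $\delta$ lies in $(0,1)$.
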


\begin{proof}
We use a simple probabilistic method to prove this lemma. Let $\tH^r$ be a random subset of $[r]$ (i.e. each $i \in [r]$ is included independently with probability 0.5). We will show that, with non-zero probability, $|\tH^r \Delta \tH| \leqs 0.5r + 1000\sqrt{r}\log(|\cH| + 1)$ for all $\tH \in \cH$, which immediately implies that a desired $\tH^*$ exists.

Fix $\tH \in \cH$. Observe that $|\tH^r \Delta \tH|$ can be written as $\sum_{i \in [r]} \mathds{1}[i \in (\tH^r \Delta \tH)]$. For each $i$, $\mathds{1}[i \in (\tH^r \Delta \tH)]$ is a $0, 1$ random variable with mean 0.5 independent of other $i' \in [r]$. Applying Chernoff bound here yields
\begin{align*}
\Pr[|\tH^r \Delta \tH| > 0.5r + 1000\sqrt{r}\log(|\cH| + 1)] \leqs 2^{-\log^2(|\cH| + 1)} \leqs \frac{1}{|\cH| + 1}. 
\end{align*}

Hence, by union bound, we have
\begin{align*}
\Pr[\exists \tH \in \cH, |\tH^r \Delta \tH| > 0.5r + 1000\sqrt{r}\log(|\cH| + 1)] \leqs \frac{|\cH|}{|\cH| + 1} < 1.
\end{align*}
In other words, $|\tH^r \Delta \tH| \leqs 0.5r + 1000\sqrt{r}\log(|\cH| + 1)$ for all $\tH \in \cH$ with non-zero probability as desired.
\end{proof}

We also need the following observation, which is an analogue of Observation~\ref{obs:pass-non-rep} in the VC Dimension proof; it follows immediately from definition of $\cH(S)$.

\begin{observation} \label{obs:pass-non-rep-ls}
If a non-repetitive set $\Snr$ is a subset of some concept $C_{I, H, \sigma_{\tau(H)}}$, then $\tau(H) \in \cH(\Snr)$.
\end{observation}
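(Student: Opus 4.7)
The plan is a direct unpacking of definitions, exploiting the way concepts are constructed in Figure~\ref{fig:red-ls}. Let $\Snr$ be non-repetitive with $\Snr \subseteq C_{I, H, \sigma_{\tau(H)}}$. To show $\tau(H) \in \cH(\Snr)$, I need to verify the two defining conditions of ``$\Snr$ passes $\tau(H)$'': (i) for every $i$ with $\Snr \cap \cX_i \neq \emptyset$, all $i$-assignment elements in $\Snr$ agree on $T_{\tau(H)} \cap U_i$, and (ii) the resulting canonical partial assignment on $T_{\tau(H)}$ violates no constraint.

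First I would pin down (i). Take any two elements $x_{i,\sigma_i,j}, x_{i,\sigma'_i,j'} \in \Snr$. Since both lie in $C_{I,H,\sigma_{\tau(H)}}$, the membership rule for assignment elements forces
\[
\sigma_i|_{\cN_i(M_{\tau(H)}(i))} \;=\; \sigma_{\tau(H)}|_{\cN_i(M_{\tau(H)}(i))} \;=\; \sigma'_i|_{\cN_i(M_{\tau(H)}(i))}.
\]
Because $T_{\tau(H)} \cap U_i = \cN_i(M_{\tau(H)}(i))$ by the definition of $T_{\tau(H)}$, this is precisely the agreement required by (i), so the canonical induced assignment on $T_{\tau(H)}$ is well defined.

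For (ii), I would simply invoke the construction: the concept $C_{I,H,\sigma_{\tau(H)}}$ is only created when $\sigma_{\tau(H)} \in \Sigma^{T_{\tau(H)}}$ does not violate any constraint. The agreement established in (i) shows that the canonical assignment induced by $\Snr$ on $T_{\tau(H)}$ coincides with $\sigma_{\tau(H)}$ on the union $\bigcup_{i \in I(\Snr)} \cN_i(M_{\tau(H)}(i))$, and it is only defined on that union; hence it too violates no constraint. Combining (i) and (ii) gives $\tau(H) \in \cH(\Snr)$.

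There is no real obstacle here beyond bookkeeping: the statement is a syntactic consequence of how concepts are defined, and the only subtlety is making sure that the ``canonical induced assignment'' of $\Snr$ on $T_{\tau(H)}$ is indeed well defined (which uses non-repetitiveness together with (i)) and agrees with $\sigma_{\tau(H)}$ wherever it is defined. I expect the whole argument to fit in a few lines.
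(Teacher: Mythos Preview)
Your proposal is correct and is exactly the direct definition-chase the paper has in mind; the paper simply states that the observation ``follows immediately from definition of $\cH(S)$'' without spelling out the two conditions, and your write-up fills in those details faithfully. One tiny remark: non-repetitiveness is not actually needed for the well-definedness of the canonical induced assignment---condition~(i), which you derive purely from membership in $C_{I,H,\sigma_{\tau(H)}}$, already suffices---so that parenthetical can be dropped.
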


With Lemma~\ref{lem:half-agr} and Observation~\ref{obs:pass-non-rep-ls} in place, we are now ready to prove Lemma~\ref{lem:non-rep-shallow-y}.

\begin{proofof}[Lemma~\ref{lem:non-rep-shallow-y}]
Let $\tH^* \subseteq [r]$ be the set guaranteed by applying Lemma~\ref{lem:half-agr} with $\cH = \cH(\Snr)$. Let $H^* \triangleq \tH^* \times [k]$.

Our prediction algorithm will be very simple: it always predicts according to $H^*$; i.e., on an input\footnote{We assume w.l.o.g. that input elements are distinct; if an element appears multiple times, we know the correct answer from its first appearance and can always correctly predict it afterwards.} $y \in \cY$, it outputs $\mathds{1}[y \in H^*]$. Consider any sequence $(y_1, h_1), \dots, (y_w, h_w)$ that agrees with a concept $C_{I, H, \sigma_{\tau(H)}} \in \cC[\Snr]$. Observe that the number of incorrect predictions of our algorithm is at most $|H^* \Delta H|$.

Since $C_{I, H, \sigma_{\tau(H)}} \in \cC[\Snr]$, Observation~\ref{obs:pass-non-rep-ls} implies that $\tau(H) \in \cH(\Snr)$. This means that $|\tau(H) \Delta \tH^*| \leqs 0.5r + 1000\sqrt{r}\log(|\cH| + 1)$. Now, let us consider each $i \in [r] \setminus (\tau(H) \Delta \tH^*)$. Suppose that $i \in \tau(H) \cap \tH^*$. Since $i \in \tau(H)$, at least $k/2$ elements of $\cY_i$ are in $H$ and, since $i \in \tH^*$, we have $\cY_i \subseteq H^*$. This implies that $|(H^* \Delta H) \cap Y_i| \leqs k/2$. A similar bound can also be derived when $i \notin \tau(H) \cap \tH^*$. As a result, we have
\begin{align*}
|H^* \Delta H|
&= \sum_{i \in [r]} |(H^* \Delta H) \cap Y_i| \\ 
&= \sum_{i \in \tau(H) \Delta \tH^*} |(H^* \Delta H) \cap Y_i| + \sum_{i \in [r] \setminus (\tau(H) \Delta \tH^*)} |(H^* \Delta H) \cap Y_i| \\
&\leqs (|\tau(H) \Delta \tH^*|)(k) + (r - |\tau(H) \Delta \tH^*|)(k/2) \\
&\leqs 0.75rk + 500k\sqrt{r}\log(|\cH| + 1),
\end{align*}
concluding our proof of Lemma~\ref{lem:non-rep-shallow-y}.
\end{proofof}

\subsubsection{Putting Things Together}

\begin{proofof}[Lemma~\ref{lem:soundness-ls}]
Assume that $\val(\cL) \leqs 0.001$. From Lemma~\ref{lem:non-pass}, we know that, with high probability, $|\cH(\Snr)| \leqs 100n\log|\Sigma|$ for every non-repetitive set $\Snr$ of size at least $0.99rk$. Conditioned on this event, we will show that $\LS(\cC, \cU) \leqs 1.999rk$.

Suppose for the sake of contradiction that $\LS(\cC, \cU) > 1.999rk$. Consider any depth-$1.999rk$ mistake tree $\cT$ of $\cC, \cU$. From Lemma~\ref{lem:ls-assign-not-high}, no test-selection element is assigned to any node in the first $1.999rk - 1.001rk - 1 \geqs 0.997rk$ levels. In other words, the tree induced by the first $0.997rk$ levels is simply a mistake tree of $\cC, \cX$. By Lemma~\ref{lem:good-leaves-comp} with $\Snr = \emptyset$, there exists $s \in \{0, 1\}^{0.997rk}$ such that $|IJ(\rho_{\cT, s}^{-1}(1))| \geqs 0.997rk - r \geqs 0.996rk$.

Since $|IJ(\rho_{\cT, s}^{-1}(1))| \geqs 0.996rk$, there exists a non-repetitive set $\Snr \subseteq \rho_{\cT, s}^{-1}(1)$ of size $0.996rk$. Consider the subtree rooted at $s$. This is a mistake tree of $\cC[\rho_{\cT, s}], \cU$ of depth $1.002rk$. Since $\Snr \subseteq \rho_{\cT, s}^{-1}(1)$, we have $\cC[\rho_{\cT, s}] \subseteq \cC[\Snr]$. However, this implies
\begin{align*}
1.002rk &\leqs \LS(\cC[\rho_{\cT, s}], \cU) \\
&\leqs \LS(\cC[\Snr], \cU) \\
(\text{From Lemma~\ref{lem:non-rep-shallow}}) &\leqs 1.75rk - 0.996rk + r + 100k\sqrt{r}\log(|\cH(\Snr)| + 1) \\
(\text{From Lemma~\ref{lem:non-pass}}) &\leqs 0.754rk + r + 100k\sqrt{r}\log(100n\log|\Sigma| + 1) \\
&= 0.754rk + o(rk),
\end{align*}
which is a contradiction when $r$ is sufficiently large.
\end{proofof}

\section{Conclusion and Open Questions}

In this work, we prove inapproximability results for VC Dimension and Littlestone's Dimension based on the randomized exponential time hypothesis. Our results provide an almost matching running time lower bound of $n^{\log^{1 - o(1)} n}$ for both problems while ruling out approximation ratios of $1/2 + o(1)$ and $1 - \varepsilon$ for some $\varepsilon > 0$ for VC Dimension and Littlestone's Dimension respectively.  Even though our results help us gain more insights on approximability of both problems, it is not yet completely resolved. More specifically, we are not aware of any constant factor $n^{o(\log n)}$-time approximation algorithm for either problem; it is an intriguing open question whether such algorithm exists and, if not, whether our reduction can be extended to rule out such algorithm. Another potentially interesting research direction is to derandomize our construction; note that the only place in the proof in which the randomness is used is in Lemma~\ref{lem:passing-bound-equiv-form}.

A related question which remains open, originally posed by Ben-David and Eiron~\cite{BE98-self_directed}, is that of computing the {\em self-directed learning}\footnote{Roughly, self-directed learning is similar to the online learning model corresponding to Littlestone's dimension, but where the learner chooses the order elements; see~\cite{BE98-self_directed} for details.} mistake bound. 
Similarly, it may be interesting to understand the complexity of computing (approximating) the recursive teaching dimension~\cite{DFSZ14-RT, MSWY15-RT}.

\section*{Acknowledgement}
We thank Shai Ben-David for suggesting the question of approximability of Littlestone's dimension, and several other fascinating discussions. We also thank Yishay Mansour and COLT anonymous reviewers for their useful comments. 

Pasin Manurangsi is supported by NSF Grants No. CCF 1540685 and CCF 1655215.

Aviad Rubinstein was supported by a Microsoft Research PhD Fellowship, as well as NSF grant CCF1408635 and Templeton Foundation grant 3966. This work was done in part at the Simons Institute for the Theory of Computing.

\bibliographystyle{alpha}
\bibliography{main}

\appendix

\section{Quasi-polynomial Algorithm for Littlestone's Dimension} \label{app:littlestone-alg}

In this section, we provides the following algorithm which decides whether $\LS(\cC, \cU) \leqs d$ in time $O(|\cC| \cdot (2|\cU|)^d)$. Since we know that $\LS(\cC, \cU) \leqs \log |\cC|$, we can run this algorithm for all $d \leqs \log |\cC|$ and compute Littlestone's Dimension of $\cC, \cU$ in quasi-polynomial time.

\begin{theorem}[Quasi-polynomial Time Algorithm for Littlestone's Dimension]\label{thm:LS-alg}
There is an algorithm that, given a universe $\cU$, a concept class $\cC$ and a non-negative integer $d$, decides whether $\LS(\cC, \cU) \leqs d$ in time $O(|\cC| \cdot (2|\cU|)^{d})$.
\end{theorem}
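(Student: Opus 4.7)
The plan is to use the following self-reducibility of mistake trees: a depth-$d$ mistake tree exists for $\cC,\cU$ (with $d \geq 1$) if and only if there exists some $v \in \cU$ such that both restricted classes
\[
\cC_0^v \triangleq \{C \in \cC : v \notin C\}, \qquad \cC_1^v \triangleq \{C \in \cC : v \in C\}
\]
admit depth-$(d-1)$ mistake trees. The base case is that $\LS(\cC,\cU) \geq 0$ iff $\cC \neq \emptyset$. The ``only if'' direction is immediate: if $\cT$ is a depth-$d$ mistake tree with root label $v$, then the left subtree of $\cT$ is a depth-$(d-1)$ mistake tree whose leaf concepts must satisfy $v \notin C$ (so they lie in $\cC_0^v$), and symmetrically on the right. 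For the ``if'' direction, given depth-$(d-1)$ mistake trees $\cT_0$ for $\cC_0^v,\cU$ and $\cT_1$ for $\cC_1^v,\cU$, we join them as children of a new root labeled $v$; every leaf then inherits a consistent concept from $\cC_0^v \cup \cC_1^v \subseteq \cC$.

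Based on this, the algorithm is a straightforward recursion: to decide whether $\LS(\cC,\cU) \geq d$, we return \emph{true} if $d = 0$ and $\cC \neq \emptyset$; otherwise we loop over each $v \in \cU$, partition $\cC$ into $\cC_0^v$ and $\cC_1^v$ in $O(|\cC|)$ time, and recurse to depth $d-1$ on both parts, returning \emph{true} as soon as some $v$ succeeds. To compute $\LS(\cC,\cU)$ itself, binary search (or linear search) over $d \in \{0,1,\ldots,\lfloor \log|\cC| \rfloor\}$ using Fact~\ref{fact:dim-comp}.

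For the running time, let $T(d,\cC)$ denote the cost of the recursive check. Trivially $T(0,\cC) = O(|\cC|)$, and for $d \geq 1$,
\[
T(d,\cC) \;\leq\; \sum_{v \in \cU}\Bigl(O(|\cC|) + T(d-1,\cC_0^v) + T(d-1,\cC_1^v)\Bigr) \;\leq\; O(|\cU|\,|\cC|) + 2|\cU|\cdot T(d-1,\cC),
\]
since $|\cC_b^v| \leq |\cC|$ and $T$ is monotone in its second argument. Unrolling yields $T(d,\cC) = O\bigl(|\cC| \cdot (2|\cU|)^d\bigr)$ as claimed. No step here is a real obstacle; the only thing to be careful about is the precise recursive characterization of mistake trees (in particular, that the concept is required to agree with the root-to-leaf path but need not be specified at internal nodes), which is what makes the two halves independent subproblems and enables the clean recursion.
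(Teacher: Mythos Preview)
Your proposal is correct and follows essentially the same approach as the paper: both exploit the recursive characterization that a depth-$d$ mistake tree exists iff some root element $v$ admits depth-$(d-1)$ mistake trees on both $\cC_0^v$ and $\cC_1^v$, and both analyze the resulting branching recursion to get the $O(|\cC|\cdot(2|\cU|)^d)$ bound. The only cosmetic difference is that you phrase the decision as ``$\LS \geq d$'' while the paper (and the theorem statement) phrases it as ``$\LS \leq d$''; as written, your bound for deciding $\LS \leq d$ via $\neg(\LS \geq d{+}1)$ is off by a harmless factor of $2|\cU|$, which disappears if you rewrite the recursion directly in the ``$\leq$'' form with base case $|\cC|\leq 1$.
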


\begin{proof}
Our algorithm is based on a simple observation: if an element $x$ belongs to at least one concept and does not belong to at least one concept, the maximum depth of mistake trees rooted at $x$ is exactly 
$1+\min\left\{\LS(\cC[x \to 0], \cU), \LS(\cC[x \to 1], \cU)\right\}$. Recall from Section~\ref{sec:littlestone} that $\cC[x \to 0]$ and $\cC[x \to 1]$ denote the collection of concepts that exclude $x$ and the collection of concepts that include $x$ respectively.

This yields the following natural recursive algorithm. For each $x \in \cU$ such that $\cC[x \to 0], \cC[x \to 1] \ne \emptyset$, recursively run the algorithm on $(\cC[x \to 0], \cU, d - 1)$ and $(\cC[x \to 1], \cU, d - 1)$. If both executions return NO for some $x$, then output NO. Otherwise, output YES. When $d = 0$, there is no need for recursion as we can just check whether $|\cC| \leqs 1$.

Finally, we note that the running time can be easily proved by induction on $d$. 
\end{proof}
\end{document}